\def\draft{0}
\def\llncs{0}
\def\anon{0}
\def\noappendix{0}
  \def\\{}%
\title{Quantum Complexity for Discrete Logarithms\\ and Related Problems}
\author{%
Minki Hhan\inst{1} \and 
Takashi Yamakawa\inst{2}
Aaram Yun\inst{3}
}
\institute{
KIAS\\
\email{minkihhan@kias.re.kr}
\and
NTT Social Informatics Laboratories\\
\email{takashi.yamakawa@ntt.com}\\
\and
Ewha Womans University\\
\email{aaramyun@ewha.ac.kr}
}
\theoremstyle{plain}
\title{Quantum Complexity for Discrete Logarithms\\ and Related Problems}
\author{
Minki Hhan\thanks{\texttt{E-mail:minkihhan@kias.re.kr}}\\KIAS
\and Takashi Yamakawa\thanks{\texttt{E-mail:takashi.yamakawa.obf@gmail.com}} \\NTT Social Informatics Laboratories
\and Aaram Yun\thanks{\texttt{E-mail:aaramyun@ewha.ac.kr}}\\Ewha Womans University
}
\newcommand{\N}{\mathbb N}
\newcommand{\gZ}{\mathbb Z}
\newcommand{\cH}{\mathcal{H}}
\newcommand{\cG}{\mathcal{G}}
\newcommand{\Z}{\mathbb{Z}}
\newcommand{\poly}{\mathop{\operatorname{poly}}}
\newcommand{\bit}{\{0,1\}}
  \theoremstyle{plain}
  \newtheorem{theorem}{Theorem}[section]
  \newtheorem{lemma}[theorem]{Lemma}
  \newtheorem{proposition}[theorem]{Proposition}
  \theoremstyle{definition}
  \newtheorem{remark}{Remark}
\newtheorem{clm}{Claim}
\providecommand{\qedhere}{
\ifmmode
  \eqno \def\@badmath{$$}
    \let\eqno\relax \let\leqno\relax \let\veqno\relax
    \hbox{\qed}
\else
  \qed
\fi
}
\newcommand{\red}[1]{\textcolor{red}{#1}}
\newcommand{\QFT}{{\sf{QFT}}}
\newcommand{\adv}{{\sf{Adv}}}
\newcommand{\han}[1]{\textcolor{blue}{ $\langle \! \langle$ Minki: ``#1" $\rangle \! \rangle$}}
\newcommand{\takashi}[1]{\textcolor{magenta}{ $\langle \! \langle$ Yamakawa: ``#1" $\rangle \! \rangle$}}
\newcommand{\yun}[1]{\textcolor{red}{ $\langle \! \langle$ Yun: ``#1" $\rangle \! \rangle$}}
\newcommand{\han}[1]{}
\newcommand{\takashi}[1]{}
\newcommand{\yun}[1]{}
\begin{document}
\maketitle
\pagestyle{plain}
\ifnum\llncs=0
\begin{abstract}

This paper studies the quantum computational complexity of the discrete logarithm (DL) and related group-theoretic problems in the context of ``generic algorithms''---that is, algorithms that 
do not exploit any properties of the group encoding. 

We establish the quantum generic group model and hybrid classical-quantum generic group model as quantum and hybrid analogs of their classical counterpart. This model counts the number of group operations of the underlying cyclic group $\cG$ as a complexity measure. 
Shor's algorithm for the discrete logarithm problem and related algorithms can be described in this model and make $O(\log |\cG|)$ group operations in their basic form.
We show the quantum complexity lower bounds and (almost) matching algorithms of the discrete logarithm and related problems in these models.
\begin{itemize}
    \item We prove that any quantum DL algorithm in the quantum generic group model must make $\Omega(\log |\mathcal G|)$ depth of group operation queries. This shows that Shor's algorithm that makes $O(\log |\mathcal G|)$ group operations is asymptotically optimal among the generic quantum algorithms, even considering parallel algorithms.  
    \item We observe that some (known) variations of Shor's algorithm can take advantage of classical computations to reduce the number and depth of quantum group operations.
    We show that these variants are optimal among generic hybrid algorithms up to constant multiplicative factors: Any generic hybrid quantum-classical DL algorithm 
    with a total number of (classical or quantum) group operations $Q$ must make $\Omega(\log |\mathcal G|/\log Q)$ quantum group operations of depth $\Omega(\log\log |\mathcal G| - \log\log Q)$.
    \item When the quantum memory can only store $t$ group elements and 
    use quantum random access classical memory (QRACM) of $r$ group elements, 
    any generic hybrid quantum-classical algorithm must make either $\Omega(\sqrt{|\mathcal G|})$ 
    group operation queries in total or $\Omega(\log |\mathcal G|/\log (tr))$ quantum group operation queries. In particular, classical queries cannot reduce the number of quantum queries beyond $\Omega(\log |\mathcal G|/\log (tr))$. 
\end{itemize}

As a side contribution, we show a multiple discrete logarithm problem admits a better algorithm than solving each instance one by one, refuting a strong form of the quantum annoying property suggested in the context of password-authenticated key exchange protocol.
\end{abstract}
\else
\begin{abstract}
This paper studies the quantum computational complexity of the discrete logarithm (DL) and related group-theoretic problems in the context of ``generic algorithms''---that is, algorithms that 
do not exploit any properties of the group encoding. 

We establish the quantum generic group model and hybrid classical-quantum generic group model as quantum and hybrid analogs of their classical counterpart. This model counts the number of group operations of the underlying cyclic group $\cG$ as a complexity measure. 
Shor's algorithm for the discrete logarithm problem and related algorithms can be described in this model and make $O(\log |\cG|)$ group operations in their basic form.
We show the quantum complexity lower bounds and (almost) matching algorithms of the discrete logarithm and related problems in these models.
\begin{itemize}
    \item We prove that any quantum DL algorithm in the quantum generic group model must make $\Omega(\log |\mathcal G|)$ depth of group operation queries. This shows that Shor's algorithm that makes $O(\log |\mathcal G|)$ group operations is asymptotically optimal among the generic quantum algorithms, even considering parallel algorithms.  
    \item We observe that some (known) variations of Shor's algorithm can take advantage of classical computations to reduce the number and depth of quantum group operations.
    We show that these variants are optimal among generic hybrid algorithms up to constant multiplicative factors: Any generic hybrid quantum-classical DL algorithm 
    with a total number of (classical or quantum) group operations $Q$ must make $\Omega(\log |\mathcal G|/\log Q)$ quantum group operations of depth $\Omega(\log\log |\mathcal G| - \log\log Q)$.
    \item When the quantum memory can only store $t$ group elements and 
    use quantum random access classical memory (QRACM) of $r$ group elements, 
    any generic hybrid quantum-classical algorithm must make either $\Omega(\sqrt{|\mathcal G|})$ 
    group operation queries in total or $\Omega(\log |\mathcal G|/\log (tr))$ quantum group operation queries. In particular, classical queries cannot reduce the number of quantum queries beyond $\Omega(\log |\mathcal G|/\log (tr))$. 
\end{itemize}

As a side contribution, we show a multiple discrete logarithm problem admits a better algorithm than solving each instance one by one, refuting a strong form of the quantum annoying property suggested in the context of password-authenticated key exchange protocol.
\end{abstract}
\fi
\newpage
\tableofcontents
\newpage
\section{Introduction}
\ifnum\llncs=1
\setcounter{footnote}{0} 
\fi
The discrete logarithm (DL) problem and related problems have long been fundamental cryptographic primitives in the pre-quantum world~\cite{DH76,Elgamal85}. However, the emergence of quantum computing has drastically altered the landscape of cryptography in the post-quantum world. Shor's algorithm~\cite{Shor94} has demonstrated that the DL problem (and integer factoring)  can be solved in quantum polynomial time, rendering many cryptographic protocols that rely on the DL problem insecure against full-fledged quantum computers. 

\han{Below paragraphs are re-written; contents are almost the same but orders are changed and mentioned Regev here (without detailed explanation).}
Since then, the quantum algorithms solving the DL problem and relevant algorithms have shown significant progress in various directions~\cite{ME98,PG14}. First, hybrid quantum-classical algorithms are suggested, taking advantage of the potentially massive power of classical computation to leverage smaller quantum computers. This direction reduces the required number of sequential quantum group operations by half~\cite{Kaliski17,EH17,Ekera21}. 

The use of parallelism was shown to reduce the depth of quantum DL algorithms significantly~\cite{CW00,HS05}, at the cost of large quantum memory.
The optimization for the arithmetic (or group) operations~\cite{PG14,RNSL17,GE21,HJNRS20} and the other parts~\cite{Kitaev96,ME98} also have been studied extensively. 
The base algorithm of all the above papers is essentially the near variants of Shor's original algorithm, and the complexity of the quantum DL algorithms is still dominated by $O(\log |\cG|)$ group operations for the underlying group $\cG$. 

Very recently, Regev and follow-up research~\cite{Regev23,RV23,EG23} discovered new quantum factoring/DL algorithms that achieve better asymptotic complexity. 
Whether Regev-type algorithms show better practical performance than Shor's is currently unclear and an important open question.\footnote{Later, we will discuss Regev's and related algorithms in more detail.  (Spoiler: both Shor's and Regev's algorithms can be described as generic algorithms, according to our definition. Thus, our group operation lower bound applies, but Regev's better asymptotic performance is \emph{not} about group operation complexity.)}

\yun{I think eventual discussion on Regev's is well-written.  The only possible remaining issue I'm worried is that, while we have very good discussion later, at this point the reader could be confused.  This confusion would be eventually resolved, but the reader has to wait a few more pages.  I'm not sure how to handle this issue gracefully.  Perhaps add a little spoiler around here?}\yun{For example, something like, `Later, we will discuss Regev's and related algorithms in more detail.  (Spoiler: both Shor's and Regev's algorithms are generic, according to our definition, but Regev's better asymptotic performance is \emph{not} about group operation complexity.)'}


The lower bounds of the quantum DL algorithms, however, have not been established until now.
This state of affairs raises an important question in the opposite direction:
\begin{center}
\ifnum\llncs=1
    \emph{Is there any quantum lower bound for the DL problem?}
\else
    \emph{Is there any quantum lower bound for the DL problem?}
\fi
\end{center}
The only known relevant result, by Cleve and Watrous~\cite{CW00}, showed a lower bound on the depth for the quantum Fourier transform (QFT), a crucial step of Shor's algorithm. However, there might exist a completely different quantum algorithm that does not rely on the quantum Fourier transform.\footnote{Technically, the phase estimation-based DL algorithm~\cite{Kitaev96} can be done without the QFT (See~\cite{KSV02}). The recent Regev-style algorithms use a different size QFT than Shor's, making the direct implication obscure.} 
To the best of our knowledge, there is no known lower bound, in terms of either time complexity or depth, for the quantum complexity of the DL problem.\footnote{The query complexity lower bound $\Omega(n)$ for the abelian hidden subgroup problem over $(\Z/p\Z)^n$ is proven in~\cite{KNP07}, but it is (almost) independent from $p$ and does not give a meaningful bound for the DL problem.\han{I added this footnote; or we can mention it in related works.}}

\subsection{This Work}
In this paper, we study the hardness of the discrete logarithm problem and related problems by considering a natural class of quantum algorithms referred to as generic algorithms.
A generic quantum algorithm is an algorithm that does not take advantage of the special properties of the encodings of group elements. 
Instead, these algorithms only use group operations only in a black-box manner, potentially in superposition.
Almost all quantum DL algorithms can be described in this model, including Shor's and Regev's (Again, we refer~\Cref{sec:discussion} for a more detailed discussion on the Regev's algorithm).

We formally establish the quantum generic group model (QGGM) by restricting that access to group elements is provided through the group oracle. The QGGM resembles the classical generic group model (GGM) \ifnum\llncs=0~\cite{Shoup97,Mau05}\fi proposed for arguing the security of group-theoretic cryptographic problems in classical settings\ifnum\llncs=1~\cite{Shoup97,Mau05}\fi. 
As in the classical GGM, the main complexity measure in the QGGM is the number of group operation queries, which we refer to as the group operation complexity. \han{I try to avoid ``query complexity'' and use ``group operation complexity'' for clarity}
In addition, we are also concerned with the \emph{depth} of group operation queries and the quantum memory size to study the power of near-term quantum computers for the DL problem.

\paragraph{Lower bound in the fully quantum setting.} 
Our first result states that no generic quantum algorithm in the QGGM can solve the DL problem much faster than Shor's original algorithm.
Precisely, we show the following theorem.
\begin{theorem}\label{thm: intro_DL}
    For a prime-order cyclic group $\cG$,
    any generic quantum algorithm solving the discrete logarithm problem over $\cG$ must make $\Omega(\log {|\cG|})$ group operation queries.
\end{theorem}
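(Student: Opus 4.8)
The plan is to reduce the claimed quantum lower bound to a purely syntactic fact about how fast integer coefficients can grow under group operations, combined with a symbolic, collision-free indistinguishability argument. Model the prime-order group as $\cG \cong \Z/p\Z$ with $p = |\cG|$, and let the encoding $\sigma \colon \Z/p\Z \to \{0,1\}^m$ (with $m \gg \log p$) be a uniformly random injection. A generic algorithm receives $\sigma(1)$ and $\sigma(x)$ (the labels of the generator and of the challenge $h=g^x$) and touches group elements only through the oracle sending the labels of $a,b$ to the label of $a \pm b$, possibly in superposition. I would track, for every label that can appear in the algorithm's registers, the coefficient vector $(\alpha,\beta) \in \Z^2$ such that the label encodes the group element $\alpha + \beta x \bmod p$; initially the only vectors present are $(1,0)$ and $(0,1)$. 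First I would establish the \emph{magnitude bound}: after $k$ group-operation queries, every coefficient vector $(\alpha,\beta)$ occurring anywhere in the support of the state satisfies $|\alpha|,|\beta| \le 2^{k}$. This is immediate by induction, since each query forms $(\alpha,\beta) \pm (\alpha',\beta')$ from two already-present vectors and at most doubles the maximal magnitude. Crucially the bound is syntactic: it holds for \emph{every} $x$ and \emph{every} branch of the superposition, so the set $C_k := \{(\alpha,\beta) : |\alpha|,|\beta| \le 2^k\}$ of possible vectors, with $|C_k| \le 2^{2k+4}$, does not depend on $x$.

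Next I would analyze collisions. Two distinct vectors $(\alpha,\beta),(\alpha',\beta') \in C_k$ encode the same group element exactly when $(\alpha-\alpha') + (\beta-\beta')x \equiv 0 \pmod p$. Writing $(c,d) = (\alpha-\alpha',\beta-\beta') \neq (0,0)$ with $|c|,|d| \le 2^{k+1}$: if $d = 0$ then $c \equiv 0 \pmod p$, which forces $c = 0$ once $k < \log_2 p - 1$ (as $|c| < p$), so no collision can occur between distinct vectors sharing the same $\beta$; if $d \neq 0$ the congruence pins down the unique value $x \equiv -c\,d^{-1} \pmod p$. Hence the set $B$ of ``bad'' challenges $x$ for which \emph{some} pair of distinct vectors in $C_k$ collides mod $p$ satisfies $|B| \le |C_k|^2 \le 2^{4k+8}$, a bound that is independent of the algorithm because $C_k$ is.

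Then I would run the symbolic indistinguishability step. I would compare the real oracle with a \emph{symbolic} one that assigns a fresh uniform label to each coefficient vector and merges labels only for vectors that coincide as elements of $\Z^2$, so that distinct vectors never collide by construction. For any challenge $x \notin B$, no two distinct vectors of $C_k$ collide mod $p$, so by the magnitude bound the real and symbolic executions are \emph{identical}, and the symbolic execution, hence the algorithm's entire final state and output distribution, is independent of $x$. Averaging the success probability over uniform $x$ then gives $\Pr[\text{output}=x] \le |B|/p + 1/p = O(2^{4k}/p)$, where the extra $1/p$ accounts for the fixed output distribution on the good set. Consequently any constant success probability forces $2^{4k+8} = \Omega(p)$, i.e.\ $k = \Omega(\log p) = \Omega(\log|\cG|)$, which is exactly the statement.

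The main obstacle will be making the symbolic comparison rigorous in the presence of superposition queries: I must argue that, conditioned on $x \notin B$, the global quantum state is genuinely $x$-independent, not merely that each individual classical branch is. I expect to handle this by fixing the randomness as a random injection and exhibiting, for every good $x$, a label-preserving isomorphism between the reachable portions of the oracle for different good challenges; equivalently, I would adopt a compressed/symbolic oracle that records coefficient vectors and observe that its defining unitary is literally the same map for all $x \notin B$, so the induced evolution cannot depend on $x$. A secondary and routine point is bounding the negligible amplitude the algorithm can place on labels it never produced (guessing a fresh valid encoding out of $\{0,1\}^m$), which is controlled by $m \gg \log p$ and folded into the failure term.
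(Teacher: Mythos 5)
Your proof is set in a different model from the paper's. The paper's QGGM is Maurer-style: the oracle holds the table of group elements and the algorithm refers to them only by index, so it never sees labels at all. You work in a Shoup-style model with visible random labels $\sigma\colon \Z/p\Z \to \{0,1\}^m$. A Shoup-style lower bound would indeed imply the paper's theorem (any Maurer-style generic algorithm can be run in the Shoup-style model with the same number of queries), so your route, if completed, would give a strictly stronger result. For contrast, the paper proves the statement by a reduction to the \emph{classical} GGM: any depth-$d$ quantum generic algorithm in Maurer's model is perfectly simulated by an unbounded classical generic algorithm making at most $2^{m(d+1)}$ classical group-operation queries (pre-compute every element $z_1y_1+\dots+z_my_m$ with $|z_i|\le 2^{d-1}$, and lazily assign labels consistent with equalities), after which the known $O(Q^2/|\cG|)$ classical bound finishes the argument. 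Your bad-set counting and the final averaging $\Pr[\mathrm{success}]\le (|B|+1)/p$ are arithmetically fine.

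The genuine gap is the step you call ``secondary and routine'': superposition queries on labels the algorithm never derived. In the quantum Shoup-style model your foundational magnitude bound is false as stated. The induction ``each query forms $(\alpha,\beta)\pm(\alpha',\beta')$ from two already-present vectors'' presumes that every string placed in the query register is a label previously given to or returned by the oracle; a quantum adversary is not so constrained. After a single query on the uniform superposition over all pairs of $m$-bit strings, the state has nonzero amplitude on $\sigma(a+b)$ for \emph{every} pair $(a,b)$, i.e., on labels of all $p$ group elements, whose coefficient vectors are unbounded; so ``every coefficient vector occurring anywhere in the support of the state'' is not confined to $C_k$, and you cannot condition on the adversary never hitting an un-derived valid label, since every query carries some amplitude on such labels. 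Classically this is a negligible-probability event killed by lazy sampling and a union bound; quantumly it needs a real argument --- e.g., an O2H/gentle-measurement hybrid showing the true oracle is close to one returning $\bot$ outside the derived set, or a compressed-oracle formalism for random injections --- with the loss tracked through all $k$ queries. This also undermines your coupling step: for two good challenges $x,x'$ the two oracle unitaries agree only on the derived portion and genuinely differ elsewhere, so they are not ``literally the same map.'' This difficulty is exactly why the paper adopts Maurer's model (where the table register provably contains only derived elements, making the coefficient induction airtight) and why it only says it \emph{believes} the Shoup-style extension holds for large label spaces without proving it. Until you supply that quantum argument, the symbolic indistinguishability step does not go through.
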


To establish this theorem, for any generic quantum DL algorithm $A$, we construct a generic \emph{classical} DL algorithm $B$ in the GGM that perfectly simulates the output of $A$.
Although the classical simulation may require unbounded time for precise simulation, its group operation complexity is only exponentially larger than that of $A$. 
Then, we observe that the known DL lower bound in the classical GGM of $\Omega(|\cG|^{1/2})$ group operations~\cite{Shoup97,Mau05} holds even if the algorithm is allowed to run in unbounded time.
We obtain the desired result by combining this fact with the above simulation with an exponential blowup. 
We note that the na{\"i}ve version of Shor's algorithm has the group operation complexity of $4\log |\cG|$ matching to the lower bound in~\Cref{thm: intro_DL} up to the constant multiplicative factor.
The same strategy establishes similar hardness of other group-theoretic problems, such as computational/decisional Diffie-Hellman problems.


We also show that applying the group operation in parallel does not help much through the same proof.
\begin{theorem}\label{thm: intro_DL_parallel}
    For a prime-order cyclic group $\cG$,
    any generic quantum algorithm solving the discrete logarithm problem over $\cG$ must make group operation queries of depth $\Omega(\log {|\cG|})$.
\end{theorem}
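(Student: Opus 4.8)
The plan is to reuse the quantum-to-classical simulation behind \Cref{thm: intro_DL}, but to refine the accounting so that the exponential blowup is controlled by the \emph{depth} of the quantum algorithm rather than by its total number of queries. Concretely, fix a generic quantum DL algorithm $A$ that queries the group oracle in parallel layers, with depth $d$. The crucial observation is that genericity forces every group element touched by $A$ to be a formal expression $g^{a}h^{b}$ in the generator $g$ and the challenge $h=g^{x}$, and that each parallel layer of group operations can only combine two previously available elements additively: $g^{a_1}h^{b_1}$ and $g^{a_2}h^{b_2}$ produce $g^{a_1\pm a_2}h^{b_1\pm b_2}$. Hence if $M_i$ bounds the magnitude of all coefficients available after $i$ layers, then $M_i\le 2M_{i-1}$, and since $M_0=1$ we get $M_d\le 2^{d}$. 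Note that this bound is insensitive to the \emph{width} of each layer: firing arbitrarily many group operations in parallel creates more formal elements but does not increase the coefficient scale. This is exactly why depth, not the total query count, is the right quantity here.

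Consequently, the set of distinct formal elements $g^{a}h^{b}$ that $A$ can ever reference is contained in $\{\,g^{a}h^{b} : |a|,|b|\le 2^{d}\,\}$, which has size at most $(2^{d+1}+1)^2 = 2^{O(d)}$. I would then build a classical GGM algorithm $B$ that, given the same oracle and challenge, explicitly constructs \emph{all} of these $2^{O(d)}$ group elements (each can be assembled from $g$ and $h$ with $O(d)$ group operations, so the total cost is $2^{O(d)}$ group operation queries) and records their oracle labels. From these labels $B$ reads off the full collision pattern, i.e.\ the partition of the formal coefficient pairs $(a,b)$ into classes with $a+bx\equiv a'+b'x \pmod{|\cG|}$. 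By the standard genericity/relabeling argument, $A$'s entire output distribution is a function only of this collision pattern, so $B$ can reproduce $A$'s output distribution exactly and therefore solves DL with the same success probability as $A$.

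To finish, I would invoke the classical GGM lower bound of $\Omega(|\cG|^{1/2})$ group operations for DL (which, as noted in the proof of \Cref{thm: intro_DL}, holds even against unbounded-time adversaries). Since $B$ succeeds whenever $A$ does while making only $2^{O(d)}$ group operation queries, we must have $2^{O(d)} = \Omega(|\cG|^{1/2})$. Taking logarithms yields $d = \Omega(\log|\cG|)$, which is the claimed depth lower bound.

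The main obstacle I expect is not the arithmetic but making the simulation rigorous in the parallel quantum setting. Two points need care. First, the parallel query model must be pinned down so that the coefficient-doubling bound $M_d\le 2^{d}$ genuinely holds — in particular, the only primitive should be the binary group operation (so exponentiating to a scalar $c$ costs $\Theta(\log c)$ depth via repeated squaring, preventing the algorithm from smuggling in large coefficients at unit depth). Second, I must argue that $A$'s output distribution depends \emph{only} on the collision pattern of the bounded-coefficient elements, even though $A$ operates on superpositions: this is the quantum analog of Shoup/Maurer's symmetry argument, showing that two oracle instantiations with the same collision pattern differ by a relabeling of encodings that commutes with $A$'s (label-agnostic) unitaries and thus leaves the measured output invariant. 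Establishing this invariance cleanly — presumably already done in the proof of \Cref{thm: intro_DL} and reused verbatim here — is the technical heart of the argument.
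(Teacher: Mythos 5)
Your proposal is correct and follows essentially the same route as the paper: the paper derives this depth bound from its simulation theorem (\Cref{thm: simulation_basic}), whose depth-$d$ case rests on the same observation you make---each parallel layer at most doubles the coefficients $(z_1,z_2)$ of elements $z_1 + z_2 x$, independently of layer width---so a classical GGM algorithm can build all $2^{O(d)}$ reachable elements and perfectly simulate $A$ via a lazily sampled relabeling (your ``collision pattern'' invariance), after which the classical $\Omega(\sqrt{|\cG|})$ bound for unbounded-time algorithms yields $d=\Omega(\log|\cG|)$. The two technical points you flag as needing care (pinning down the parallel-query model, and the quantum relabeling/basis-change argument) are exactly what the paper's model definition in \Cref{subsec: GGM} and its proof of \Cref{thm: simulation_basic} supply.
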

The above result may initially seem sufficient to refute our main question. However, this is not the case because this lower bound only considers purely quantum algorithms, which even exclude classical preprocessing of group elements.

\paragraph{Hybrid quantum-classical algorithms.} 
We observe that some simple (combination of) folklore hybrid quantum-classical algorithms can do better than the purely quantum bound, exploiting classical computation to perform most group operations.

These hybrid algorithms consist of two phases: 
They first compute multiple group elements using 
$O(\mathrm{polylog}~|\cG|)$ classical group operation queries and store them as precomputed data. Then, 
they implement Shor's algorithm using the stored group elements using $O(\log|\cG|/\log\log |\cG|)$ quantum group operations and $O(\log\log |\cG|)$ quantum group operation depth (A more fine-grained tradeoff can be found in~\Cref{thm: base-p_Shor,thm: depth_Shor}).

We complement these algorithms by proving the matching lower bounds. 
We formalize a model for generic hybrid quantum-classical algorithms that captures the above algorithms and a more general class of algorithms.
In the model, we allow an algorithm to make both classical and quantum group operation queries with the restriction that it is \emph{forced to measure} all the registers whenever its quantum group operation number or depth count exceeds a certain threshold. 
It is supposed to capture hybrids of classical and quantum computers with limited coherence time. Note that we do not consider noises in our model, whereas actual near-term quantum computers are likely to be noisy. Since our main results are the lower bounds, this just makes our results stronger. 

The following theorem states the limitations of the generic hybrid algorithms, showing that the above hybrid algorithms are indeed optimal with respect to both group operation number and depth.

\begin{theorem}\label{thm:intro_hybrid}
    For a prime-order cyclic group $\cG,$ any generic hybrid quantum-classical algorithm solving the discrete logarithm problem with $O(\poly\log|\cG|)$ total group operation queries (including both classical and quantum) must make $\Omega(\log |\cG|/\log\log|\cG|)$ quantum group operation queries of depth $\Omega(\log\log |\cG|)$ 
    between some two consecutive forced measurements.

    More generally, any generic hybrid DL algorithm with $Q$ total group operations must make $\Omega(\log |\cG|/\log Q)$ quantum group operations of depth $\Omega(\log\log |\cG|-\log\log Q)$ between some two consecutive forced measurements.
\end{theorem}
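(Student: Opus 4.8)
The plan is to reduce to the classical generic-group lower bound via an exact simulation, exactly as in \Cref{thm: intro_DL,thm: intro_DL_parallel}, but now carefully tracking the hybrid structure. Given a generic hybrid algorithm $A$ with $Q$ total group operations, the forced measurements partition its quantum operations into consecutive \emph{phases} $P_1,\dots,P_m$ with $m\le Q$. The key structural feature is that immediately before each phase every register holds a \emph{definite} label (the state has just collapsed), so the only group elements $A$ can feed into $P_j$ are the $\le Q$ elements produced so far by classical (pre)computation, the QRACM contents, and the measured registers; call this the phase's \emph{pool}. I would build a classical GGM algorithm $B$ (of unbounded running time) that simulates $A$ exactly: it computes the encodings of every group element that can possibly appear during the run, then evaluates $A$'s amplitudes and samples its forced measurements. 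Since each classical group element is produced by a single classical query, the group-operation cost of $B$ is, up to constants, the total number of \emph{distinct} group elements — equivalently, distinct linear forms $a+b\,w$ in the unknown discrete log $w$ — that arise across all phases.

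The technical heart is a per-phase counting lemma. Within a phase, the superposition on the control/exponent registers is created by Hadamards (which are free), and the value in any group-element register is therefore a \emph{controlled product} of pool elements. Hence the number of distinct such values is at most $Q$ raised to the number of group-operation gates in that register's past light-cone. Two elementary circuit facts bound this exponent: each query contributes $O(1)$ gates, so the light-cone of any output has size $O(q_j)$ where $q_j$ is the phase's query count; and a depth-$d_j$ circuit has a past light-cone of at most $2^{O(d_j)}$ gates. Consequently phase $P_j$ produces at most $Q^{O(q_j)}$ and at most $Q^{2^{O(d_j)}}$ distinct group elements.

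Combining the pieces, the unbounded-time classical DL lower bound forces $B$ to produce $\Omega(|\cG|^{1/2})$ distinct linear forms to solve DL with non-negligible probability, whereas the simulation produces at most $Q+\sum_j(\text{count in }P_j)$ of them. Since $m\le Q$, some single phase $P_j$ must account for $\Omega(|\cG|^{1/2}/Q)$ distinct group elements (the regime $Q\ge|\cG|^{1/2}$ makes the claimed bounds vacuous, so we may assume $|\cG|^{1/2}\gg Q$). Feeding this into the counting lemma, the inequality $Q^{O(q_j)}\ge\Omega(|\cG|^{1/2}/Q)$ yields $q_j\ge\Omega(\log|\cG|/\log Q)$, while $Q^{2^{O(d_j)}}\ge\Omega(|\cG|^{1/2}/Q)$ yields $d_j\ge\Omega(\log\log|\cG|-\log\log Q)$ — both for the \emph{same} phase, which is the second statement; the first statement is the special case $Q=\poly\log|\cG|$, for which $\log Q=\Theta(\log\log|\cG|)$.

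I expect the main obstacle to be the per-phase counting lemma in the presence of superposition and of the forced measurements. Classically it is immediate that a circuit over a size-$Q$ pool whose output light-cone has $I$ gates reaches at most $Q^{I}$ products; quantumly one must rule out that $A$ exploits superposition — e.g.\ by repeatedly multiplying two already highly branched registers — to enlarge the set of reachable linear forms faster than the light-cone budget permits, which is precisely why the ``definite pool after each measurement'' accounting is essential. Making this charging rigorous across measurement boundaries (so that elements revealed by one measurement are correctly attributed to later phases, and $B$ reproduces the \emph{exact} output distribution rather than just a single branch) is the delicate step; the fully quantum arguments of \Cref{thm: intro_DL,thm: intro_DL_parallel} should serve as the $m=1$, $Q=O(1)$ sanity check.
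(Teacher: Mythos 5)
Your proposal is correct and follows essentially the same route as the paper's proof: an exact simulation of the hybrid algorithm by an unbounded-time classical GGM algorithm, a per-segment count showing that between two consecutive forced measurements at most $Q^{O(q)}$ (resp.\ $Q^{2^{O(d)}}$) distinct group elements can appear in any branch with nonzero amplitude, and then the unbounded-time classical $\Omega(\sqrt{|\cG|})$ DL lower bound (\Cref{thm: GGM_DL}) to force one segment to be expensive. The only cosmetic difference is packaging: you pigeonhole over phases to extract a single heavy segment (obtaining the query and depth bounds simultaneously for that segment), whereas the paper fixes a uniform per-subroutine budget $q$ (resp.\ $d$), proves the simulation cost $Q+T\cdot 2^{q+1}(m+Q+1)^{2q}$ (resp.\ $Q+T\cdot 2^{2^d}(m+Q+1)^{2^d}$) in \Cref{thm: simulation_hybrid}, and reads off the same asymptotics — the branch-tree/lazy-labeling bookkeeping in the paper's \Cref{lem: simulation_subroutine_query,lem: simulation_subroutine_depth} is exactly the rigorous version of the "charging across measurement boundaries" step you flag as delicate.
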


\paragraph{Quantum memory-bounded algorithms.} 
Quantumly processable memory is an expensive resource, either quantum random-accessible memory that can store quantum states (QRAQM) or classical memory that stores classical data but can be accessed coherently (QRACM).\footnote{These two types of quantum-accessible memory were studied in~\cite{Kup13}. 
Formally, 
QRACM stores classical data $(x_i)_i$ and enables one to realize a unitary operation $\ket{i}\otimes \ket{0}\mapsto \ket{i}\otimes \ket{x_i}$.}
While the original Shor's algorithm only uses quantum memory that stores a single group element, the hybrid algorithms described above make use of relatively large quantum memory (\Cref{thm: depth_Shor}) or large QRACM (\Cref{thm: base-p_Shor}). This motivates the question of whether quantumly processable memory is necessary even for a mild speed-up of Shor's algorithm.

We prove that it is indeed necessary. We define a model for generic hybrid algorithms with memory constraints.
The following theorem asserts such a lower bound in the memory-bounded model.
\begin{theorem}\label{thm:intro_memory}
    For a prime-order cyclic $\cG,$ any generic hybrid algorithm solving the DL problem with 
    quantum memory that can store $t$ group elements   
    and
    no QRACM
    must make either $\Omega(\sqrt{|\cG|})$ classical or quantum group operation queries in total or $\Omega(\log |\cG|/\log t)$ quantum group operations between some two consecutive forced measurements.\footnote{This gives a depth lower bound of $\Omega(\log |\cG|/t \log t)$ as an immediate corollary as an algorithm can make at most $t$ queries in one parallel group operations in this setting.}

    More generally, any generic hybrid DL algorithm with 
    quantum memory that can store $t$ group elements   
    and
    QRACM that can store $r$ group elements
    must make either $\Omega(\sqrt{|\cG|})$ group operations in total or $\Omega(\log |\cG|/\log (tr))$ quantum group operation queries between some two consecutive forced measurements. 
\end{theorem}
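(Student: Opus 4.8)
The plan is to lift the simulation strategy behind \Cref{thm: intro_DL} to the memory-bounded hybrid model and then invoke the classical GGM lower bound, which holds even against unbounded-time algorithms. Fix a generic hybrid algorithm $A$ whose quantum memory holds $t$ group elements, whose QRACM holds $r$ group elements, and whose forced measurements cut the execution into coherent ``phases,'' each containing at most $q$ quantum group operations. The first step is a combinatorial lemma controlling label growth inside a single phase: starting from the group elements present at the phase's first instruction, each quantum group operation multiplies a target register's content by a group element that is either classically known or one of the $\le r$ coherently addressable QRACM elements, so it enlarges the set of distinct formal combinations $g^{a}h^{b}$ occurring across the superposition by a factor of at most $O(tr)$. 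Hence a phase with $q$ quantum operations produces at most $(tr)^{O(q)}$ distinct labels; with no QRACM this becomes $t^{O(q)}$.

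Next I would build a classical generic algorithm $B$ that simulates $A$ phase by phase, exactly as for \Cref{thm: intro_DL}: $B$ represents each group element by its formal combination and, for each phase, queries the group oracle to learn the actual labels of all combinations arising in that phase, thereby recovering the full collision pattern and simulating the phase's unitary evolution and terminal measurement perfectly (in unbounded time but with bounded query cost). The number of group operations $B$ spends on a phase is at most the number of distinct labels it must resolve there, namely $(tr)^{O(q)}$, plus one query per classical group operation of $A$. Since $B$ reproduces $A$'s output distribution exactly, if $A$ solves DL then so does $B$, and the classical lower bound forces $B$ to touch $\Omega(\sqrt{|\cG|})$ distinct group elements.

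The dichotomy then comes from asking how these $\Omega(\sqrt{|\cG|})$ colliding elements are produced. A collision that pins down $x$ is either (i) resolved \emph{inside} a single coherent phase, in which case that phase must already contain $\Omega(\sqrt{|\cG|})$ distinct labels, whence $(tr)^{O(q)}\ge\Omega(\sqrt{|\cG|})$ and $q=\Omega(\log|\cG|/\log(tr))$; or (ii) spread \emph{across} phases, in which case the colliding elements must have survived the intervening forced measurements in the bounded memory and been compared explicitly. Since only the $O(t)$ measured labels and the $r$ QRACM labels persist across a phase boundary, cross-phase collision finding genuinely costs a query per retained element and cannot exploit the transient $(tr)^{O(q)}$ superposition width; so case (ii) forces the total number of (classical or quantum) group operations of $A$ to be $\Omega(\sqrt{|\cG|})$. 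Setting $r=O(1)$ recovers the no-QRACM special case with bound $\Omega(\log|\cG|/\log t)$.

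The step I expect to be the main obstacle is making the split (i)/(ii) rigorous, i.e.\ showing that a phase whose label set has size $<\sqrt{|\cG|}$ yields a measurement distribution that is statistically independent of $x$ given the carried-over classical data. This amounts to arguing that, conditioned on no collision among the phase's combinations, the phase treats the group labels as formal indeterminates, so its output is a fixed function of the retained memory alone; a birthday bound over the random $x$ then controls the failure probability, and summing these negligible contributions over the at most $\Omega(\sqrt{|\cG|})$ phases shows that small phases cannot, even in aggregate, leak $x$ beyond what the memory budget $tr$ reveals. Charging the surviving cross-phase comparisons against that budget is what ultimately separates the two branches and closes the argument.
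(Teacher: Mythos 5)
Your first two steps are, in substance, the paper's own proof. The per-query branching bound and the phase-by-phase perfect classical simulation are exactly \Cref{lem: simulation_subroutine_memory} (assembled into \Cref{thm: simulation_memory}): each quantum group operation spawns at most $2t(t-1+r)+1$ new branches per existing branch (note this is $O\bigl(t(t+r)\bigr)$, not $O(tr)$, since \emph{both} operands may range over the $t$ quantum slots; harmless after taking logarithms), so a $q$-query subroutine is simulated with $2\bigl(2t(t-1+r)+1\bigr)^{q}$ group operations plus one per classical operation, and the classical GGM bound (\Cref{thm: GGM_DL}) applies to the unbounded-time simulator by \Cref{remark: quantum_alg_in_GGM}.

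The genuine gap is your third step. The case analysis on \emph{where} the successful collision is produced is both unnecessary and, as stated, unsound. Unsound: success probability accumulates over the $T$ phases, so an algorithm whose every phase holds only $L\ll\sqrt{|\cG|}$ labels can still succeed with constant probability once $TL^{2}=\Omega(|\cG|)$ (e.g.\ $L=|\cG|^{0.1}$, $T=|\cG|^{0.8}$, with all collisions occurring inside phases); hence the implication in your case (i), ``collision inside a phase $\Rightarrow$ that phase contains $\Omega(\sqrt{|\cG|})$ labels,'' is false, and in that scenario your argument would wrongly conclude $q=\Omega(\log|\cG|/\log(tr))$. Repairing this forces you to redo the Shoup--Maurer birthday argument with a per-phase partition of collision events and a leakage/independence lemma---precisely the ``obstacle'' you flag. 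Unnecessary: the dichotomy is pure arithmetic on the simulator's query count. The simulator makes at most $C+2T\bigl(2t(t-1+r)+1\bigr)^{q}$ classical group operations, so by \Cref{thm: GGM_DL} the advantage is
\[
O\!\left(\frac{\bigl(C+2T\cdot(2t(t-1+r)+1)^{q}\bigr)^{2}}{|\cG|}\right),
\]
which is \Cref{thm: memory DL and more}. Since $C$ and $T$ are both bounded by $A$'s total number of group operations (a subroutine with no quantum query can be merged into the surrounding classical computation), constant advantage with total operation count below $\varepsilon\sqrt{|\cG|}$ forces $\bigl(2t(t-1+r)+1\bigr)^{q}=\Omega\bigl(\sqrt{|\cG|}/T\bigr)$, i.e.\ $q=\Omega(\log|\cG|/\log(tr))$ whenever the total is $|\cG|^{1/2-\Omega(1)}$. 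No statement about which phase produces the collision, and no statistical-independence lemma, is needed: the perfect simulation transfers all probabilistic reasoning into the classical GGM, where it is done once and for all.
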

In particular, the above theorem implies that classical queries cannot reduce the number of quantum queries beyond $\Omega(\log |\cG|/\log t)$, or just $\Omega(\log |\cG|)$ when $t=O(1).$
We have algorithms that match the above lower bounds: Baby-step giant-step algorithm
makes $O(\sqrt{|\cG|})$ classical group operations, and the hybrid algorithm in \Cref{thm: base-p_Shor} with quantum memory that can store $t$ group elements and no QRACM makes $\Omega(\log |\cG|/\log t)$ quantum queries.

\paragraph{The multiple DL problem.} 
The multiple discrete logarithm problem asks to solve multiple instances of the DL problem with the same underlying group simultaneously. When $m$ DL instances are given, this problem is called $m$-MDL. This problem is particularly interesting in the context of the standard curves in elliptic curve cryptography, where only a few curves are recommended as standard.
Classically, Kuhn and Struik~\cite{KS01} suggested an $O(\sqrt{m |\cG|})$ generic algorithm for the $m$-MDL problem, and Yun~\cite{Yun15} proved the matching lower bound.

In~\Cref{thm: MDL_alg}, we present a generic quantum algorithm for the multiple discrete logarithm problem using the results in vectorial addition chain~\cite{Pip80}. 
If $\log m /\log |\cG|=o(1)$ and $m=\Omega(\log |\cG|)$, it solves the $m$-MDL problem using $O(m\log |\cG|/\log(m) )$ group operations. This gives an amortized group operation complexity of $O(\log |\cG|/\log m)$ per DL instance.

Regarding~\Cref{thm: intro_DL}, the complexity of the $m$-MDL problem is lower than solving each instance individually. It is related to the quantum annoying property~\cite{Tho19,ES21} suggested in the context of password-authenticated key exchange (PAKE), which roughly means that quantum algorithms must solve a DLP for each password guess of PAKE. Our algorithm shows that the strongest form of quantum annoying cannot hold, regardless of the PAKE construction.

We can derive the lower bound of the $m$-MDL problem similarly to~\Cref{thm: intro_DL} and using the classical lower bound given in~\cite{Yun15}. However, this would only give a lower bound of  $\Omega(\log m + \log |\cG|)$ group operations.
So there is an apparent gap between the upper and lower bounds from our approach. 

\subsection{Discussion}\label{sec:discussion}
\han{Because of Regev-style algorithms, this paragraph is more important than before. I moved it here and modified it with a focus on Regev, and expanded the impracticality of HS05. Other parts are almost unchanged.}
\paragraph{QGGM vs. non-generic improvements.}
We discuss two non-generic complexity improvements of the DL algorithms. 
\begin{itemize}
    \item Following the recent better quantum factoring algorithm~\cite{Regev23,RV23}, Eker{\aa} and G{\"a}rtner~\cite{EG23} gave a better quantum DL algorithm. 
    Given the QGGM lower bounds in this paper, one may wonder if their algorithms are non-generic because otherwise, it seems like a contradiction.
    We remark that they \emph{can be described} in the QGGM, and they indeed obey our group operation complexity lower bounds if we count the number of group operations.
    
    Their primary complexity measure is \emph{circuit complexity}\footnote{More precisely, the asymptotic circuit complexity. The practical implications of these algorithms are under debate, and we refer~\cite{EG24} for a recent discussion.}, not group operation complexity. For this purpose, they use the fact that group operations between two \emph{small} elements are much faster than ordinary group operations, while we assume that they both incur the same cost.\yun{'while we assume that they both incur the same cost'?  Incur is transitive.} On the other hand, using small group elements is only for faster group operations, and the correctness seems irrelevant to the smallness of base group elements.\han{I roughly checked this, and it's probably right. but I'm not 100\% confident about this. ``seems irrelevant'' is okay?}
    \item H{\o}yer and Spalek \cite{HS05} showed that the DL problem on $\mathbb Z_N$ can be solved by a hybrid quantum-classical algorithm with a constant quantum depth if we allow for unbounded fan-out gates.\footnote{It does not contradict the depth lower bound of the quantum Fourier transform~\cite{CW00}, which assumes that each gate acts on a constant number of qubits.} 
    This overcomes our quantum depth lower bound in \Cref{thm:intro_hybrid}.\footnote{Using fan-out gates does not affect the query depth in the QGGM.} 
    
    This is possible because their algorithm is non-generic. For example, they use that multiplication of many elements of $\mathbb Z_N$ can be done in $\mathbf{TC}_0$, i.e., computed by a constant depth classical circuit with threshold gates~\cite{SBKH93}. 
    We also note that they mainly focus on theoretic depth optimization and are unlikely to be practical for two reasons. The unbounded fan-out gates are believed to be hard to implement, so it's barely considered in near-term quantum devices. Further, even equipped with unbounded fan-out, the circuit size is increased to reduce the depth, making the algorithm require huge quantum memory.
\end{itemize}

Besides the above algorithms, to our knowledge, all non-generic quantum algorithms for the DL problem are circuit optimization of (variants of) Shor's (generic) DL algorithm~\cite{PZ03,RS14,RNSL17,HJNRS20}.
These optimizations leverage specific encoding structures for practical purposes, and the asymptotic complexity remains unchanged.

This circumstance is reminiscent of the classical GGM, where some non-generic algorithms, such as index calculus, show better efficiency than generic algorithms by exploiting the integer encoding of group elements. 
Still, the classical GGM has been used as a meaningful model for arguing the hardness of group-theoretic problems, especially for the general elliptic curves.
Thus, we believe that lower bounds in the QGGM are at least as meaningful as those in classical GGM. 

\paragraph{Practical implication.} Optimizing and estimating quantum attacks have been studied extensively to make them practical. Shor's algorithm is of practical interest and may be used for estimating\yun{estimating?} the deadline of mandating migration to post-quantum cryptography.
In particular, a recent estimation by Gidney and Eker{\aa}~\cite{GE21} predicts that a single real-world DL instance can be solved within a half day using millions of noisy qubits by a highly optimized quantum algorithm, assuming several plausible physical assumptions. 
Our result indicates that there is a fundamental limitation for the generic approaches.
Putting it differently, there are only a few ways to improve the quantum DL algorithms: reducing the hidden factors in the QGGM as in~\cite{Regev23} or optimizing the quantum computer itself or the circuits, unless a non-generic quantum algorithm is discovered.

\paragraph{Tight group operation complexity.}
Our lower bounds show asymptotically tight group operation complexity, but the constant factor has room for improvement. In the formal theorems, the concrete quantum query bounds are $0.25\log |\cG|+O(1)$ (or depth) in the fully quantum case (\Cref{thm: QGGM_DL}) and $\frac 13 \log |\cG|+O(1)$ for the memory bounded hybrid case with $t=r=1$ (\Cref{thm: memory DL and more}).\footnote{Regarding the constant factor, we notice that the specifications for the model of group operations matter a lot. For example, if we only allow $g,h\mapsto g\cdot h$ and not $g\cdot h^{-1}$, the memory-bounded case with $t=r=1$ becomes $0.5 \log |\cG| + O(1)$.}\han{I modified the constant factor and added a footnote.}
Shor's DL algorithm and early variants~\cite{ME98,Kitaev96} make \if\llncs=0 quantum and classical\fi group operations $2\log |\cG|$ times each, having a gap in the constant factor.

The hybrid quantum-classical algorithms~\cite{Kaliski17,Ekera19,Ekera21,EH17} narrow down this gap.
These algorithms solve the DL problem by repeating a certain procedure with $\log|\cG|+O(1)$ group operations about $\log^{O(1)}|\cG|$ times,\footnote{Precisely, Kaliski's algorithm~\cite{Kaliski17} repeats a subroutine of $\log |\cG|+1$ group operations $O(\log^{1.5}|\cG|)$ times, and Eker{\aa}'s algorithm repeat subroutines with $(1+1/s)\log |\cG|$ group operations about $s$ times for some bounded $s$~\cite{Ekera21}.} with appropriate classical pre- and post-processing. The constant gap still exists besides the number of subroutine calls. Filling this gap is an interesting open problem.

Another interesting tradeoff point in our lower bound is the hybrid case (without memory bound) in~\Cref{thm: hybrid DL and more}. We may ask if a small number of quantum group operations could reduce the classical group operation queries. This theorem says that if a generic hybrid algorithm makes a single quantum group operation, then it should make $\Omega(|\cG|^{0.25})$ classical group operations. In other words, this does not rule out a hybrid DL algorithm with $|\cG|^{0.25}$ classical group operations and a single quantum group operation, which we do not know how to do. \Cref{thm: memory DL and more} rules out this case if there is a memory constraint.

The quantum complexity of the composite-order DL and MDL problems is also unknown. We do not know how to use the composite order either in constructing algorithms or proving lower bounds. We note that a recent work~\cite{Hhan24} resolves these problems by showing tight lower bounds, though in a slightly weaker hybrid model than ours.

\paragraph{Maurer-style vs. Shoup-style QGGM.}
In the classical setting, there are two formalizations of the GGM, one by Shoup~\cite{Shoup97} and the other by Maurer~\cite{Mau05}.
In Shoup's GGM, generic algorithms are given random labels of group elements and can perform group operations by sending labels to the oracle. On the other hand, in Maurer's GGM, all group elements are kept by the oracle, and generic algorithms can access them only through group operation or equality check queries.  These two GGMs are known to be equivalent for "single-stage games," which include the DL and related problems considered in this paper~\cite{Zhandry22a}.\footnote{Jager and Schwenk~\cite{JS08} originally claimed a general equivalence, but Maurer, Portmann, and Zhu~\cite{MPZ20} pointed out a counterexample. Zhandry~\cite{Zhandry22a} resolved this issue by reproving the equivalence in the case of single-stage games. Precisely speaking, he proved equivalence between Shoup's GGM and what is called the type-safe model, which is a variant of Maurer's GGM for single-stage games, but there is no difference between the type-safe model and Maurer's GGM when we consider group-theoretic problems such as the DL problem.   
}

Our QGGM is defined as a quantum analog of Maurer's GGM. It is possible to define it in Shoup's style. Indeed, such a model was already considered in \cite{Zhandry21} under the name of "post-quantum GGM." 
It is easy to show that any generic algorithm that works in our (Maurer-style) QGGM also works in Shoup-style QGGM. 
On the other hand, it seems difficult to show the other direction in the quantum setting, even if we focus on single-stage games. 
Thus, it would make our results stronger if we could prove similar lower bounds in Shoup-style QGGM. 
We believe that the lower bound in the fully quantum setting
(\Cref{thm: intro_DL}) can be extended to Shoup-style QGGM with a similar proof if the label space is much larger than the group order. 
On the other hand, we do not know how to generalize the lower bounds for hybrid algorithms (\Cref{thm:intro_hybrid,thm:intro_memory}) to Shoup-style QGGM.
For this reason, we focus on Maurer-style QGGM in this paper. 
We believe that lower bounds in Maurer-style QGGM are still meaningful, given that it captures Shor's algorithm and many variants.

\paragraph{Hidden subgroup problems and other potential directions.} 
This paper suggests the number of (quantum) group operations as a complexity measure for studying the DL and related problems. We discuss the potential applications to the hidden subgroup problem (HSP).

In the hidden subgroup problem (HSP) literature, the primary complexity measure is the query complexity to the oracle function $h:G\to X$ hiding a subgroup, i.e. $h(g_1)=h(g_2)$ iff $g_1 H= g_2 H$ for some hidden subgroup $H$ of $G$. The standard approach, or \emph{Fourier sampling}, to the HSP over abelian groups makes a single oracle query to $h$. This approach is also a subroutine to solve some HSPs over non-abelian groups, e.g., in \cite{HRT00,EH00,Kuperberg05}. Finally, it is shown that $O(\log^4 |\cG|)$ queries to $h$ suffice for the HSP over an arbitrary group~\cite{EHK04}. This makes proving lower bounds in terms of query complexity unlikely to yield superpolynomial lower bounds for the HSP.\footnote{For a certain restricted class of algorithms, there are some known limitations~\cite{MRS08,HMRRS10}.}

Interestingly, these HSP algorithms can be considered generic algorithms by extending our QGGM for general groups.
Also, contrary to the query complexity (to $h$), the group operation complexity of~\cite{EHK04} is exponentially large. One may wonder if the group operation complexity can provide an interesting lower bound of the HSP for some nonabelian groups.
The full answer is elusive with this paper's tools.
The \emph{dihedral group} case, a crucial case regarding its connection to the lattice-based~\cite{Regev04} and isogeny-based cryptography~\cite{Peikert20,CJS14}, has a negative answer to this question, as the algorithm of Ettinger and H{\o}yer~\cite{EH00} only makes a polynomial number of group operations.

We believe exploring other potential applications of the generic model presented in this paper is an interesting topic. For example, can we argue something about factoring by extending our model to the ring operations?\footnote{In~\cite{Hhan24}, some progress is made in this direction.}

\subsection{Related Works}
\paragraph{Post-quantum GGM.}   
Zhandry~\cite{Zhandry21} introduced a model called post-quantum GGM as a quantum analog of Shoup's GGM. He showed that the generic group oracle in the model is quantumly reset indifferentiable from ideal ciphers. This means that generic groups can be used to construct symmetric key encryption secure against quantum adversaries.  
On the other hand, the work does not discuss the hardness of the DL and its related problems in the post-quantum GGM.  
\yun{A generic group and an ideal cipher cannot be indifferentiable.  Perhaps... I'm not sure...  'indifferentiably equivalent'?  In the sense that using A you can construct something indifferentiable to B, and using B you can construct something indifferentiable to A?}
\han{I think in a rigorous sense you are right, especially Zhandry only showed that (in my understanding) the generic group is reset indifferentiable from the random injective functions (not ideal cipher!), but I want to follow the statement in Zhandry's abstract. I think nobody will complain (or carefully read) about this one.}

\paragraph{Generic group action model.}
While the DL problem on cyclic groups can be solved in quantum polynomial time by Shor's algorithm, the DL problem for \emph{group actions} is believed to be hard against quantum computers. Such group actions with the quantum hardness of the DL problem have been used as bases of some proposals of post-quantum cryptography~\cite{Couveignes06,RS06,CLMPR18,JQSY19}. 
Montgomery and Zhandry~\cite{MZ23} and Duman et al.~\cite{DHKKLR23} introduced generic models for group actions and studied the relations between the DL and related problems. We stress that their results are not proving the lower bounds.\footnote{Indeed, \cite{DHKKLR23} argued that we could not hope for the superpolynomial lower bound of the DL problem in group actions due to~\cite{EH00}, similar to our discussion on the dihedral HSP.}

\paragraph{Hybrid quantum-classical algorithms.}
The hybrid quantum-classical algorithms have recently begun to attract more attention in various aspects. In~\cite{CCL23,CM20}, the authors studied the relations between the hybrid algorithm with shallow quantum circuits and $\sf BQP$, refuting the conjecture of Josza~\cite{Jozsa06} and proving Aaronson's conjecture~\cite{Aar05}. 
The study of hybrid algorithms with shallow quantum circuits was continued in~\cite{ACCGSW23} relative to random oracles.
\cite{Rosmanis22} studied the hybrid algorithm in the context of Grover's algorithm, showing that classical queries cannot assist quantum computation. \cite{HLS22} further developed the tools for hybrid algorithms with random oracles and showed a similar result for collision finding. 
Our model of generic hybrid algorithms is inspired by~\cite{CCL23,ACCGSW23}, as well as the other papers.

\section{Technical Overview}\label{sec: overview}
\paragraph{Classical GGM.}
First, we recall the classical GGM as formalized by Maurer~\cite{Mau05}. Let $\cG$ be a cyclic group of order $N$ with a generator $g$ in which we consider group-theoretic problems such as the DL problem. A generic algorithm $A$ is formalized as an oracle-aided algorithm that has classical access to an oracle, which keeps a table $T$ storing elements of $\mathbb Z_N$. At the beginning, when $A$ takes $g^{y_1},...,g^{y_m}$ as input, the table $T$ is initialized as $(y_1,...,y_m,0,...,0)$.\footnote{The size of $T$ can be unbounded.}
The generic algorithm $A$ can make the following two types of queries:
\begin{itemize}
    \item \emph{Group operation queries.} 
    When $A$ submits  
    $(b,i,j,k)\in \bit \times \N^3$, the oracle finds $i$-th element $x_i$ and $j$-th element $x_j$ in the table $T$ and overwrites the $k$-th element of $T$ by $x_i + (-1)^b x_j$. 
    Nothing is returned to $A$.   
    \item \emph{Equality queries.} 
    When $A$ submits  
    $(i,j)\in \N^2$, 
    the oracle returns $1$ if $i$-th and $j$-th elements of $T$ are equal and otherwise returns $0$.  
\end{itemize}
We only count the number of group operation queries and allow equality queries for free, following the previous models~\cite{Mau05,Zhandry22a}.\footnote{See~\Cref{sec: Model} for more discussion.}

Finally, $A$ outputs a bit string or an index $i^*$ of $T$.
In the latter case, $g^{x_{i^*}}$ is treated as $A$'s output where $x_{i^*}$ is the $i^*$-th element in $T$.  

\paragraph{Quantum GGM.}
We define the Quantum GGM (QGGM) as a natural quantum analog of the classical GGM where 
$A$ is allowed to make quantum queries and the table $T$ is stored in a quantum register $\mathbf{T}$.  
However, since overwriting values of quantum registers is not unitary, we formalize group operation queries in a slightly different way. Specifically, a group operation query is (a superposition of) $(b,i,j)\in \bit \times \N^2$ and the oracle replaces $i$-th element of the table register $\mathbf{T}$ with $x_i + (-1)^b x_j$ (in superposition) where $x_i$ and $x_j$ are $i$-th and $j$-th elements of $\mathbf{T}$ before the query, respectively.  
In this way, we can ensure that it is a unitary operation. For clarity, we describe how the oracle works for group operation and equality queries where $\mathbf{Q}$ is the query register: 
\begin{itemize}
    \item \emph{Group operation queries.} 
   Apply the following unitary on  $\mathbf{Q}$ and  $\mathbf{T}$:
    \begin{equation*}
    \ket{b,i,j}_{\mathbf{Q}} \otimes \ket{...,x_i,...,x_j,...}_\mathbf{T} \mapsto
    \ket{b,i,j}_{\mathbf{Q}} \otimes \ket{...,x_i + (-1)^b x_j,...,x_j,...}_{\mathbf{T}}
    \end{equation*}
    if $i\neq j$ and otherwise it does nothing.
    \item \emph{Equality queries.} Apply the following operation on  $\mathbf{Q}$ and  $\mathbf{T}$:
    \begin{equation*}
        \ket{b,i,j}_{\mathbf{Q}} \otimes \ket{...,x_i,...,x_j,...}_{\mathbf{T}} \mapsto\ket{b\oplus t,i,j}_{\mathbf{Q}}\ket{...,x_i,...,x_j,...}_{\mathbf{T}}
    \end{equation*}
    where $t=1$ if $x_i=x_j$ and $t=0$ otherwise.
\end{itemize}
Initialization and finalization of a generic algorithm are exactly the same as in the classical GGM except that $\mathbf{T}$ is measured in the computational basis at the end.

\paragraph{Basic idea: the fully quantum setting.}
Our idea is to simulate a generic algorithm $A$ taking $m$ group elements as input in the QGGM by a generic algorithm $B$ in the \emph{classical GGM} with an exponential blowup in the number of group operations (or simply, queries). Since we have a group operation complexity lower bound of $\Omega(|\cG|^{1/2})$ for the DL problem in the classical GGM, such a simulation gives a lower bound of $\Omega(\log |\cG|)$ in the QGGM. In particular, the classical lower bound holds even for unbounded algorithms as long as the condition on the number of queries is satisfied.

The idea for the simulation is extremely simple. At the beginning, the table register $\mathbf{T}$ of the QGGM has $m$ non-zero elements $(y_1,...,y_m)$. 
Suppose that $A$ makes one (potentially parallel) quantum group operation query. After the query, $\mathbf{T}$ can only contain elements of the form 
$z_1y_1+z_2y_2+...+z_my_m$
where $|z_i|\le 1$ for all $i\in [m]$ 
in any branch with a non-zero amplitude. 
After $A$ makes the next (potentially parallel) quantum group operation query,  a similar argument shows that $\mathbf{T}$ can only contain elements of the form $z_1y_1+z_2y_2+...+z_my_m$ where $|z_i|\le 2$ for all $i\in [m]$. 
By repeating a similar argument recursively, one can see that after $d$-layer of parallel quantum group operation queries, $\mathbf{T}$ can only contain elements of the form 
$z_1y_1+z_2y_2+...+z_my_m$ where $|z_i|\le 2^{d-1}$ for all $i\in [m]$. In particular, the number of such elements is at most  $(2^{d}+1)^m\le 2^{m(d+1)}$. 
Thus, if the generic algorithm $B$ in the classical GGM creates all these elements in its table in advance using $2^{m(d+1)}$ classical queries, it can perfectly simulate $\mathbf{T}$ for $A$. 
Note that $B$ can run in unbounded time, though it only makes classical queries. In particular, it can simulate any quantum superposition of the group elements in the table.
This means that a generic algorithm of query depth $d$ in the QGGM can be perfectly simulated by a generic algorithm that makes $2^{m(d+1)}$ queries in the classical GGM. In particular, for the DL problem, we have $m=2$ since the input is $y_1=g$ and $y_2=g^x$ for random $x$.  Combined with the lower bound in the classical GGM, we obtain \Cref{thm: intro_DL}. 

\paragraph{Hybrid quantum-classical algorithms.}
\han{Should we change the order of depth and query here?}
We explain how to extend the above idea to the hybrid quantum-classical algorithms. First, we describe our formalization of hybrid quantum-classical algorithms in the QGGM. 
A hybrid quantum-classical algorithm is characterized as a consecutive execution of quantum \emph{subroutines} $U_1,...,U_T$ followed by a classical post-processing algorithm $A_{T+1}$. 
Each subroutine $U_i$ makes arbitrarily many classical queries and a bounded number or depth of quantum queries and measures all the registers including the table register $\mathbf{T}$ at the end. 
$A_{T+1}$ makes arbitrarily many classical queries and no quantum query. 

We first consider the depth-bounded case where each subroutine can have quantum query depth at most $d$. Let $Q$ be the total number of the hybrid algorithm's queries including both classical and quantum ones. The idea is similar to the basic case: The classical simulation algorithm in GGM creates all the group elements that may appear in the table register $\mathbf T$.

We first analyze each subroutine and then apply an inductive argument. 
Suppose that a subroutine $U_i$ is described as a sequence 
\ifnum\llncs=1
\[(C_{i,0},O_{i,1},...,C_{i,d-1},O_{i,d})\]
\else
$(C_{i,0},O_{i,1},...,C_{i,d-1},O_{i,d})$ 
\fi
where each $C_{i,j}$ only makes classical queries and each $O_{i,j}$ makes one parallel quantum query. 
Let $Q_i$ be the total number of queries made by $U_i$ and
let $c_{i,j}$ be the number of classical queries made by $C_{i,j}$.  
Let $m_i$ be the number of non-zero elements stored in the table register $\mathbf{T}$ when $U_i$ starts. 
For $j=0,1,...,d-1$, let $S_{i,j}\subseteq \mathbb Z_N$ be the set of elements that appear in the table register $\mathbf{T}$ in some branch with a non-zero amplitude right before the application of $O_{i,j+1}$ and let $S_{i,d} \subseteq \mathbb Z_N$ be the set right after the application of $O_{i,d}$ (before the forced measurement). 
In our model, we can show that $|S_{i,0}|\le m_i+1+c_{i,0}$,  
    $|S_{i,j}| \le 2|S_{i,j-1}|^2 + c_{i,j}$, 
    and $|S_{i,d}| \le 2|S_{i,d-1}|^2$.
    Thus, we have 
    \[
       |S_{i,d}|\le  2^{2^d} \left( m_i +1+ c_{i,0} + c_{i,1} + ...+ c_{i,{d-1}} \right)^{2^d} \le 2^{2^d} \left( m_i + Q_i+1\right)^{2^d}
    \]
Then, by a similar argument to the basic case, a generic algorithm in the classical GGM can simulate the subroutine $U_i$ by making at most $2^{2^d}\left(m_i + Q_i+1\right)^{2^d}$ group operation queries. 
Moreover, it is easy to see that we have $m_i\le m+ Q_1+...+Q_{i-1}$ where $m$ is the number of elements given as input. This is because one classical or quantum group operation only adds at most one new element in the table register. 
Then, if we let $c$ be the number of classical queries by $A_{T+1}$, 
the total number of classical queries needed to simulate the whole execution is at most 
\[
    2^{2^d}(m+Q_1+1)^{2^d}+ 2^{2^d}(m+Q_1+Q_2+1)^{2^d} + ... + c \le Q + T \cdot 2^{2^d}(m+Q+1)^{2^d}
\]
where we use $Q_1+Q_2+...+Q_T+c\le Q$. 
That is, any generic hybrid algorithm with the total number of queries $Q$ and bounded quantum query depth $d$ can be simulated by a generic algorithm in the classical GGM that makes at most  $Q + T \cdot 2^{2^d}(m+Q+1)^{2^d}$ classical queries. Combined with the classical GGM lower bound, we obtain the depth-bounded part of \Cref{thm:intro_hybrid}.

Next, we consider the query-bounded case where 
each subroutine can make at most $q$ quantum queries, and the total number of the hybrid algorithm's queries is $Q$, including both classical and quantum ones.
The idea is similar to the depth-bounded case, but we have to count the number of elements that appear in the table register $\mathbf{T}$ more carefully by making use of the fact that there are no parallel queries. 
We use similar notations to the depth-bounded case where the difference is that each $O_{i,j}$ makes only one non-parallel group operation query instead of a parallel one and the index $j$ ranges in $[q]$ instead of $[d]$. First, we remark that we can simulate $C_{i,0},...,C_{i,q-1}$ by making at most $Q_i$ classical queries in an obvious way. Thus, we ignore them in the following analysis and simply add $Q_i$ to the number of classical queries needed to simulate the subroutine at the end.  Recall that $m_i$ denotes the number of non-zero elements stored in $\mathbf{T}$ at the beginning of the subroutine $U_i$.
Before the subroutine $U_i$ applies $O_{i,1}$, $\mathbf{T}$ has a classical state that has at most $m_i+Q_i+1$ elements including $0$. 
After applying $O_{i,1}$,  $\mathbf{T}$ is a superposition of at most $2(m_i+Q_i+1)^2$ different tables. We observe that in each possible table with a non-zero amplitude, at most one new element is added. Thus, we can simulate $O_{i,1}$ by making at most $2(m_i+Q_i+1)^2$ classical queries. Next, for each fixed branch of $\mathbf{T}$, we can do the same analysis to see that we can simulate $O_{i,2}$ by making at most $2(m_i+Q_i+1)^2$ classical queries. Since there are $2(m_i+Q_i+1)^2$  branches, the total number of classical queries needed to simulate $O_{i,2}$ is at most $2(m_i+Q_i+1)^2\cdot 2(m_i+Q_i+1)^2=4(m_i+Q_i+1)^4$ and 
we have at most $4(m_i+Q_i+1)^4$ branches with a non-zero amplitude. 
By repeating a similar argument recursively, we can simulate $O_{i,j}$ by making $2^j(m_i+Q_i+1)^{2j}$ classical queries. Thus, the total number of classical queries needed to simulate the subroutine $U_i=(C_{i,0},U_{i,1},...,C_{i,q-1},U_{i,q})$ is at most 
\[
Q_i + 2(m_i+Q_i+1)^2 +... + 2^q (m_i+Q_i+1)^{2q} \le  Q_i+2^{q+1}(m_i+Q_i+1)^{2q}. 
\]

\ifnum\llncs=0
Noting that we have $m_i\le m+ Q_1+...+Q_{i-1}$, 
the total number of classical queries to simulate the whole hybrid algorithm is at most
 \[
    \left(Q_1+2^{q+1}(m+Q_1+1)^{2q}\right)+\left(Q_2+2^{q+1}(m+Q_1+Q_2+1)^{2q}\right)+ ...+c\le Q+T\cdot 2^{q+1} (m+Q+1)^{2q},
\]
\else
Noting that $m_i+Q_i\le (m+ Q_1+...+Q_{i-1})+Q_i \le m+Q$, 
the total number of classical queries to simulate the whole algorithm is at most
\[
\sum_{i=1}^T \left(Q_i+2^{q+1}(m+Q+1)^{2q}\right)
+c\le Q+T\cdot 2^{q+1} (m+Q+1)^{2q},
\]
\fi
where we use $Q_1+Q_2+...+Q_T+c\le Q$. 
That is, any generic hybrid algorithm with the total number of queries $Q$ and bounded quantum query number $q$ can be simulated by a generic algorithm in the classical GGM that makes at most  $Q + T \cdot 2^{q+1}(m+Q+1)^{2q}$ classical queries. Combined with the lower bound in the classical GGM, we obtain the query-bounded part of \Cref{thm:intro_hybrid}. 

\paragraph{Quantum-memory-bounded algorithms.}
To formalize a memory-bounded generic quantum algorithm, we divide the table register $\mathbf{T}$ into the quantum part  $\mathbf{T}_Q$ and classical part $\mathbf{T}_C$. We restrict  $\mathbf{T}_Q$ to store at most $t$ elements whereas  $\mathbf{T}_C$ can store arbitrarily many elements. 
A generic hybrid quantum-classical algorithm without QRACM cannot send a group operation query that involves a superposition over indices in $\mathbf{T}_C$. In this setting, the number of new elements that may be computed by one quantum query is at most $2t(t-1)$. Thus, by a similar analysis to the query-bounded case for quantum-memory-bounded generic algorithms in the previous paragraph, we can see that the number of classical queries to simulate each subroutine $U_i$ is at most 
\[
Q_i + (2t(t-1)+1) +... + (2t(t-1)+1)^{q} \le  Q_i+2\cdot(2t(t-1)+1)^{q}, 
\]
where $+1$ appears for a technical reason,\footnote{Slightly precisely, the element $0$ in the memory-bounded simulation could be removed from the memory during the simulation.}
\han{I corrected the above and added a footnote.}
\ifnum\llncs=0
and thus the total number of classical queries to simulate the whole hybrid algorithm is at most
 \[
    \left(Q_1+2\cdot(2t(t-1)+1)^{q}\right)+\left(Q_2+2\cdot(2t(t-1)+1)^{q}\right)+ ...+c\le Q+2T (2t(t-1)+1)^{q}.
\]
Combined with the classical lower bounds,
this implies the former part of \Cref{thm:intro_memory}.
\else
and the total number of classical queries for the simulation is at most
 \[
    \sum_{i=1}^{T} \left(Q_i+2\cdot(2t(t-1)+1)^{q}\right)+c\le Q+2T\cdot (2t(t-1)+1)^{q}.
\]
We have the former part of \Cref{thm:intro_memory} by the classical lower bounds.
\fi

For capturing QRACM that can store $r$ elements, we allow a generic hybrid algorithm to make a query involving a superposition over indices in $\mathbf{T}_C$ as long as the number of indices in $\mathbf{T}_C$ involved in the superposition is at most $r$. In this setting, the number of new elements that may be computed by one quantum query is at most $2t\cdot(t-1+r)$. 
\ifnum\llncs=0
Thus, by a similar analysis where we replace $2t(t-1)$ with $2t\cdot (t-1+r)$, 
the total number of classical queries to simulate the whole hybrid algorithm is at most
 \[
    \left(Q_1+2\cdot(2t(t+r-1)+1)^{q}\right)+\left(Q_2+2\cdot(2t(t+r-1)+1)^{q}\right)+ ...+c\le Q+2T\cdot (2t(t+r-1)+1)^{q}.
\]
Combined with the classical lower bound,
this implies the latter part of \Cref{thm:intro_memory}.
\else
By a similar analysis where we replace $2t(t-1)$ with $2t\cdot (t-1+r)$, 
the total number of classical queries to simulate the whole hybrid algorithm is at most
 \[
    \sum_{i=1}^T \left(Q_i+2\cdot(2t(t+r-1)+1)^{q}\right)+c\le Q+2T\cdot (2t(t+r-1)+1)^{q}.
\]
We have the latter part of \Cref{thm:intro_memory} by the classical lower bounds.
\fi
\section{The Adversarial Model}\label{sec: Model}
This section defines the model of \emph{generic} adversaries for the discrete logarithm and related problems. 
\Cref{subsec: GGM} defines the generic group model for the classical and quantum adversaries and \Cref{subsec: Problems_in_QGGM} summarizes the cryptographic problems such as the discrete logarithm and their lower bounds in the classical generic group model.

\subsection{The Generic Group Models}\label{subsec: GGM}
\paragraph{Classical generic group model.} 
We first review the \emph{classical} generic group model (GGM) as defined in~\cite{Mau05}.
A generic algorithm $A$ in the GGM interacts with an oracle that keeps a function $T:\N\rightarrow \gZ_N$ for some positive integers $N$.   
We often regard $T$ as a table consisting of group elements, and we often refer to $T(i)$ by the $i$-th element in the table $T$. 
At the beginning, $T$ is initialized as 
$T(i):=y_{i}$ for $i\in[m]$ and $T(i):=0$ for all $i> m$ where 
$(y_1,...,y_{m})\in \gZ_N^m$ is the input of $A$.  
$A$ is allowed to make the following queries:
\begin{itemize}
    \item \emph{Group operation queries.} 
    When $A$ submits  
    $(b,i,j,k)\in \bit \times \N^3$, the oracle overwrites $T(k):=T(i) + (-1)^b T(j)$.    
    Nothing is returned to $A$.   
    \item \emph{Equality queries.} 
    When $A$ submits  
    $(i,j)\in \N^2$, 
    the oracle returns $1$ if $T(i)=T(j)$, and $0$ otherwise. 
\end{itemize}
Finally, $A$ outputs a classical string or a special symbol $\mathsf{group}$ along with an integer $i$. In the latter case, $T(i)$ is treated as $A$'s output. 

When we discuss the complexity of $A$, we only count the number of group operation queries, denoted by \emph{the group operation complexity}, and allow it to make equality queries for free following~\cite{Mau05,Zhandry22a}. 
Assuming the zero-cost equality query makes our result stronger, and in fact describes the practice more appropriately. We refer to a more detailed discussion in~\cite[Remark 3.1]{Zhandry22a}.

\paragraph{Quantum generic group model.}
We extend the GGM to define the \emph{quantum} generic group model (QGGM). 
A generic algorithm $A$ in the QGGM
works over a working register $\mathbf{W}$, a query register $\mathbf{Q}$, and a table register $\mathbf{T}$.  
The registers  $\mathbf{W}$ and  $\mathbf{Q}$ are initialized to be $\ket{0...0}$. 
The register $\mathbf{T}$ stores $s$ group elements of $\gZ_N$ for some positive integers $s,N$. 
At the beginning, $\mathbf{T}$ is initialized as $\ket{y_1,...,y_{m},0,...,0}_{\mathbf{T}}$ 
where 
$(y_1,...,y_{m})\in \gZ_N^m$ is the input of $A$.  
$A$ can apply arbitrary quantum operations on $\mathbf{W}$ and $\mathbf{Q}$, but it can only act on $\mathbf{T}$ through the following types of queries: 
\begin{itemize}
    \item \emph{Group operation queries.} 
   Apply the following unitary $O_{\mathbf{Q},\mathbf{T}}$ on  $\mathbf{Q},\mathbf{T}$:
    \begin{equation}\label{eqn: group_operation_query}
    \ket{b,i,j}_{\mathbf{Q}} \otimes \ket{...,x_i,...,x_j,...}_\mathbf{T} \mapsto
    \ket{b,i,j}_{\mathbf{Q}} \otimes \ket{...,x_i + (-1)^b x_j,...,x_j,...}_{\mathbf{T}}
    \end{equation}
    if $i\neq j$ and otherwise it does nothing.
    \item \emph{Equality queries.} Apply the following operation on  $\mathbf{Q}$ and  $\mathbf{T}$:
    \[
        \ket{b,i,j}_{\mathbf{Q}} \otimes \ket{...,x_i,...,x_j,...}_{\mathbf{T}} \mapsto\ket{b\oplus t,i,j}_{\mathbf{Q}}\ket{...,x_i,...,x_j,...}_{\mathbf{T}}
    \]
    where $t=1$ if $x_i=x_j$ and $t=0$ otherwise.
\end{itemize}
Finally, $A$ outputs a classical string or a special symbol $\mathsf{group}$ along with an integer $i\in [s]$. In the latter case, 
$\mathbf{T}$ is measured and the $i$-th element in the measurement outcome 
is treated as the output of $A$. 

As in the classical GGM, the \emph{group operation complexity} of $A$ is defined by the number of group operation queries, and the equality queries are considered as free.

    

\paragraph{Parallel-query generic algorithms.}
We define \emph{parallel-query} generic algorithms in the QGGM.
A parallel-query generic algorithm $A$ in the QGGM works similarly to that in the QGGM except that it has $K$ query registers $\mathbf{Q}_1,...,\mathbf{Q}_K$ for some positive integer $K$ (referred to as the \emph{query width}) and is allowed to make parallel queries as follows:
\begin{itemize}
    \item \emph{Parallel group operation queries.} 
    Let $O_{\mathbf{Q}_k,\mathbf{T}}$ be a unitary that works as in \Cref{eqn: group_operation_query} where $\mathbf{Q}_k$ plays the role of $\mathbf{Q}$. 
    Then 
   apply the following operation on  $\mathbf{Q}_1,...,\mathbf{Q}_K$ and  $\mathbf{T}$:
    \[
        \bigotimes_{k\in [K]} \ket{b_k,i_k,j_k}_{\mathbf{Q}_k} \otimes \ket{x_1,...,x_s}_\mathbf{T} \mapsto
        \prod_{k\in [K]} O_{\mathbf{Q}_k,\mathbf{T}} \bigotimes_{k\in [K]} \ket{b_k,i_k,j_k}_{\mathbf{Q}_k} \otimes \ket{x_1,...,x_s}_\mathbf{T} 
    \]
    if $i_k\notin \{i_{k'}\}_{k'\in [K]\setminus \{k\}}\cup \{j_{k'}\}_{k'\in [K]}$ for all $k\in[K]$
    and otherwise it does nothing.\footnote{Intuitively, this condition means that multiple queries should not write to the same register and if one of the queries writes to some register, then that register should not be used as a control register for another query.
    Note that this does not prohibit parallel queries that share the same control register.}
    \item \emph{Parallel equality queries.} Apply the following operation on  $\mathbf{Q}$ and  $\mathbf{T}$:
    \[
         \bigotimes_{k\in [K]} \ket{b_k,i_k,j_k}_{\mathbf{Q}_k} \otimes \ket{x_1,...,x_s}_\mathbf{T} \mapsto\bigotimes_{k\in [K]} \ket{b_k\oplus t_k ,i_k,j_k}_{\mathbf{Q}_k} \otimes \ket{x_1,...,x_s}_\mathbf{T} 
    \]
    where $t_k=1$ if $x_{i_k}=x_{j_k}$ and $t_k=0$ otherwise.
\end{itemize}
We call the number of parallel group operation queries by the \emph{group operation depth}.

\ifnum\llncs=0
\begin{remark}
    We do not consider parallel queries that mix group operation and equality queries because such queries can be split into a parallel group operation query and a parallel equality query. 
\end{remark}
\begin{remark}
    Strictly speaking, here we are extending the QGGM to deal with parallel queries; when $K$ is fixed to $1$, this model becomes the QGGM above. On the other hand, if we always measure the query register whenever the algorithm makes a query, it is not hard to see that the QGGM is equivalent to the classical GGM when we allow $s$ to be arbitrarily large.
\end{remark}
\fi

\paragraph{Convention.}
We often analyze the problems defined for a \emph{multiplicative} cyclic group $\cG$ in the (Q)GGM with $N=|\cG|$. 
In this case, we occasionally identify $x\in \gZ_N$ and $g^x\in \cG$ where $g$ is a generator of $\cG$.  
In particular, the generator $1\in \gZ_N$ is identified with $g\in \cG$ and the zero element $0\in \gZ_N$ is identified with $1\in \cG$. 
We also often abuse notation to write $\cG$ to mean the generic group oracle in the (Q)GGM. 
For example, a (parallel-query) generic algorithm for the DL problem is written as $A^\cG(g, g^x)$.

\subsection{Group-theoretic Problems}\label{subsec: Problems_in_QGGM}
\ifnum\llncs=0
\subsubsection{Problems} 
\fi
For a finite set $S$, we write $x \gets S$ to denote that an element $x$ is uniformly sampled from $S$ at random.
\paragraph{The Discrete Logarithm (DL) Problem.} 
In the discrete logarithm problem, an element $x \gets \Z_N$ is uniformly chosen at random. The first and second elements of the table $T$ are initialized by $T(1) = 1$ and $T(2)=x$ so that the input to the algorithm is $(g,g^x)$.
The adversary is asked to output $x$.
The advantage of the DL adversary $A^\cG$ is defined as follows:
\[
\adv_{\sf DL}(A^\cG)=\Pr_x\left[ A^{\cG }(g,g^x)\rightarrow x
\right],
\]
where the input $g,g^x$ denotes the elements stored in the table $T$.

\paragraph{The Computational/Decisional Diffie-Hellman Problem.}
In the computational Diffie-Hellman problem (CDH), two elements $x,y \gets \Z_N$ are randomly chosen. An instance $(g,g^x,g^y)$ is given to the adversary as elements in $T$. The adversary is asked to compute $g^{xy}$ in the table.
The advantage of the CDH adversary $A^\cG$ is defined as follows:
\[
\adv_{\sf CDH}(A^\cG)=\Pr_{x,y}\left[ A^{\cG }(g,g^x,g^y)\rightarrow g^{xy}
\right].
\]

In the decisional Diffie-Hellman problem (DDH), three random elements $x,y,r\gets \Z_N$ are chosen, and either $(g,g^x,g^y,g^{r})$ or $(g,g^x,g^y,g^{xy})$ is given to the adversary, as elements in $T$, and the adversary is asked to decide which is the case by outputting a decision bit $b\in\{0,1\}$.

The advantage of the DDH adversary $A^\cG$ is defined as follows:
\[
\adv_{\sf DDH}(A^\cG)=\left | \Pr_{x,y}\left[ A^{\cG }(g,g^x,g^y,g^{xy})\rightarrow 1 
\right] - \Pr_{x,y,r}\left[ A^{\cG }(g,g^x,g^y,g^{r})\rightarrow 1 
\right] \right | .
\]

\paragraph{The Multiple Discrete Logarithm Problem.}
In the $m$-multiple discrete logarithm problem ($m$-MDL), $m$ elements $x_1,...,x_m \gets \Z_N$ are independently and uniformly chosen at random. The $m$-MDL problem instance $g^{x_1},...,g^{x_m}$ is given to the adversary, stored in the $2,...,(m+1)$-th elements of $T$ along with the first element $g$ of $T$. The adversary is asked to find all of $x_1,...,x_m$.
The advantage of the $m$-MDL adversary $A^\cG$ is defined as follows:
\[
\adv_{m-\sf MDL}(A^\cG)=\Pr_{x_1,...,x_m}\left[ A^{\cG }(g,g^{x_1},...,g^{x_m})\rightarrow (x_1,...,x_m)
\right].
\]

\ifnum\llncs=0
\subsubsection{Classical lower bounds}
The following theorems state the lower bounds of the above problems in GGM.
\else
The GGM lower bounds of the above problems are as follows.
\fi
\ifnum\llncs=0
\begin{theorem}[GGM lower bound of DL/CDH/DDH~\cite{Mau05,Shoup97}]\label{thm: GGM_DL}
Let $Q$ be a positive integer, and $\cG$ be a prime-order cyclic group.
For a generic algorithm $A^\cG$ in GGM with $Q$ queries and for any $* \in \{{\sf DL,CDH,DDH} \}$, it holds that
\[
\adv_{*} (A^\cG) = O \left(\frac {Q^2} {|\cG|}\right).
\]
In particular, any constant-advantage algorithm in the GGM solving the DL/CDH/DDH problem makes at least $\Omega(\sqrt {|\cG|})$ queries.
\end{theorem}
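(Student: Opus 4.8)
The plan is to use the standard symbolic (``formal variable'') technique for the generic group model, exploiting that $\cG$ has prime order so that $\Z_N$ is the field $\mathbb{F}_N$. First I would track every table entry as a known affine polynomial in the secret unknown(s): for the DL problem the only unknown is $x$, so after any sequence of queries each live entry $T(i)$ has the form $a_i + b_i x$ with coefficients $a_i,b_i\in\Z_N$ that the oracle can compute \emph{without} knowing $x$; for CDH/DDH the unknowns are $x,y$ (and the random $r$ for DDH), and each entry is affine, of the form $a_i+b_i x+c_i y\,(+\,d_i r)$. The crucial observation is that a group operation query returns nothing to $A$ and merely overwrites one slot by an affine combination of two others, so it preserves this representation and can be simulated purely symbolically; the only information $A$ ever extracts is the sequence of answers to its equality queries. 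Since each group operation query overwrites a single slot, after $Q$ group operation queries the number of distinct affine polynomials appearing in the table is $M=m+O(Q)=O(Q)$.

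Next I would introduce an \emph{ideal} experiment in which the oracle answers each equality query by testing equality of the two underlying \emph{formal polynomials} rather than of their evaluations at the secret. In this experiment $A$'s entire view is a deterministic function of its own queries, hence independent of the secret. Consequently $A$ can do no better than chance: for DL it outputs the correct $x$ with probability at most $1/N$; for CDH the requested output $g^{xy}$ corresponds to the degree-$2$ monomial $xy$, which no affine polynomial can equal, so $A$ succeeds with probability $0$; and for DDH the fourth input is treated as an independent formal symbol in both the real-tuple and random-tuple cases, making the two ideal experiments identical and the advantage $0$.

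Finally I would bound the statistical distance between the real and ideal experiments by the probability of the ``bad event'' that two entries carrying \emph{distinct} formal polynomials nevertheless evaluate equally at the randomly chosen secret (for CDH/DDH I would fold the final winning condition, e.g. ``$a_{i^*}+b_{i^*}x+c_{i^*}y = xy$'', into this event as well). The difference of two distinct such polynomials is nonzero of degree at most $1$ (degree at most $2$ once the target $xy$ is included), so by the Schwartz--Zippel lemma over $\mathbb{F}_N$ it vanishes on at most a $2/N$ fraction of secrets; a union bound over the at most $\binom{M}{2}=O(Q^2)$ relevant pairs gives $\Pr[\mathrm{bad}]=O(Q^2/N)$. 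Conditioned on the complement, the real and ideal oracles return identical equality answers and the two games stay perfectly coupled, so $\adv_{*}(A^\cG)\le \Pr[\mathrm{bad}]+\adv^{\mathrm{ideal}}_{*}(A^\cG)=O(Q^2/N)$; rearranging shows any constant-advantage algorithm needs $\Omega(\sqrt{|\cG|})$ queries. The main obstacle is exactly this coupling bookkeeping: verifying that as long as no two formally-distinct entries have yet collided, the real oracle's equality answers agree with the ideal oracle's and the games remain in lockstep, so that the \emph{only} route out of the information-free ideal view is the bad event bounded above.
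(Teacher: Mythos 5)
Your proposal is correct, and it is essentially the canonical proof of this result: the paper itself does not reprove \Cref{thm: GGM_DL} but cites it from Shoup and Maurer, whose arguments are exactly your symbolic-simulation technique — track table entries as affine polynomials in the secrets, answer equality queries formally in an ideal experiment, and bound the real/ideal distance by a Schwartz--Zippel union bound over the $O(Q^2)$ pairs of distinct polynomials. Two details worth noting that your write-up already handles correctly: the bad event is defined over pairs of distinct polynomials \emph{in the table} (of which there are only $m+1+Q$, since equality queries create no new entries), so the paper's convention of free, unlimited equality queries does not break the union bound; and your argument is purely information-theoretic, which is precisely the property the paper relies on in \Cref{remark: quantum_alg_in_GGM} when applying this bound to unbounded simulators.
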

\begin{theorem}[GGM lower bound of MDL~\cite{Yun15}]\label{thm: GGM_MDL}
Let $Q$ be a positive integer, and $\cG$ be a prime-order cyclic group.
For a generic algorithm $A^\cG$ in GGM with $Q$ queries for the $m$-MDL problem $\cG$, it holds that
\[
\adv_{m\text{-}\sf MDL} (A^\cG) = O \left(\left(\frac {e(Q+m+1)^2} {2m|\cG|}\right)^m\right).
\]
In particular, any constant-advantage algorithm in the GGM solving the $m$-MDL problem makes at least $\Omega(\sqrt {m|\cG|})$ group operations.
\end{theorem}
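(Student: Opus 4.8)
The plan is to prove the bound via the standard symbolic-game technique for the generic group model, refined with a counting argument that extracts the $m$-th power. First I would replace the actual secrets $x_1,\ldots,x_m$ by formal variables and observe that, because every group-operation query computes $T(i)\pm T(j)$, every entry of the table is at all times an affine integer form $f=c_0+c_1x_1+\cdots+c_mx_m$ whose coefficients depend only on the queries issued so far (together with the equality-query answers) and \emph{not} on the numerical values of the secrets. The only channel through which $A$ learns anything about the secrets is an equality query whose two arguments are symbolically distinct forms $f\neq f'$ but which nevertheless returns $1$; I call such an event a \emph{collision}. Conditioned on no collision ever occurring, the entire transcript, and hence $A$'s final output $(a_1,\ldots,a_m)$, is a deterministic function of $A$'s internal randomness alone, so it equals the uniform secret with probability exactly $N^{-m}$.

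The core step is then to show that $A$ can win only if the observed collisions pin down all $m$ secrets, which requires $m$ linearly independent constraints. A collision between $f$ and $f'$ is exactly the affine constraint $(f-f')(x_1,\ldots,x_m)=0$ over the field $\Z_N$. If, over the execution, the variable-parts of the observed collision constraints span a subspace of dimension $r<m$, then conditioned on the transcript the secret vector is uniformly distributed over an affine subspace of $\Z_N^m$ of dimension $\ge m-r\ge 1$; consequently $A$'s output can agree with $(x_1,\ldots,x_m)$ in every coordinate with probability at most $1/N$. Hence, up to the additive $N^{-m}$ lucky-guess term, winning forces at least $m$ collision events whose variable-parts are linearly independent.

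Next I would bound the probability of such a family of collisions. After $Q$ group operations starting from the $m+1$ nonzero initial forms $1,x_1,\ldots,x_m$ (folding in the constant $0$), at most $Q+m+1$ distinct symbolic forms ever arise, so at most $L:=\binom{Q+m+1}{2}\le (Q+m+1)^2/2$ comparable pairs exist. For any fixed set of $m$ pairs whose difference-forms are linearly independent in the variables, revealing the constraints one at a time and using that a nonzero affine form over $\Z_N$ vanishes on an affine hyperplane of density $1/N$ (independence keeping each successive conditional density at $1/N$) shows the joint event has probability at most $N^{-m}$. A union bound over the $\binom{L}{m}$ choices of such $m$-subsets, plus the lucky-guess term and $\binom{L}{m}\le(eL/m)^m$, yields
\[
\adv_{m\text{-}\sf MDL}(A^\cG)\;\le\;\binom{L}{m}N^{-m}+N^{-m}\;\le\;\Bigl(\tfrac{eL}{m}\Bigr)^{m}N^{-m}+N^{-m}\;=\;O\!\left(\left(\frac{e(Q+m+1)^2}{2mN}\right)^{m}\right).
\]
The ``in particular'' statement then follows by setting the advantage to a constant and solving for $Q$, giving $Q=\Omega(\sqrt{mN})$.

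The main obstacle is the handling of \emph{adaptivity}: the coefficients defining each potential collision constraint are themselves random variables determined by earlier equality-query answers, so the collision events are neither independent nor fixed in advance. The clean fix is a lazy-sampling / martingale framing in which the secrets are constrained only as collisions are forced, revealing constraints sequentially while bounding each conditional collision probability by $1/N$ and simultaneously tracking the rank of the revealed constraints; this justifies both the ``$m$ independent constraints are needed'' claim and the $N^{-m}$ bound per fixed $m$-subset. Making the bookkeeping between rank growth and conditional probabilities precise, so that exactly the binomial $\binom{L}{m}$ appears and no double counting inflates the estimate, is the delicate part of the write-up.
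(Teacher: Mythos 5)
The paper does not prove this theorem at all: it is imported from \cite{Yun15}, whose proof reduces generic MDL to a ``search-by-hyperplane-queries'' problem and bounds the winning probability by a purely combinatorial count of reachable transcripts. Your outline has the same skeleton (affine forms, collisions as affine constraints, a target bound of shape $\binom{L}{m}N^{-m}$ with $L\approx(Q+m+1)^2/2$ and $N=|\cG|$), but the two steps you defer are exactly where it breaks. The central problem is your claim that adaptivity can be handled by ``bounding each conditional collision probability by $1/N$''. That is false: conditioned on the transcript so far, the secret is uniform not on an affine subspace but on an affine subspace \emph{minus} the union of up to $L$ hyperplane slices coming from all the equality tests that returned ``not equal''. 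The per-step conditional collision probability is therefore only at most $1/(N-L)$, and the estimate is vacuous as soon as $L\ge N$, i.e.\ $Q\gtrsim\sqrt{N}$. Since the whole content of the theorem is the regime $\sqrt{N}\lesssim Q\lesssim\sqrt{mN}$, your argument proves nothing in the range that matters, and the corollary collapses from $\Omega(\sqrt{m|\cG|})$ back to the single-instance bound $\Omega(\sqrt{|\cG|})$ of \Cref{thm: GGM_DL}. The same oversight infects two further steps: conditioned on ``no collision'' the success probability is not exactly $N^{-m}$, and, more importantly, your rank-deficient case yields an additive term of order $1/N$ (at best $1/(N-L)$), yet your final display silently replaces it by $N^{-m}$. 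An additive $1/N$ cannot be absorbed into $O\bigl(\bigl(e(Q+m+1)^2/(2mN)\bigr)^m\bigr)$ once $m\ge 2$ and $Q$ is small, so the stated inequality does not follow from your own steps.

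Both defects are cured by replacing conditional-probability estimates with transcript counting, which is essentially Yun's route. Fix a deterministic adversary (WLOG by averaging). Call an equality test \emph{branching} if its difference form lies outside the linear span of the difference forms of previously observed collisions; non-branching tests have forced answers and reveal nothing. Every collision at a branching test strictly increases the rank, so along any execution path there are at most $m$ informative collisions, and the tree of reachable transcripts has at most $\sum_{j\le m}\binom{L}{j}=O\bigl(\tbinom{L}{m}\bigr)$ leaves (for, say, $L\ge 3m$). At each leaf the output is one fixed point of $\Z_N^m$, and the event of winning at that leaf is contained in the event that the secret equals that single point, which has probability $N^{-m}$ with no conditioning argument needed. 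Summing over the (disjoint) leaf events gives $\adv_{m\text{-}\sf MDL}\le O\bigl(\tbinom{L}{m}\bigr)N^{-m}$ uniformly in how $Q$ compares to $\sqrt{N}$; this single count also subsumes your rank-deficient case, since rank-deficient leaves are exactly the $j<m$ terms of the sum.
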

\else
\begin{theorem}[\cite{Mau05,Shoup97}]\label{thm: GGM_DL}
Let $Q$ be a positive integer, and $\cG$ be a prime-order cyclic group.
For a generic algorithm $A^\cG$ in GGM with $Q$ group operations and for any $* \in \{{\sf DL,CDH,DDH} \}$, it holds that
\[
\adv_{*} (A^\cG) = O \left(\frac {Q^2} {|\cG|}\right).
\]
In particular, any constant-advantage DL/CDH/DDH algorithm in the GGM makes at least $\Omega(\sqrt {|\cG|})$ queries.
\end{theorem}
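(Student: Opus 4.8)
The plan is to use the standard symbolic (or ``deferred-randomness'') argument adapted to Maurer's model. The key structural observation is that every entry of the table $T$ is, at all times, an \emph{affine function} of the secret inputs with integer coefficients determined entirely by the history of group operation queries: for $\sf DL$ an entry has the form $a+bx$, and for $\sf CDH$/$\sf DDH$ the form $a+bx+cy$ (and, in the $\sf DDH$ world with a fresh $r$, the form $a+bx+cy+dr$). Crucially, each group operation query writes at most one new value into $T$, while equality queries never create new entries; hence, even though equality queries are free and possibly unbounded in number, the total number of \emph{distinct} affine functions that ever appear is at most $m+Q$, where $m=O(1)$ is the number of inputs and $Q$ the number of group operations.

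First I would define the symbolic game, in which the secrets are replaced by formal variables and every equality query is answered by testing \emph{formal} identity of the two associated affine functions. In this game the entire transcript---and therefore the adversary's output---is independent of the actual secret values. I would then invoke the difference lemma: the real game and the symbolic game produce identical transcripts unless a \emph{nontrivial collision} occurs, i.e.\ two formally distinct affine functions evaluate to the same element of $\gZ_N$ under the randomly chosen secrets.

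The core estimate is a union bound over the at most $\binom{m+Q}{2}$ pairs of distinct affine functions. Here primality of $|\cG|=N$ enters: $\gZ_N$ is a field, so for a fixed pair the difference is a nonzero polynomial of degree $\le 1$ in the secrets, which vanishes with probability at most $1/N$ (one root in a field). This yields $\Pr[\mathrm{bad}]=O(Q^2/N)=O(Q^2/|\cG|)$. For $\sf DL$, conditioned on the complement of the bad event the output is a guess independent of $x$, so the success probability is at most $1/N+O(Q^2/N)$. For $\sf CDH$ I would additionally fold into the bad event the collision of any degree-$\le 1$ table entry with the formal target $xy$ (a nonzero degree-$2$ polynomial, vanishing with probability $\le 2/N$ by Schwartz--Zippel); conditioned on its complement no entry equals $g^{xy}$, so again the advantage is $O(Q^2/|\cG|)$. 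For $\sf DDH$ I would treat $r$ as an independent formal variable in the random world and compare the two worlds through their symbolic transcripts, which coincide; the distinguishing advantage is then bounded by the sum of the collision probabilities in the two worlds, once more $O(Q^2/|\cG|)$.

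The main obstacle is handling adaptivity together with free equality queries: the set of affine functions appearing is itself determined by the adversary's adaptive choices, and one must ensure the collision bound does not blow up with the (unbounded) number of equality queries. This is resolved precisely by the observation above---equality queries add no new table entries---so the union bound ranges over only $O((m+Q)^2)=O(Q^2)$ pairs regardless of the equality-query count. The final ``in particular'' claim then follows by setting $\adv_*(A^\cG)=\Omega(1)$ and solving $Q^2/|\cG|=\Omega(1)$ to obtain $Q=\Omega(\sqrt{|\cG|})$.
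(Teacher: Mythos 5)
Your proof is correct. Note that the paper does not prove this theorem itself---it is imported from the cited works \cite{Shoup97,Mau05}---and your symbolic (deferred-randomness) argument, with affine bookkeeping of table entries, a Schwartz--Zippel/union-bound collision estimate over the $O(Q^2)$ pairs (plus the degree-$2$ target $xy$ for CDH/DDH), and a difference-lemma step, is essentially the canonical proof from those works. You also correctly isolate the two features this paper actually relies on when invoking the theorem: free equality queries create no new table entries (so only the $Q$ group operations enter the bound), and the argument is purely information-theoretic in the query count, so it applies to computationally unbounded algorithms, as the paper's remark following the theorem requires.
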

\begin{theorem}[\cite{Yun15}]\label{thm: GGM_MDL}
Let $Q$ be a positive integer, and $\cG$ be a prime-order cyclic group.
For a generic algorithm $A^\cG$ in GGM with $Q$ group operations for the $m$-MDL problem $\cG$, it holds that
\[
\adv_{m\text{-}\sf MDL} (A^\cG) = O \left(\left(\frac {e(Q+m+1)^2} {2m|\cG|}\right)^m\right).
\]
In particular, any constant-advantage algorithm in the GGM solving the $m$-MDL problem makes at least $\Omega(\sqrt {m|\cG|})$ group operations.
\end{theorem}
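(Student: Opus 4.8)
The plan is to run Maurer's GGM in its \emph{symbolic} (lazy-evaluation) form and to bound the winning probability by the probability that the adversary discovers $m$ linearly independent linear relations among the unknown logarithms. Since $N=|\cG|$ is prime, $\gZ_N$ is a field. I would first record the standard invariant: by induction on the queries, every entry of the table is an affine-linear form $f=c_0+\sum_{i=1}^{m}c_iX_i$ in formal indeterminates $X_1,\dots,X_m$ standing for the unknowns $x_1,\dots,x_m$. Initially the table holds the $m+1$ forms $1,X_1,\dots,X_m$, and each group-operation query writes $T(i)\pm T(j)$, creating at most one new form; hence after $Q$ queries at most $L:=Q+m+1$ distinct forms $f_1,\dots,f_L$ ever appear. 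Because equality queries are free, I may assume the adversary queries all $\binom{L}{2}$ pairs, so its entire view is the \emph{collision pattern} $\{(k,k'):f_k(x)=f_{k'}(x)\}$, and its output is a deterministic function of this pattern.

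Next I would grade the analysis by the rank of the discovered relations. Each collision is a vanishing of the affine form $f_k-f_{k'}$; let $r$ be the rank of the homogeneous parts of all forms vanishing at $x$, so the set $W$ of points consistent with the \emph{full} pattern lies in an affine space of dimension $m-r$. Assuming $\binom{L}{2}<N/2$ (otherwise the target bound exceeds $1$ and is vacuous), removing the non-collision hyperplanes discards at most $\binom{L}{2}N^{m-r-1}$ points, so $|W|\ge\tfrac12 N^{m-r}$; since the output is a single point, conditioned on any such view the correct tuple is guessed with probability at most $2N^{-(m-r)}=O(N^{-(m-r)})$. On the other hand, reaching rank at least $r$ requires $r$ colliding pairs with linearly independent homogeneous parts, and any \emph{fixed} such choice of $r$ pairs cuts out a codimension-$r$ affine space, hit by a uniform $x$ with probability at most $N^{-r}$; a union bound over the $\binom{\binom{L}{2}}{r}$ choices gives $\Pr[\mathrm{rank}\ge r]\le \binom{\binom{L}{2}}{r}N^{-r}$.

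Bounding $\Pr[\mathsf{win}\wedge \mathrm{rank}=r]$ by the product of the conditional-correctness estimate and $\Pr[\mathrm{rank}\ge r]$ and summing,
\[
\adv_{m\text{-}\sf MDL}(A^\cG)\le \sum_{r=0}^{m} O\!\big(N^{-(m-r)}\big)\,\binom{\binom{L}{2}}{r}N^{-r}
= O\!\left(N^{-m}\sum_{r=0}^{m}\binom{\binom{L}{2}}{r}\right)
= O\!\left(\binom{\binom{L}{2}}{m}N^{-m}\right),
\]
where the last step uses that the top binomial coefficient dominates the sum. Finally I would apply $\binom{a}{m}\le (ea/m)^m$ with $a=\binom{L}{2}\le L^2/2=(Q+m+1)^2/2$, which yields $\binom{\binom{L}{2}}{m}N^{-m}\le\big(e(Q+m+1)^2/(2mN)\big)^m$ and hence the claimed bound.

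The main obstacle is that this counting pretends the list $f_1,\dots,f_L$ is fixed, whereas an \emph{adaptive} adversary chooses its queries as a function of previous equality answers, so the very forms appearing in the table depend on $x$. Making the union bound rigorous therefore requires decoupling the forms from $x$. I would compare against the symbolic game, in which equality queries are answered by formal identity of forms: there the transcript, and thus the whole list $f_1,\dots,f_L$, is $x$-independent, and the real game agrees with it until the first \emph{unexpected} collision (a vanishing of a not-identically-zero form). The clean way to preserve the $N^{-r}$-per-relation factor under this conditioning is a sequential \emph{freshness} argument: process collisions in discovery order and show that, conditioned on the history, any newly discovered relation whose homogeneous part escapes the current span vanishes with conditional probability at most $1/N$. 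Establishing this conditional bound---so that it survives conditioning on the previously observed collisions \emph{and} non-collisions---is the delicate technical heart of the proof; everything else is the binomial bookkeeping above.
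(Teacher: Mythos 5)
First, a framing remark: the paper does not prove this theorem at all --- it imports it verbatim from \cite{Yun15} (only the statement and \Cref{remark: quantum_alg_in_GGM} are used downstream). So your proposal has to be measured against Yun's proof. Your skeleton (at most $L=Q+m+1$ affine-linear forms, reduction of the view to the collision pattern, the need for $m$ independent vanishing differences, and the final arithmetic $\binom{a}{m}\le(ea/m)^m$ with $a=\binom{L}{2}$) is the same bookkeeping that produces the stated constant, but two steps are genuinely broken, and they interact badly. (a) Your working assumption $\binom{L}{2}<N/2$ (where $N=|\cG|$) is dismissed by claiming the bound is otherwise vacuous; that is false once $m$ grows. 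For example, with $Q=\sqrt{mN}/10$ and large $m$ one has $\binom{L}{2}\approx mN/200\gg N/2$, yet the claimed bound is roughly $(e/200)^m$, an extremely strong and non-vacuous assertion. Worse, the excluded regime is exactly the one the corollary needs: to conclude $Q=\Omega(\sqrt{m|\cG|})$ you must control the advantage for all $Q=o(\sqrt{mN})$, and for $m=\omega(1)$ such $Q$ exceed $\sqrt{N}$. As written, your argument only yields $Q=\Omega(\sqrt{N})$, i.e., it loses the $\sqrt{m}$ amortization factor that is the entire content of the MDL theorem. The failure is not cosmetic: when $\binom{L}{2}\gg N$ the estimate $|W|\ge\tfrac12 N^{m-r}$ is simply false, since the up-to-$\binom{L}{2}$ non-collision constraints can eliminate almost all of the affine subspace, so the conditional-guessing step collapses. (b) The union bound $\Pr[\mathrm{rank}\ge r]\le\binom{\binom{L}{2}}{r}N^{-r}$ is valid only for an $x$-independent family of forms; you correctly flag that adaptivity breaks this and sketch a sequential-freshness repair, but you leave it unproven --- and that lemma is the actual mathematical content. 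Even where it can be pushed through, conditioning on the observed non-collisions gives at best $1/(N-\binom{L}{2})$ per fresh relation rather than $1/N$, which again forces the regime restriction of (a).

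Yun's proof closes both gaps with a different key device: an \emph{injective encoding} in place of a union bound plus conditioning. The GGM adversary is reduced to a ``search-by-hyperplane-queries'' solver: find a uniform $x\in\gZ_N^m$ by adaptive queries ``is $x\in H$?'', one query per potential collision $f_k-f_{k'}$, so $q\le\binom{L}{2}$ queries. The solver is made deterministic and is forced to verify its output with $m$ extra hyperplane queries, so that winning implies the affirmed hyperplanes pin $x$ down. Each winning point $x$ is then mapped to the index set of its $m$ \emph{fresh} affirmed queries (those whose normal lies outside the span of the normals of previously affirmed fresh queries). This map is injective: given the index set one replays the adversary, answering a query in the set ``yes'' and any other query ``yes'' if and only if its hyperplane contains the current intersection of the fresh affirmed hyperplanes; this reconstructs the full transcript, hence the output, hence $x$. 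Consequently the number of winning points is at most $\binom{q+m}{m}$, giving
\[
\adv_{m\text{-}\sf MDL}(A^\cG)\le \binom{\binom{L}{2}+m}{m}N^{-m}\le O\!\left(\left(\frac{e(Q+m+1)^2}{2mN}\right)^{m}\right)
\]
in \emph{every} parameter regime, with adaptivity handled by the replay and with no conditioning on non-collisions at all (the information they carry is absorbed by injectivity: two winning points with the same fresh set would produce identical transcripts and hence identical outputs). This encoding is precisely the ``delicate technical heart'' your proposal leaves open; replacing your steps on rank probabilities and conditional uniformity by it is the way to complete the proof.
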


\fi

\begin{remark}\label{remark: quantum_alg_in_GGM}
    We stress that the query complexity is the only complexity measure when showing the lower bounds. In particular, the lower bounds in~\Cref{thm: GGM_DL,thm: GGM_MDL} do apply for adversaries even with \emph{quantum} or \emph{unbounded} computational powers, as long as they only make $T$ classical queries. This observation is essential for our result.
\end{remark}

\section{Quantum Algorithms in the QGGM}\label{sec: algo}
This section presents generic quantum algorithms for the DL and MDL problems. 
Readers mainly interested in the lower bound can safely skip this section.

We first review Shor's algorithm for the DL problem with a closer look at the group operation complexity and its modification with classical preprocessing. The new MDL algorithm is presented at the end of this section.

We stress that all of these are generic. Also, they follow the standard approach for the abelian hidden subgroup problem (HSP) but the way to compute the relevant function defining HSP is different. Therefore, the correctness analysis follows from the known analysis, and we omit them (see, e.g.,~\cite{Lom04}).

Let $N$ be a positive integer and define $w_N:= \exp(2\pi i/N).$ The quantum Fourier transform $\QFT$ and its inverse $\QFT^\dagger$ are defined as follows:
\[
\QFT:\ket{x} \mapsto \frac{1}{\sqrt N}\sum_{k=0}^{N-1} w_N^{xk} \ket k,\text{ and }
\QFT^\dagger:\ket{k} \mapsto \frac{1}{\sqrt N}\sum_{x=0}^{N-1} w_N^{-xk}\ket x.\]

\subsection{The Discrete Logarithm Problem}
We rephrase the quantum algorithm due to Shor~\cite{Shor94} in detail and describe its variations in the hybrid setting and the depth-efficient version.

Let $\cG$ be a cyclic group of order $N$. 
Suppose that a generator $g$ and a handle $g^x$ representing the problem instance are given to the adversary for random $x$. The QGGM algorithm below finds $x$, where the below description roughly includes the square-and-multiply method to compute $g^{a+bx}$, which are omitted in the usual descriptions.
Let $|\cG|=N$, $n=\lceil \log_2 N\rceil.$ 

\subsubsection{The quantum DL algorithm}
The DL algorithm $A$ proceeds as follows.
\begin{enumerate}
    \item Given a problem instance $(g,g^x)$, the algorithm prepares the group elements of the form $g^{2^i}$ and $g^{2^j\cdot x}$, or the set
    \[
    D_x=\{g,g^2,g^{2^2},...,g^{2^{n-1}}\}\cup \{g^x,g^{2 x},g^{2^2 x},...,g^{2^{n-1} x}\}
    \]
    in the table using classical group operations. It prepares a quantum state $\ket{0,0}_A \otimes \ket{1,D_x}_T.$
    \item Applying $\QFT\otimes\QFT$ on the working register of $A$ to obtain 
    \[
    \sum_{a,b=0}^{N-1}\frac{\ket {a,b}_A}{{N}}\otimes\ket{1,D_x}.
    \]
    \item Using the binary expression of $a,b$ and $D_x$, the algorithm computes
    \[
    \sum_{a,b=0}^{N-1}\frac{\ket{a,b}_A\ket{g^{a+bx},D_x}_T}{{N}},
    \]
    applies $\QFT^\dagger \otimes \QFT^\dagger$, and measures the register $A$. For the measurement outcome $(u,v)\neq(0,0)$, return $v/u$ as an answer. Otherwise return $\bot$.
\end{enumerate}

The number of oracle queries is $O(n)=O(\log |\cG|)$; the construction of $D_x$ requires $O(n)$ queries, and computing $g^{a+bx}$ requires $2n$ queries, each of which is the controlled group operation multiplying $g^{2^i}$ or $g^{2^jx}$ on the first entry of the table.

Note that the quantum registers of this algorithm are essentially the working register $A$ and the first register of the table holding $g^{a+bx}.$ Furthermore, the quantum group operation accesses only one register of the remaining parts.
The following folklore theorem summarizes the result of this algorithm, regarding this observation.
\begin{theorem}\label{thm: plain_Shor}
    Let $\cG$ be a cyclic group. There exists a QGGM algorithm making $O(\log|\cG|)$ group operations that solves the discrete logarithm problem with an overwhelming probability. 
    
    This algorithm requires a quantum register holding $2n$-qubit\footnote{This can be reduced using the tricks in, e.g.,~\cite{Kitaev96,ME98}.} and a single group element, and classical storage holding $2n$ group elements. This algorithm does not require quantum access to classical storage.\footnote{Precisely speaking, it requires a QRACM storing a single group element, which is unavoidable when accessing group elements in classical memory.}
\end{theorem}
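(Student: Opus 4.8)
The plan is to verify that the algorithm $A$ described just above the theorem statement is a valid QGGM algorithm, to count its group operations, and to confirm correctness by reduction to the standard abelian hidden subgroup analysis. Since the excerpt explicitly says the correctness analysis ``follows from the known analysis'' (citing~\cite{Lom04}), I would treat the output-distribution correctness as established and focus the proof on the three quantitative claims: the $O(\log|\cG|)$ group operation count, the quantum register footprint (one group element plus $2n$ ancilla qubits), and the absence of QRAQM/QRACM beyond a single group element.

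First I would establish the group operation count. Writing $n=\lceil\log_2 N\rceil$, Step~1 builds the set $D_x=\{g^{2^i}\}_{i=0}^{n-1}\cup\{g^{2^jx}\}_{j=0}^{n-1}$ by repeated squaring: each $g^{2^{i}}$ is obtained from $g^{2^{i-1}}$ by one group operation query of the form $x_i+x_i$, so both chains together cost $2(n-1)$ \emph{classical} group operations, all of which store their results in the table register $\mathbf{T}$. Step~3 computes $g^{a+bx}$ from the binary expansions $a=\sum_i a_i 2^i$ and $b=\sum_j b_j 2^j$ by $2n$ controlled group operations, each controlled on a qubit $a_i$ (resp.\ $b_j$) of the working register and adding the precomputed table element $g^{2^i}$ (resp.\ $g^{2^jx}$) onto the first table slot. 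These are the only \emph{quantum} group operation queries, and there are exactly $2n$ of them. Hence the total is $2(n-1)+2n = O(n)=O(\log|\cG|)$, as claimed. I would remark that each such controlled operation is legitimate in the QGGM: it writes to slot $1$ while reading a fixed other slot, and the control (a working-register qubit) does not involve table indices in superposition, so it is a unitary of the allowed form in \Cref{eqn: group_operation_query}.

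Next I would verify the memory claims. The working register $\mathbf{W}$ holds the two $n$-bit registers $\ket{a,b}$, i.e.\ $2n$ qubits; the only table slot that is ever modified quantumly is slot~$1$ (holding $g^{a+bx}$), which accounts for the ``single group element'' of quantum memory. All of $D_x$ sits in classical table slots that are only ever read as fixed operands, never placed in superposition over their \emph{indices}; thus no QRAQM is needed, and the only quantum-accessible-classical-memory requirement is the trivial one of fetching a fixed group element into the arithmetic, which is the single-group-element QRACM noted in the footnote. To close correctness, I would invoke the standard argument: after Step~2 and Step~3 the state is the uniform superposition $\tfrac1N\sum_{a,b}\ket{a,b}\ket{g^{a+bx}}$, which is exactly the coset-state construction for the hidden subgroup $H=\{(a,b):a+bx\equiv 0\}$ of $\Z_N\times\Z_N$; applying $\QFT^\dagger\otimes\QFT^\dagger$ and measuring yields $(u,v)$ orthogonal to $H$, i.e.\ $u\equiv vx$, so $v/u \bmod N$ recovers $x$ whenever $u$ is invertible, which happens with overwhelming probability over the measurement (and can be boosted by $O(1)$ repetitions). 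This is precisely the analysis of~\cite{Lom04}, so I would cite it rather than reprove it.

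The main obstacle, and the part deserving the most care, is \emph{not} the correctness (which is folklore) but the bookkeeping that justifies calling this a legitimate \emph{generic} algorithm with the stated resource profile. Specifically, one must check that computing $g^{a+bx}$ via square-and-multiply genuinely requires only the precomputed fixed elements of $D_x$ as operands and a single mutable quantum table slot, so that no query ever superposes over table \emph{indices} — this is what guarantees the ``no QRAQM'' and ``single group element'' claims and distinguishes the honest Shor implementation from one that would need expensive quantum-accessible memory. Once that structural observation is nailed down, the theorem follows immediately from the query count above together with the cited HSP correctness analysis.
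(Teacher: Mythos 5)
Your proposal is correct and follows essentially the same route as the paper: the paper's proof consists precisely of the algorithm description, the query count ($O(n)$ classical operations to build $D_x$ by repeated squaring, plus $2n$ controlled quantum operations writing $g^{a+bx}$ into the first table slot), the observation that only the $2n$-qubit working register and that single table slot are quantum, and a citation to the standard abelian HSP analysis for correctness. One trivial slip to fix: the orthogonality relation should read $v \equiv ux \pmod N$ (not $u \equiv vx$), which is exactly what makes the returned value $v/u$ equal to $x$ whenever $u$ is invertible.
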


\subsubsection{Hybrid quantum-classical algorithms}
In the above algorithm, the construction of $D_x$ is entirely classical, and only the computation of $g^{a+bx}$ uses quantum power.
Since classical computation is much cheaper than quantum computing, it is tempting to reduce the later query complexity at the cost of the former classical preprocessing. 

We can modify the above algorithm taking this consideration into account, by exploiting the base-$p$ numeral system for $p\ge 2.$ Let $n_p:= \lceil \log_p N \rceil.$
This modified $p$-base algorithm uses the following set
\[
    D_x^{(p)}=\bigcup_{1\le k < p, 0 \le i <n_p}\left \{ g^{k\cdot p^i}\right\} \cup \bigcup_{1\le \ell < p, 0 \le j <n_p}\left\{g^{\ell \cdot p^j  x}\right\}
\]
instead of $D_x$ above. The only differences are the step for preparing $D_x^{(p)}$ and the way to compute $g^{a+bx}$. The preparation requires $O(p\log |\cG|/\log p)$ group operations, and the $p$-base exponent takes $O(\log |\cG|/\log p)$ quantum group operations. 
It is worth noting that this algorithm requires the QRACM access to a size-$p$ subset of $D^{(p)}_x$. 
The result of this hybrid algorithm is summarized as follows.

\begin{theorem}\label{thm: base-p_Shor}
    Let $\cG$ be a cyclic group and let $p>1$ be an integer. 
    There exists a generic hybrid algorithm with $O(\log |\cG|/\log p)$ quantum group operations and $O(p\log |\cG|/\log p)$ classical group operations that solves the DL problem with an overwhelming probability. 
    
    This algorithm requires a quantum register holding $2n$-qubit and a single group element, and classical storage holding $O(p\log |\cG|/\log p)$ group elements.
    This algorithm requires QRACM access to the classical $O(p)$ group elements simultaneously.
\end{theorem}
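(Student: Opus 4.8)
The plan is to treat this as a bookkeeping verification of the base-$p$ algorithm already sketched in the text: the \emph{correctness} is identical to the standard abelian-HSP analysis underlying \Cref{thm: plain_Shor} (the hidden-function values $g^{a+bx}$ are computed differently, but the quantum circuit around them is unchanged), so I would cite \cite{Lom04} for correctness and devote the proof to establishing the three resource claims, namely (i) $O(\log|\cG|/\log p)$ quantum group operations, (ii) $O(p\log|\cG|/\log p)$ classical group operations, and (iii) the stated quantum register, classical storage, and QRACM footprints. Throughout I write $n_p=\lceil\log_p N\rceil$, so $n_p=O(\log|\cG|/\log p)$.

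First I would bound the classical preprocessing that builds $D_x^{(p)}$. The idea is to construct the elements column by column. Given $g^{p^i}$, the elements $g^{2p^i},g^{3p^i},\dots,g^{(p-1)p^i}$ are obtained by $p-2$ successive multiplications by $g^{p^i}$, and one further multiplication yields the base $g^{p^{i+1}}=g^{(p-1)p^i}\cdot g^{p^i}$ of the next column. Thus each of the $n_p$ columns on the $g$-side costs $p-1$ classical group operations, and the same holds for the $x$-side starting from $g^{x}$. The total is $(p-1)\cdot 2 n_p=O(p\,n_p)=O(p\log|\cG|/\log p)$ classical group operations, and the resulting table $D_x^{(p)}$ holds $O(p\log|\cG|/\log p)$ group elements in classical storage, matching claim (ii) and the storage part of (iii).

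Next I would spell out the quantum phase and its cost. After applying $\QFT\otimes\QFT$ to obtain the uniform superposition over $(a,b)$, I expand in base $p$ as $a=\sum_{i<n_p}a_i p^i$ and $b=\sum_{j<n_p}b_j p^j$, so that $g^{a+bx}=\prod_{i<n_p} g^{a_i p^i}\cdot\prod_{j<n_p} g^{b_j p^j x}$. The key step is the digit-controlled multiplication: for column $i$, the QRACM stores the $p$ table indices of $\{g^{k p^i}\}_{0\le k<p}$ (with $g^{0}$ the identity handling the zero digit), and a QRACM lookup maps the digit register $a_i$ to the index $j$ of the correct factor; a single group operation query then multiplies the running product (held in the one quantum group-element slot of $\mathbf{T}$) by $g^{a_i p^i}$, after which the lookup is uncomputed. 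Each of the $2n_p$ digits costs $O(1)$ quantum group operations, giving $O(n_p)=O(\log|\cG|/\log p)$ quantum group operations; the only quantum data in $\mathbf{T}$ is the single running product, and at any step the QRACM addresses only the $O(p)$ elements of one column, which yields claims (i) and the remaining parts of (iii). Applying $\QFT^\dagger\otimes\QFT^\dagger$ and measuring then recovers $x$ exactly as in \Cref{thm: plain_Shor}.

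The main obstacle I anticipate is purely a modeling one: verifying that the digit-controlled multiplication is legitimately realizable in the QGGM/hybrid model using QRACM and a \emph{single} quantum group element. Concretely I must argue that the lookup-then-group-operation-then-uncompute pattern is a valid sequence of model operations — the QRACM lookup produces a (quantum) index into $\mathbf{T}$ which feeds the query register $\mathbf{Q}$, and the group operation query \eqref{eqn: group_operation_query} acts correctly in superposition over that index — and that reversibility lets the ancillary index be uncomputed so that no additional quantum group-element slot is consumed across the $2n_p$ steps. Everything else (the superposition setup, the two $\QFT$ layers, and extraction of $x$ from the measurement) is identical to the plain algorithm, so no further correctness argument is needed beyond the citation.
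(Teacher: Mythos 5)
Your proposal matches the paper's own treatment: the paper likewise builds $D_x^{(p)}$ classically, computes $g^{a+bx}$ via base-$p$ digit-controlled multiplications with QRACM lookups into a size-$p$ column, defers correctness to the standard abelian-HSP analysis (citing \cite{Lom04}), and the resource accounting ($2n_p$ quantum group operations, $O(p\,n_p)$ classical ones, a single quantum group-element slot plus $2n$ qubits) is identical. Your write-up is in fact slightly more explicit than the paper's (which presents the algorithm only as a modification of the plain Shor description), particularly in the column-by-column preprocessing cost and the lookup-multiply-uncompute pattern, but it is the same algorithm and the same argument.
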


\subsubsection{Depth-efficient algorithms}\label{subsubsec: depth_efficient}
As suggested in~\cite{CW00}, we can exploit the binary tree with $2\log |\cG|$ leaves and $\log (2\log |\cG|)$ depth to reduce the depth of the DL algorithm. Precisely, we prepare the elements in $D_x$ as leaves of the binary tree and compute $g^{a+bx}$ using this tree with $2\log |\cG|-1$ internal nodes. Note that this computation is coherently done over internal nodes.

Furthermore, we can combine the binary tree idea with the base-$p$ numeric system with $D_{x}^{(p)}$. In this case, we prepare $g^{(a_{2i}+a_{2i+1} \cdot p) p^{2i}  }$ and $g^{(b_{2i}+b_{2i+1} \cdot p) p^{2i}  }$ for indices $a,b$ as leaf nodes by coherently multiplying the elements in $D_x^{(p)}.$ The tree has $ O(\log |\cG|/\log p)$ nodes and $\log\log |\cG| - \log\log p +O(1)$ depth.
This gives the following depth-efficient DL algorithm.
\begin{theorem}\label{thm: depth_Shor}
    Let $\cG$ be a cyclic group and let $p$ be an integer. There exists a generic quantum algorithm with $O(\log |\cG|/\log p)$ group operations of depth $\log\log |\cG|-\log\log p+O(1)$ that solves the discrete logarithm problem with an overwhelming probability. 

    It requires a quantum register storing $2n$-qubit and $O(\log |\cG|/\log p)$ group elements, and classical storage holding $O(p\log |\cG|/\log p)$ group elements. It also requires QRACM access to the classical $O(p)$ group elements simultaneously.
\end{theorem}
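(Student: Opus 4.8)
The plan is to start from the base-$p$ variant of \Cref{thm: base-p_Shor} and attack its only step of non-constant depth. In that algorithm the entire quantum cost sits in computing $g^{a+bx}$ from the superposition over digit strings, realized as a product of $\Theta(n_p)$ precomputed factors drawn from $D_{x}^{(p)}$, where $n_p=\lceil\log_p N\rceil=\Theta(\log|\cG|/\log p)$. Multiplying these factors in a single left-to-right chain costs $\Theta(n_p)$ sequential group operations, i.e.\ depth $\Theta(\log|\cG|/\log p)$. Following \cite{CW00}, I would replace the linear chain by a balanced binary multiplication tree, so that the depth collapses to the logarithm of the number of factors while the total number of operations stays $\Theta(n_p)$.

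Concretely, I would first prepare the leaves. For each pair of digits I load the relevant factors of $D_{x}^{(p)}$ by QRACM, controlled on the digit values, and combine them into leaf elements of the form $g^{(a_{2i}+a_{2i+1}p)p^{2i}}$ and $g^{(b_{2i}+b_{2i+1}p)p^{2i}}$; each leaf uses a constant number of QRACM reads over a table of $O(p)$ elements followed by one group operation, so the leaf layer has depth $O(1)$ and produces $\Theta(n_p)$ leaves. I would then contract the tree bottom-up, multiplying sibling registers in parallel at each level via the in-place rule $W_u \leftarrow W_u + W_v$. Because distinct sibling pairs occupy disjoint registers, at every level the write targets are disjoint from one another and from all read sources, so each level is a single legal parallel group-operation query in the sense of \Cref{sec: Model}. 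A tree with $\Theta(n_p)$ leaves has $\lceil\log_2(2n_p)\rceil=\log\log|\cG|-\log\log p+O(1)$ levels and $\Theta(n_p)$ internal nodes; hence the contraction uses $O(\log|\cG|/\log p)$ quantum group operations at depth $\log\log|\cG|-\log\log p+O(1)$. Holding all tree nodes simultaneously costs $O(\log|\cG|/\log p)$ quantum group-element registers on top of the $2n$ qubits carrying $a,b$, and the classical storage and $O(p)$-wide QRACM are exactly those of \Cref{thm: base-p_Shor}, matching every resource bound in the statement.

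The delicate point, and the step I expect to be the real obstacle, is cleanliness of the function register. Shor's interference (as used for \Cref{thm: plain_Shor}) requires that the register entangled with $\ket{a,b}$ hold \emph{exactly} $g^{a+bx}$ and nothing finer. An in-place tree, however, leaves the non-root registers holding partial products that still depend on $(a,b)$, and since any reversible linear map on the leaves must preserve their information, these residues cannot simply be traced out: doing so refines the cosets of the hidden subgroup and destroys the period-finding. I would resolve this by the standard compute–copy–uncompute pattern: copy the root into a fresh output register with one group operation, then run the tree contraction and the leaf preparation in reverse to reset all work registers to their initial values. The down-sweep has the same depth and operation count as the up-sweep, so the totals remain $O(\log|\cG|/\log p)$ operations and depth $\log\log|\cG|-\log\log p+O(1)$ up to the constant factor from uncomputation, and the state it produces is precisely $\tfrac{1}{N}\sum_{a,b}\ket{a,b}\ket{g^{a+bx}}$. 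Correctness then follows verbatim from the base-$p$ analysis of \Cref{thm: base-p_Shor}, since only the circuit computing $g^{a+bx}$ has changed and not the state it outputs.
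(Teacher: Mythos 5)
Your construction is essentially the paper's own proof: the paper likewise combines the Cleve--Watrous binary-tree idea with the base-$p$ table $D_x^{(p)}$, preparing leaves $g^{(a_{2i}+a_{2i+1}p)p^{2i}}$ and $g^{(b_{2i}+b_{2i+1}p)p^{2i}}$ by QRACM-controlled multiplications and contracting the tree coherently over $O(\log|\cG|/\log p)$ internal nodes in $\log\log|\cG|-\log\log p+O(1)$ parallel query layers, with correctness deferred to the standard abelian-HSP analysis. The one point where you go beyond the paper is the compute--copy--uncompute step: the paper's sketch never addresses the digit-dependent residues left in the non-root registers, and you are right that without uncomputing them the leaves determine $(a,b)$ exactly and the Fourier sampling collapses, so your fix is a genuine (and correct) completion of the argument; its only cost is a factor of $2$ in operations and depth, which---read literally---is not absorbed by the additive $O(1)$ in the stated depth bound, a slack the paper's terse sketch silently incurs as well.
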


\subsection{The Multiple Discrete Logarithm Problem}
This section describes our new quantum MDL algorithm, where the adversary is given $m$ group elements $y_1=g^{x_1},...,y_m=g^{x_m}$ as inputs. The adversary's goal is to find $x_1,...,x_m$ using group operation queries.

The proposed algorithm follows the standard approach to hidden subgroup problems for the target function $f:\mathbb Z_N^{m+1} \rightarrow \cG$ given by
\begin{equation}\label{eqn: target MDL}
    f(k_0,...,k_m) = g^{k_0} y_1^{k_1} \cdots y_m^{k_m},
\end{equation}
which hides a rank-$m$ subgroup $H \le \mathbb Z_N^{m+1}$ generated by $\{x_ie_0 - e_i\}_{1\le i \le m}$; that is, it holds that $f(g_1)=f(g_2)$ for $g_1H=g_2H$. We stress the correctness follows from the previous analysis of the abelian HSP algorithms.
The improvement comes from the multi-exponentiation algorithm, which we recall below.

\subsubsection{The multi-exponentiation problem}
In the multi-exponentiation problem, we are given the elements $1,h_1,...,h_m$ and the nonnegative exponents $e_1, ..., e_m$, and asked to find $h_1^{e_1}\cdot ... \cdot h_m^{e_m}$ only using the multiplication. 
Pippenger~\cite{Pip80} showed the following result, which is known to be almost optimal.
\begin{proposition}[{\cite{Pip80}}]\label{prop:pip} 
Let $B$ be an integer, and $\lg m/ \lg B = o(1)$. Suppose $e_i \le B$ for all $i$. 
Given inputs $1,h_1,...,h_m$ and $e_1,...,e_m$, 
there is an efficient deterministic algorithm to compute $h_1^{e_1} \cdot ... \cdot h_m^{e_m}$ with 
\[
\lg B + \frac{(1+o(1)) m \lg B}{\lg (m \lg B)}
\]
multiplications.
\end{proposition}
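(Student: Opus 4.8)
The statement asserts only the existence of an algorithm, so my proof is constructive; I write $L:=\lg B$ throughout. The first step is to reduce the single monomial to a \emph{family of subset products} and peel off the $\lg B$ term. Expanding each exponent in binary, $e_i=\sum_{k=0}^{L-1}\beta_{i,k}2^{k}$ with $\beta_{i,k}\in\{0,1\}$, one has
\[
\prod_{i=1}^m h_i^{e_i}=\prod_{k=0}^{L-1}P_k^{\,2^{k}},\qquad P_k:=\prod_{i:\,\beta_{i,k}=1}h_i .
\]
Once the $L$ subset products $P_0,\dots,P_{L-1}$ are available, Horner's rule evaluates $\prod_k P_k^{2^k}$ using $L-1$ squarings and $L-1$ multiplications, i.e. $\lg B+O(1)$ operations. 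This is exactly the source of the leading $\lg B$ term, and because $\lg m/\lg B=o(1)$ forces $L$ to dominate $\lg m$, this $O(L)$ overhead is lower order relative to the main term below.

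The heart of the matter is evaluating the whole family $\{P_k\}$ — a product of subsets of the $m$ variables $h_1,\dots,h_m$, encoded by an $L\times m$ Boolean matrix with at most $mL$ ones. Two elementary constructions already get the right order. A windowed \emph{bucket} sweep, grouping the columns into windows of width $c\approx\lg m$, costs $O(mL/\lg m)$; a \emph{base-grouping} scheme that partitions the $h_i$ into blocks of size $s\approx\lg L$, precomputes all $2^s$ block subset-products, and assembles each $P_k$ from one product per block, costs $O(mL/\lg L)=O(mL/\lg\lg B)$. Taking whichever is smaller yields $\Theta\!\big(mL/\max(\lg m,\lg\lg B)\big)=\Theta\!\big(mL/\lg(mL)\big)$, which explains why the denominator is $\lg(m\lg B)$ rather than $\lg m$ or $\lg\lg B$ separately: both dimensions of the digit grid must be exploited. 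To obtain the sharp $(1+o(1))$ constant rather than merely $\Theta(\cdot)$, I would combine the two tricks into a single two-dimensional blocking and appeal to Pippenger's general estimate \cite{Pip80} for the number of multiplications needed to realise a family of monomials in $m$ variables through one shared addition sequence, instantiated on the $L\times m$ matrix of total weight $\le mL$.

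The main obstacle is precisely this constant. The two single-trick bounds are easy but each is off by a constant factor, and one cannot optimise the windowing and the base-grouping independently and simply add the savings; the clean route is to treat $\{P_k\}$ as one simultaneous-monomial instance and invoke the combinatorial core of \cite{Pip80}, whose $(1+o(1))$ factor comes from balancing an explicit block construction against a dimension-counting argument. The remaining bookkeeping is to check that the regime hypothesis $\lg m/\lg B=o(1)$ places the parameters where that theorem applies and makes every overhead term — the $\lg B$ squarings, the $O(L)$ recombination multiplications, and the block-precomputation cost — negligible against $mL/\lg(mL)$, so that they are absorbed into the stated $o(1)$.
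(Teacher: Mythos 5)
The paper itself offers no proof of this proposition --- it is imported as a black box from Pippenger's work --- so the only question is whether your reconstruction would actually establish the stated bound. It would not, because of a concrete error in the combination stage. With binary expansion, Horner's rule evaluates $\prod_k P_k^{2^k}$ using $L-1$ squarings \emph{and} $L-1$ further multiplications, and both kinds count in this model (the leading $\lg B$ term of the proposition is precisely the squaring cost); so your combination stage costs $2\lg B+O(1)$ multiplications, not $\lg B+O(1)$. Your justification for discarding the surplus is a non sequitur: the hypothesis gives $\lg m = o(L)$, but that says nothing about whether $L = o\bigl(mL/\lg(mL)\bigr)$, and the latter fails exactly when $m$ is small --- e.g.\ $m=O(1)$, a regime that the hypothesis $\lg m/\lg B = o(1)$ certainly permits. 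For $m=1$ the proposition is Brauer's theorem, $\lg B + (1+o(1))\lg B/\lg\lg B$, and an uncancelled extra $\lg B$ would double the leading constant, so your construction proves only $2\lg B + (1+o(1))\,m\lg B/\lg(m\lg B)$, which is strictly weaker than the statement. One standard repair is to recode the exponents in radix $2^c$ with $c\to\infty$ slowly: Horner then costs $\lg B$ squarings plus only $\lg B/c=o(\lg B)$ genuine multiplications, and the digit monomials are converted into subset products over the extended basis $\{h_i^{2^b}\}_{i\le m,\,b<c}$, whose preparation costs $mc$ squarings, negligible against the main term for a suitable choice of $c$.

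Beyond that, your reduction is sound at the $\Theta$ level: the digit-matrix view, the two single-trick bounds $O(mL/\lg m)$ and $O(mL/\lg L)$, and the observation that the denominator $\lg(mL)$ requires exploiting both dimensions are all correct. But the sharp $(1+o(1))$ constant --- the Lupanov-style blocking/counting argument for a family of subset products --- is delegated back to \cite{Pip80}, i.e.\ to the very result being proven, so the proposal is an accurate outline of the shape of Pippenger's argument rather than a self-contained proof. It is worth noting that in the regime where the paper actually invokes the proposition ($m=\Omega(\log|\mathcal{G}|)$, so the main term dominates $\lg B$), your factor-of-two slip in the Horner stage is harmless; it is the general statement, which covers small $m$, that it breaks.
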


\subsubsection{A multiple discrete logarithm algorithm}
Let $\cG$ be a cyclic group with order $N$. Suppose that we are given $g,y_1=g^{x_1},...,y_m=g^{x_m}$ as inputs.
As in the standard algorithm for HSP, the algorithm prepares a superposition 
    \[
    \bigotimes_{i=0}^{m} \left(\sum_{0\le k_i <N}\frac{\ket{k_i}}{\sqrt M} \right) = \sum_{0 \le k_0,...,k_m < N} \frac{\ket{k_0,...,k_m}}{\sqrt{M^{m+1}}}
    \]
using QFT and then compute the target function $f$ in~\Cref{eqn: target MDL} coherently using~\Cref{prop:pip}.

To execute~\Cref{prop:pip}, the condition $\lg m/\lg |\cG| = o(1)$ must hold. We also note that this algorithm requires large quantum memory. The result in this section is as follows.
\begin{theorem}\label{thm: MDL_alg}
    Let $\cG$ be an cyclic group and $m$ be a positive integer such that $\lg m/\lg |\cG| = o(1)$. 
    There exists a QGGM algorithm that solves the $m$-MDL problem using
    \[
    2\log |\cG| + \frac{(2+o(1))m \lg |\cG|} {\lg (m\lg |\cG|)}
    \]
    quantum group operation with an overwhelming probability.     
    If $m=\Omega(\log |\cG|),$ the amortized group operation complexity is $O(\log |\cG|/\log m)$ per DL instance.
\end{theorem}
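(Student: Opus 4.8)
The plan is to run the standard abelian hidden subgroup problem (HSP) routine for the function $f$ of~\Cref{eqn: target MDL}, and to spend group operations only on its single coherent evaluation, which I would realize through Pippenger's multi-exponentiation (\Cref{prop:pip}). Correctness I would obtain by reduction to the known HSP analysis rather than reprove it. Since $f(k_0,\dots,k_m)=g^{k_0+\sum_i k_i x_i}$, two inputs collide exactly when their difference lies in the rank-$m$ subgroup $H$ generated by $\{x_i e_0 - e_i\}_i$, so after the final $\QFT^\dagger$ one samples a character in $H^\perp$. For prime-order $\cG$ the subgroup $H^\perp\subseteq\mathbb{Z}_N^{m+1}$ has rank one and is generated by $(1,x_1,\dots,x_m)$; hence a single nonzero sample already determines every $x_i$ once its first coordinate is normalized to $1$, which fails only with probability $1/N$. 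For a general cyclic $\cG$, a constant number of repetitions together with linear algebra over $\mathbb{Z}_N$ recovers $H$, and all such post-processing is free of group operations. I would cite the standard abelian HSP correctness (e.g.~\cite{Lom04}) for this part.

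Next I would account for the group operations. Preparing the uniform superposition and both quantum Fourier transforms act only on the exponent registers and cost no group operations in the QGGM, so the entire cost is the coherent evaluation of $g^{k_0}y_1^{k_1}\cdots y_m^{k_m}$, i.e.\ the multi-exponentiation with the $m+1$ bases $g,y_1,\dots,y_m$ and exponents bounded by $B=|\cG|$. The hypothesis $\lg m/\lg|\cG|=o(1)$ is precisely the condition of~\Cref{prop:pip}, which supplies a chain of $\lg|\cG|+(1+o(1))(m+1)\lg|\cG|/\lg((m+1)\lg|\cG|)$ multiplications; since $m+1=(1+o(1))m$ and the two denominators agree asymptotically, this equals $\lg|\cG|+(1+o(1))m\lg|\cG|/\lg(m\lg|\cG|)$.

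The hard part, and the source of the leading constant $2$, will be realizing this chain coherently and reversibly. I would argue that Pippenger's chain can be made data-oblivious: the precomputed products of the bases are exponent-independent, while the remaining combinations are performed as group operations controlled by the bits of the $k_i$ registers. Because these controlled products become entangled with the exponent superposition, they cannot be left as garbage---after copying out $f(k)$ and measuring it, any leftover intermediate element would remain entangled with the $k$-register and corrupt the subsequent $\QFT^\dagger$. I therefore run the chain forward, copy the result with one extra group operation, and run it backward to uncompute every intermediate, which roughly doubles the count and yields the claimed $2\lg|\cG|+(2+o(1))m\lg|\cG|/\lg(m\lg|\cG|)$. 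Establishing rigorously that the near-optimal chain is oblivious and that its reverse pass preserves the Pippenger bound up to this factor of two is the main obstacle.

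Finally, for the amortized statement, when $m=\Omega(\log|\cG|)$ the second term dominates and $\lg(m\lg|\cG|)=\lg m+\lg\lg|\cG|=\Theta(\lg m)$, since $\lg m\ge\lg\lg|\cG|$ in this regime. Dividing the total by the $m$ instances then gives $\Theta(\lg|\cG|/\lg m)=O(\log|\cG|/\log m)$ group operations per DL instance.
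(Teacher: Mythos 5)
Your proposal is correct and follows essentially the same route as the paper: the paper's (very terse) proof is exactly the standard abelian-HSP routine applied to the function $f$ of \Cref{eqn: target MDL}, evaluated coherently via Pippenger's algorithm (\Cref{prop:pip}), with correctness deferred to the known HSP analysis and the leading factor $2$ arising from reversible evaluation, just as you describe. The only refinement worth noting is that your self-identified obstacle---establishing data-obliviousness of the near-optimal chain---is a non-issue in the QGGM: the query register $\mathbf{Q}$ may hold indices in superposition computed from the exponent bits in the work register, and each group operation is undone by a single group operation (flipping the sign bit $b$), so any length-$P$ multiplication sequence can be run coherently and then uncomputed at a cost of $2P+1$ queries, with no obliviousness requirement.
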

\han{On the hybrid? I guess it may not be helpful but there's no strong reason.}
\section{Quantum Lower Bounds in the QGGM}\label{sec: quantum}
In this section, we prove the quantum lower bounds of the DL and related problems in the QGGM. Our main technical tool is the following simulation theorem.
\begin{theorem}\label{thm: simulation_basic}
    Let $\cG$ be a group.  
    Suppose that a generic algorithm $A^\cG$ in the QGGM is given $m$ group elements as input and makes at most $q$ group operations.
    Then there exists a generic algorithm $B^\cG$ in the classical GGM for $\cG$, given the same inputs, which makes at most $2^{(q+1)m}$ group operations to the oracle such that the output distributions of $B^\cG(y)$ and $A^\cG(y)$ are identical for any input $y$.

    Furthermore, if $A^\cG$ has the group operation depth at most $d$, the corresponding algorithm $B$ exists with at most $2^{(d+1)m}$ group operations such that the output distributions are identical.
\end{theorem}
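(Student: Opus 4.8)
The plan is to realize the informal argument of the technical overview: I will build the classical simulator $B^\cG$ by tracking, symbolically, every group element that can ever occupy the table register $\mathbf{T}$, materializing all of them with real (classical) group operations, and then reconstructing $A$'s entire quantum evolution by purely local, unbounded computation. The point is that a generic algorithm --- quantum or classical --- learns about the secret input only through the pattern of equalities among the group elements it can form, so if $B$ can afford to create every element that $A$ could possibly hold and to read off all equalities among them, it can reproduce $A$'s behavior exactly.

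First I would establish the coefficient invariant. Since $\mathbf{T}$ is initialized with $y_1,\dots,y_m$ in positions $1,\dots,m$ and $0$ elsewhere, and each group operation query replaces one entry by an integer combination $x_i+(-1)^b x_j$ of two entries, every entry of $\mathbf{T}$, in each computational-basis branch of nonzero amplitude, has the form $\sum_{\ell=1}^m z_\ell y_\ell$ for an integer vector $(z_1,\dots,z_m)$. I then bound the coefficients by induction on the number of query rounds. Writing $M_r$ for the largest absolute coefficient occurring after $r$ rounds, the entries begin as standard basis vectors, so $M_1\le 1$, and each subsequent round at most doubles the bound because in the parallel-query model a target register is never simultaneously a control register; hence $M_r\le 2^{r-1}$. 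Taking $r=q$ for a sequential algorithm and $r=d$ for a depth-$d$ parallel algorithm, every entry lies in the box $\{\sum_\ell z_\ell y_\ell : |z_\ell|\le 2^{r-1}\}$, which contains at most $(2^{r}+1)^m\le 2^{(r+1)m}$ distinct group elements.

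Next, $B$ materializes this box. Using a Gray-code traversal of the lattice $\{-2^{r-1},\dots,2^{r-1}\}^m$, consecutive coefficient vectors differ by a single $\pm e_\ell$, so each new element is obtained from the previous one by one group operation $T(\cdot)\pm y_\ell$; starting from the already-present $0$, this creates all candidate elements with fewer than $2^{(r+1)m}$ group operations, and the classical table is unbounded so storage is free. Then $B$ issues equality queries --- which cost nothing --- among all materialized elements to learn the exact partition of the box into classes that evaluate to the same element of $\cG$. Because $B$ computed these elements on the real oracle, this is the true partition for the input $y$, and it captures all information about $y$ that is accessible to a generic algorithm.

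Finally, $B$ simulates $A$ locally, representing $A$'s state as a superposition over tuples of coefficient vectors, one per register. The operations on $\mathbf{W},\mathbf{Q}$ and the group-operation unitaries act on this symbolic description with no knowledge of $y$, while equality-query unitaries are applied using the learned partition. The main obstacle --- and the reason the partition is indispensable --- is \emph{interference}: distinct coefficient vectors may evaluate to the same element of $\cG$ and hence label the same computational-basis state of $\mathbf{T}$, so their amplitudes must be combined rather than kept separate. Using the partition, $B$ identifies symbolic configurations that agree register-by-register, adds their amplitudes before taking squared magnitudes, and so computes exactly $A$'s final measurement distribution; when $A$ returns an index $i$, $B$ samples accordingly and outputs the table position of the materialized element indexed by register $i$'s coefficient vector. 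Since $A$'s amplitudes depend on $y$ only through equality outcomes, which $B$ reproduces exactly, the output distributions of $B^\cG(y)$ and $A^\cG(y)$ coincide for every $y$, with $B$ using at most $2^{(q+1)m}$ group operations (respectively $2^{(d+1)m}$ in the depth-bounded case).
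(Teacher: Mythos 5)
Your proof is correct, and it follows the paper's \emph{technical overview} faithfully, but it is a genuinely different argument from the one the paper actually gives for this theorem. The paper's proof is an \emph{online} simulation: $B$ runs $A$'s quantum circuit itself, keeping lazily-sampled random \emph{labels} in the table register in place of group elements; a labeling function $L$ on the polynomial ring $\Z_N[Y_1,\dots,Y_m]$ is extended at each query (using classical group operations plus free equality queries to keep labels consistent with the true equality pattern), and correctness is argued via a hybrid simulator that applies a uniformly random bijection $I:\Z_N\to[N]$ to the table contents, which is just a basis change. Your proof is instead \emph{offline}: materialize the whole coefficient box up front, extract the complete equality partition with free equality queries, then reconstruct $A$'s evolution purely symbolically, merging amplitudes of symbolic branches that the partition identifies as physically equal. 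The counting step (coefficients at most $2^{r-1}$ after $r$ rounds, hence at most $(2^r+1)^m\le 2^{(r+1)m}$ candidate elements) is identical in both proofs. What your route buys: it dispenses with the random-label machinery and the bijection hybrid, and it isolates the real reason the theorem holds --- a generic quantum algorithm's amplitudes depend on $y$ only through the equality partition of the box, and every oracle operation commutes with the quotient map that merges physically equal symbolic configurations; this commutation is the one step you state somewhat informally, and it deserves an explicit line of proof, since it is exactly what licenses deferring all interference bookkeeping to the final measurement. What the paper's route buys: the lazy, query-by-query labeling extends directly to the hybrid and memory-bounded settings later in the paper, where intermediate forced measurements and adaptively chosen classical queries make the set of reachable elements impossible to enumerate in advance, so your upfront materialization would not carry over there. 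One cosmetic slip: a Gray-code (Hamiltonian-path) traversal of the box starting at $0$ need not exist (already for $m=1$, the center of a path graph is not an endpoint of any Hamiltonian path); but you only need a spanning tree of the box rooted at $0$ whose edges are single steps $\pm e_\ell$, which gives the same one-group-operation-per-element count and leaves the bound unchanged.
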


The generic algorithm $B$ may perform quantum or unbounded computation, but the group operations are all done classically. As observed in~\Cref{remark: quantum_alg_in_GGM}, the GGM lower bounds apply to the algorithm $B$.
With this observation in mind, the quantum lower bound of the DL problem in the QGGM is an immediate corollary of~\Cref{thm: simulation_basic}.
\begin{theorem}[Formal version of~\Cref{thm: intro_DL}]\label{thm: QGGM_DL}
    Let $\cG$ be a prime-order cyclic group.
    Any constant-advantage generic quantum algorithm solving the DL problem makes group operations of depth at least $\Omega(\log {|\cG|})$.
    
    More precisely, the following holds. 
    Let $d$ be a positive integer. For a QGGM algorithm $A^\cG$ making quantum group operations of depth $d$ for the DL problem over $\cG$, it holds that
    \[
        \adv_{\sf DL} (A^\cG) = O \left(\frac {2^{4d}} {|\cG|}\right).
    \]
\end{theorem}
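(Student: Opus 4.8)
The plan is to obtain the theorem as an essentially immediate corollary of the simulation theorem (\Cref{thm: simulation_basic}) combined with the classical GGM lower bound (\Cref{thm: GGM_DL}). All of the genuinely hard work lives inside \Cref{thm: simulation_basic}; what remains is to instantiate it at the correct input arity, track the constants in the exponent, and invoke \Cref{remark: quantum_alg_in_GGM} to make sure the classical bound is allowed to constrain the (possibly unbounded) simulator.

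First I would fix the input arity. For the DL problem the algorithm receives the pair $(g, g^x)$, so the number of input group elements is $m = 2$. Let $A^\cG$ be any QGGM algorithm for the DL problem whose group operation \emph{depth} is at most $d$. Applying the depth part of \Cref{thm: simulation_basic} with $m = 2$ produces a generic \emph{classical} GGM algorithm $B^\cG$ that, on the same input, perfectly reproduces the output distribution of $A^\cG$ while issuing at most
\[
Q \;\le\; 2^{(d+1)m} \;=\; 2^{2(d+1)} \;=\; 2^{2d+2}
\]
classical group operation queries. Although $B$ may use unbounded (and even quantum) internal computation, it makes only classical queries, so by \Cref{remark: quantum_alg_in_GGM} the advantage bound of \Cref{thm: GGM_DL} applies to it unchanged.

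Next I would substitute this query count into the classical advantage bound. \Cref{thm: GGM_DL} gives $\adv_{\sf DL}(B^\cG) = O(Q^2/|\cG|)$, and since $A^\cG$ and $B^\cG$ have identical output distributions, $\adv_{\sf DL}(A^\cG) = \adv_{\sf DL}(B^\cG)$. Hence
\[
\adv_{\sf DL}(A^\cG) \;=\; O\!\left(\frac{\left(2^{2d+2}\right)^2}{|\cG|}\right) \;=\; O\!\left(\frac{2^{4d+4}}{|\cG|}\right) \;=\; O\!\left(\frac{2^{4d}}{|\cG|}\right),
\]
the constant factor $2^{4}$ being absorbed into the $O(\cdot)$. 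This is exactly the quantitative claim. The asymptotic depth bound then follows by contraposition: if $\adv_{\sf DL}(A^\cG) = \Omega(1)$, the inequality forces $2^{4d} = \Omega(|\cG|)$, i.e.\ $4d \ge \log|\cG| - O(1)$, so $d = \Omega(\log|\cG|)$.

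I expect no real obstacle in this particular derivation, since the substantive content has been offloaded to \Cref{thm: simulation_basic}. The two steps I would watch most carefully are: (i) using the \emph{depth} version of the simulation theorem with $m$ correctly set to $2$, so that the exponent is $2(d+1)$ rather than a total-query count; and (ii) the appeal to \Cref{remark: quantum_alg_in_GGM}, which is what makes the reduction non-vacuous — without the observation that the classical query lower bound is insensitive to $B$'s (unbounded) computational power, the simulator $B$ would not be a legitimate target for \Cref{thm: GGM_DL}.
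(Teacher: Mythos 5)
Your proposal is correct and follows exactly the paper's own proof: set $m=2$ for the DL instance $(g,g^x)$, apply the depth version of \Cref{thm: simulation_basic} to get a classical simulator with $Q = 2^{2(d+1)}$ queries, invoke \Cref{remark: quantum_alg_in_GGM} so that \Cref{thm: GGM_DL} applies to the unbounded simulator, and conclude $\adv_{\sf DL}(A^\cG) = O(Q^2/|\cG|) = O(2^{4d}/|\cG|)$. The only difference is that you spell out the constant-tracking and the contrapositive step explicitly, which the paper leaves implicit.
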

\begin{proof}
    A DL instance consists of two group elements $(g,g^x)$, thus $m=2$ for the DL problem.
    For $m=2$, any generic quantum algorithm $A^\cG$ with a $d$ query depth can be simulated by a generic classical algorithm $B^\cG$ with $Q=2^{2(d+1)}$ queries by~\Cref{thm: simulation_basic}. The advantage of $B^\cG$ is bounded by $O(Q^2/|\cG|)$ due to~\Cref{thm: GGM_DL}, which shows the desired result.
    \ifnum\llncs=1
    \qed
    \fi
\end{proof}

The quantum CDH/DDH lower bound can be proven similarly. 
\begin{theorem}\label{thm: QGGM_CDH}
    Let $\cG$ be a prime-order cyclic group.
    Any constant-advantage generic quantum algorithm solving the CDH/DDH problem makes group operations of depth at least $\Omega(\log {|\cG|})$.
    
    More precisely, the following holds. 
    Let $d$ be a positive integer.
    For a generic quantum algorithm $A^\cG$ making quantum group operations of depth $d$ and for any $* \in \{{\sf CDH,DDH}\}$, it holds that
    \[
    \adv_{*} (A^\cG) = O \left(\frac {2^{8d}} {|\cG|}\right).
    \]
\end{theorem}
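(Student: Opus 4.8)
The plan is to obtain \Cref{thm: QGGM_CDH} as a direct corollary of the simulation theorem \Cref{thm: simulation_basic} together with the classical GGM lower bound \Cref{thm: GGM_DL}, following verbatim the structure of the proof of \Cref{thm: QGGM_DL}. The only problem-specific quantity is the number $m$ of group elements handed to the algorithm at initialization. For the CDH problem the instance is $(g,g^x,g^y)$, so the table starts with $m=3$ nonzero entries; for the DDH problem the instance is $(g,g^x,g^y,g^z)$ with $z\in\{xy,r\}$, so $m=4$.

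First I would invoke \Cref{thm: simulation_basic}: any generic quantum algorithm $A^\cG$ of group operation depth $d$ can be simulated by a generic \emph{classical} algorithm $B^\cG$ in the GGM making at most $Q=2^{(d+1)m}$ group operations, and crucially with an \emph{identical output distribution} on every input. Since the simulation preserves the distribution exactly, $\adv_*(A^\cG)=\adv_*(B^\cG)$ for $*\in\{{\sf CDH,DDH}\}$. Next I would apply the classical advantage bound $\adv_*(B^\cG)=O(Q^2/|\cG|)$ from \Cref{thm: GGM_DL}, which by \Cref{remark: quantum_alg_in_GGM} applies to $B^\cG$ even though it may run in unbounded time, as it makes only $Q$ classical queries. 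For DDH the count is $Q=2^{4(d+1)}$, hence $Q^2=2^{8(d+1)}=256\cdot 2^{8d}=O(2^{8d})$, giving $\adv_{\sf DDH}(A^\cG)=O(2^{8d}/|\cG|)$. For CDH the count is only $Q=2^{3(d+1)}$, yielding the even smaller $O(2^{6d}/|\cG|)$, which is dominated by $O(2^{8d}/|\cG|)$; thus the single bound in the statement covers both cases. The asymptotic depth lower bound $\Omega(\log|\cG|)$ then follows by setting the advantage to a constant and solving for $d$.

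The one point that needs care — rather than a genuine obstacle — is checking that the ``identical output distribution'' guarantee of \Cref{thm: simulation_basic} really captures the CDH success event, since there the algorithm's answer is a \emph{group element} returned as a table index $g^{x_{i^*}}$ rather than a classical string. Because $B^\cG$ reconstructs the entire table faithfully, the joint distribution of the returned index and the table contents — and hence of the decoded element $g^{x_{i^*}}$, and in particular of the event $g^{x_{i^*}}=g^{xy}$ — coincides with that of $A^\cG$, so $\adv_{\sf CDH}$ is preserved. For DDH the output is a plain decision bit and preservation is immediate. No other step differs from the DL argument, so I expect the proof to be essentially mechanical once the values $m=3$ and $m=4$ are fixed and the exponent arithmetic is carried out.
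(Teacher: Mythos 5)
Your proposal is correct and follows exactly the route the paper intends: the paper proves this theorem by simply noting it "can be proven similarly" to \Cref{thm: QGGM_DL}, i.e., by applying \Cref{thm: simulation_basic} with the problem-specific value of $m$ and then invoking \Cref{thm: GGM_DL} (which already covers CDH/DDH) via \Cref{remark: quantum_alg_in_GGM}. Your instantiation $m=3$ for CDH and $m=4$ for DDH, the resulting exponents $2^{6d}$ and $2^{8d}$ (with the stated bound driven by the DDH case), and the check that the simulation preserves the success event even when the output is a group element are all exactly the details the paper leaves implicit.
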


\subsection{Proof of~\Cref{thm: simulation_basic}}\label{subsec: proof simul basic}
We return to the proof of the main theorem. 
The idea of the proof is to \emph{simulate} the generic algorithm in the QGGM by using classical group operations in the GGM. 
The simulation algorithm exhaustively computes all branches of the original algorithm. 
This may take an unbounded time, but the group operation complexity is bounded, and only exponentially larger than that of $A$, which suffices for our purpose; again, the classical GGM lower bounds only consider the group operation complexity.

\begin{proof}[{\ifnum\llncs=0 Proof \fi }of~\Cref{thm: simulation_basic}]
Let $A$ be a parallel-query generic algorithm in the QGGM with group operation depth $d$ and query width $K$ that takes $m$ group elements $y_1,...,y_{m}$ as input. Without loss of generality, we assume that $y_1,...,y_{m}$ are not $0$.
Let $N=|\cG|$.
Then we construct a generic algorithm $B$ in the GGM that simulates $A$ as follows.

\paragraph{Initialization.} 
For the simulation, the algorithm $B$ makes use of a ``labeling function''
\[L:\mathbb Z_N[Y_1,...,Y_{m}]\rightarrow \mathbb [N] \cup \{\bot\},\] 
which is gradually updated during the simulation.\footnote{$L$ will take non-$\bot$ values only on polynomials of degree at most $1$ throughout the simulation.} Here, $\mathbb Z_N[Y_1,...,Y_{m}]$ denotes the $m$-variate polynomial ring over $\mathbb Z_N$ with indeterminates $Y_1,...,Y_{m}$. 
Also, we note that $[N]\cup \{\bot\}$ denotes the set of labels, and we will define the binary operation $\pm$ between labels, which should be distinguished from the addition/subtraction of integers.

Intuitively, when we have $L(f)=\ell\neq \bot$, it should be understood that we give a ``label'' $\ell\in \mathbb [N]$ to $f(y_1,...,y_{m})\in \mathbb Z_N$. 
For this to be well-defined, we always make sure that $L(f)=L(g)$ if and only if $f(y_1,...,y_{m})=g(y_1,...,y_{m})$ for any $f,g$ on which $L$ is defined (i.e., takes a non-$\bot$ value). 

The algorithm $B$ initializes $L$ as follows: 
\begin{enumerate}
\item Set $L(0)\leftarrow [N]$.\footnote{Recall this means $L(0)$ is sampled from $[N]$ uniformly at random.}
\item For $i=1,...,m$, uniformly set $L(Y_i)\in [N]$ under the constraint that 
$L(Y_i)\neq L(0)$ and 
$L(Y_i)=L(Y_j)$ if and only if $y_i=y_j$. 
Note that $B$ can do this because it can check if $y_i=y_j$ by making an equality query to the classical group oracle. 
\item Set $L(f):=\bot$ for all $f\notin  \{0,Y_1,...,Y_{m}\}$.\footnote{There are infinitely many elements in $\mathbb Z_N[Y_1,...,Y_{m}]$, but $B$ does not explicitly record that $L$ is defined to be $\bot$ on those inputs. The value of $L$ is understood to be $\bot$ unless it is explicitly defined to be a non-$\bot$ value.} 
\end{enumerate}
$B$ also defines a set $S\subseteq  \mathbb Z_N[Y_1,...,Y_{m}]$ of polynomials on which the value of $L$ is defined (i.e., not $\bot$). That is, $S:=\{0\}\cup \{Y_i\}_{i\in \{1,...,m\}}$. The set $S$ will be updated along with $L$ to ensure that it is always the set consisting of polynomials on which the value of $L$ is defined.

$B$ creates the following state as a simulation of the initial state for $A$:
\[
\ket{0...0}_{\mathbf{W},\mathbf{Q}_1,...,\mathbf{Q}_K}\otimes \ket{L(Y_1),...,L(Y_{m}),L(0),...,L(0)}_{\mathbf{T}}.
\]
During the simulation, we keep the invariance that for any $\ell\in \mathbb Z_n$ that appears in the register $T$ of any branch with non-zero amplitude, there is $f \in S$ such that $\ell=L(f)$. 
This is satisfied at this point since $S=\{0\}\cup \{Y_i\}_{i\in \{1,...,m\}}$. 

\paragraph{Local operation.}
When $A$ applies local operation on its registers $\mathbf{W},\mathbf{Q}_1,...,\mathbf{Q}_K$, $B$ also applies the same operation.

\paragraph{Parallel group operation query.}
Suppose $A$ makes a parallel group operation query. 
Let $S_{\mathsf{pre}}:=S$. (We introduce $S_{\mathsf{pre}}$ to record the set $S$ at the point of making the query since we will update the set $S$ during the simulation below.) 
Then $B$ does the following. Informally, the first step updates $L$ and $S$ to include the group elements potentially appearing after the query, and the second step defines the group operations over the labels. $B$ simulates the parallel group operation of $A$ in the last step.
\begin{enumerate}
    \item For each pair $(f,g)\in S_{\mathsf{pre}}^2$  (in arbitrary order), do the following:
    \begin{enumerate}
        \item Check if there is any $h\in S$ such that 
        \[h(y_1,...,y_{m})=f(y_1,...,y_{m})+ g(y_1,...,y_{m}).\]
        Note that $B$ can check this by making one group operation query for the RHS and many equality queries to the classical group oracle because $f(y_1,...,y_{m})$,  $g(y_1,...,y_{m})$, and $h(y_1,...,y_{m})$ have been already generated in the table of the classical group oracle.
        \begin{itemize}
            \item If there exists such  $h$, it sets
            \[L(f+ g):=L(h).\]
            Note that this is well-defined since the RHS does not depend on the choice of $h$.\footnote{As already mentioned, we always ensure that 
            $L(h)=L(h')$ if and only if $h(y_1,...,y_{m})=h'(y_1,...,y_{m})$ for all $h,h'\in S$. } 
            \item Otherwise, it uniformly sets $L(f+g)\leftarrow \mathbb [N]$ under the constraint that
            $L(f+g) \neq L(h)$ for all $h\in S$. 
        \end{itemize} 
        Then update $S\leftarrow S\cup \{f+g\}$. 
        \item Similarly define $L(f-g)$ and update $S\leftarrow S\cup \{f-g\}$.
    \end{enumerate}
    \item For labels $(\ell,\ell') \in \mathbb [N]^2$, define $\ell\pm \ell'$ as follows:\footnote{This is the binary operation between two labels, NOT the sum/subtraction of $\ell$ and $\ell'$ as integers or modulus $N$.}
    \begin{itemize}
        \item Check if there is  $(f,g)\in S_{\mathsf{pre}}^2$ such that $\ell=L(f)$ and $\ell'=L(g)$.
        \begin{itemize}
            \item If they exist, define  
            \[\ell\pm \ell':=L(f\pm g).\]
            Note that this is well-defined since the RHS does not depend on the choice of $(f,g)$.\footnote{Suppose that $L(f)=L(f')$ and $L(g)=L(g')$.
            Then on input $(y_1,...,y_{m})$, $f,f'$ and $g,g'$ have the same image, respectively, which implies that $f\pm g$ and $f'\pm g$ also have the same image under that input. 
            Thus, $L(f\pm g)=L(f'\pm g')$.
            }
            \item Otherwise, define $\ell \pm \ell':= \bot$. 
        \end{itemize}
    \end{itemize}
    \item Then $B$ simulates the group operation oracle by using the above defined operations for labels. 
    That is, it does the following:
    For each $k\in [K]$, let $\widetilde{O}_{\mathbf{Q}_k,\mathbf{T}}$ be the unitary that works as follows:
    \ifnum\llncs=1 \small\fi
    \begin{equation*}
        \ket{b_k,i_k,j_k}_{\mathbf{Q}_k} \ket{...,\ell_{i_k},...,\ell_{j_k},...}_\mathbf{T} \mapsto
        \ket{b_k,i_k,j_k}_{\mathbf{Q}_k} \ket{...,\ell_{i_k} + (-1)^{b_k} \ell_{j_k},...,\ell_{j_k},...}_{\mathbf{T}}
    \end{equation*}
    \ifnum\llncs=1 \normalsize\fi
    if $i_k\neq j_k$ and otherwise it does nothing. 
    Then apply the following operation on  $\mathbf{Q}_1,...,\mathbf{Q}_K$ and  $\mathbf{T}$:
    \ifnum\llncs=1 \small\fi
    \begin{equation*}
        \bigotimes_{k\in [K]} \ket{b_k,i_k,j_k}_{\mathbf{Q}_k} \otimes \ket{\ell_1,...,\ell_s}_\mathbf{T} \mapsto
        \prod_{k\in [K]} \widetilde{O}_{\mathbf{Q}_k,\mathbf{T}} \bigotimes_{k\in [K]} \ket{b_k,i_k,j_k}_{\mathbf{Q}_k} \otimes \ket{\ell_1,...,\ell_s}_\mathbf{T} 
    \end{equation*}
    \ifnum\llncs=1 \normalsize\fi
    if $i_k\notin \{i_{k'}\}_{k'\in [K]\setminus \{k\}}\cup \{j_{k'}\}_{k'\in [K]}$ for all $k\in[K]$ and otherwise it does nothing.
\end{enumerate}

\paragraph{Parallel equality query.}
When $A$ makes a parallel equality query, $B$ applies the following unitary:
 \[
         \bigotimes_{k\in [K]} \ket{b_k,i_k,j_k}_{\mathbf{Q}_k} \otimes \ket{\ell_1,...,\ell_s}_\mathbf{T} \mapsto\bigotimes_{k\in [K]} \ket{b_k\oplus t_k ,i_k,j_k}_{\mathbf{Q}_k} \otimes \ket{\ell_1,...,\ell_s}_\mathbf{T} 
    \]
    where $t_k=1$ if $\ell_{i_k}=\ell_{j_k}$ and $t_k=0$ otherwise.

\paragraph{Finalization.}
If $A$ outputs a classical string, $B$ outputs the same string. 
If $A$ outputs the special symbol $\mathsf{group}$ and an integer $i$, $B$ measures $\mathbf{T}$. 
Let $\ell\in [N]$ be the $i$-th element in the measurement outcome.
$B$ finds $f\in S$ such that $L(f)=\ell$. 
Then $B$ finds the index $i'$ such that the $i'$-th element stores $f(y_1,...,y_{m})$ in the table kept by its own classical group oracle. 
(Such $f$ and $i'$ must exist by the definition of the simulation.) 
Then $B$ outputs $\mathsf{group}$ and $i'$. 

\smallskip
The above completes the description of $B$. It is easy to see that $B$ perfectly simulates $A$. 
To see this, we consider a hybrid simulator $B'$ that has an additional capability to directly see $y_1,...,y_{m}$ and works as follows: $B'$ simulates the group operation oracle for $A$ as in the QGGM except that it first randomly chooses a random bijection $I:\mathbb Z_N \rightarrow \mathbb [N]$ and records $I(z)$ instead of $z$ in $T$ for any group element $z$. When $A$ makes a group operation query, $B'$ applies $I^{-1}$ to the relevant entries, applies the group operation, and then applies $I$ again. It is easy to see that $B'$ perfectly simulates $A$ because the random bijection $I$ just induces a basis change in $T$. Moreover, the ways of simulation by $B$ and $B'$ are perfectly indistinguishable from the view of $A$ because we can regard $B$ as doing the same as $B'$ except that it samples the bijection $I$ via lazy sampling through $L$.   

We count the number of group operation queries made by $B$. To do so, we observe that $B$ only generates group elements that can be generated by depth-$d$ applications of the group operation. In particular, the group elements generated after the first query are of the form $z_1 y_1 + ... + z_{m} y_{m}$ for $|z_i|\le 1$, because quantum group operations for $i=j$ is ignored by definition. Inductively, we can show that it only needs to generate group elements $z$ of the form
\[
    z=z_1 y_1 + ... + z_{m} y_{m}
\]
where $|z_j|\le 2^{d-1}$ for all $j=1, ..., m$. These group elements can be generated in the table of the classical group oracle by making $(2^{d}+1)^m \le 2^{m(d+1)}$ group operation queries.

    Combining the above arguments, we conclude that the classical GGM algorithm $B$ can perfectly simulate the algorithm $A$ by making at most $2^{(d+1)m}$ classical group operation queries.
    \ifnum\llncs=1 \qed \fi
\end{proof}

\section{Hybrid Quantum-Classical Algorithms}\label{sec: hybrid}

This section presents the lower bounds of generic hybrid quantum-classical algorithms for group-theoretic problems.
In~\Cref{subsec: hybrid_model}, we formalize the model of generic hybrid algorithms. 
\ifnum\llncs=1
The lower bounds and the hybrid simulation theorems are presented in~\Cref{subsec: hybrid_lowerbounds}. The proofs for the hybrid simulation theorem 
\ifnum\noappendix=0
are deferred to~\Cref{subsec: hybrid_simulation}.
\else
can be found in the full version.
\fi
\else
The lower bounds are presented in~\Cref{subsec: hybrid_lowerbounds} using the hybrid simulation theorem, which is proved in~\Cref{subsec: hybrid_simulation}.
\fi

\subsection{The Model of Hybrid Algorithms}\label{subsec: hybrid_model}
We establish the model of generic hybrid algorithms in this section. 
First, we define a \emph{classical group operation} $O^C_{\mathbf{Q,T}}$ in the QGGM, which is illustrated in~\Cref{fig: classical operation}, as follows.
\begin{enumerate}
    \item Measure the query register $\mathbf Q$.
    \item If the measurement outcome is $b,i,j$, measure the $i$-th and $j$-th entries of the register $\mathbf T$.
    \item Apply $O_{\mathbf Q,\mathbf T}.$
\end{enumerate}
Intuitively, when we apply the classical group operations, all the relevant registers contain classical information. A non-classical group operation query is called \emph{quantum}.

\begin{remark}
The classical group operation in the QGGM differs from the group operation in the GGM because the group oracles in the two models have different interfaces. 
In this section, an algorithm $A$ (and its components to be described below) is always a generic algorithm 
in the QGGM, making classical or quantum group operation queries in the QGGM. 
On the other hand, an algorithm $B$ is always an algorithm in the GGM, making group operation queries in the GGM.
\end{remark}

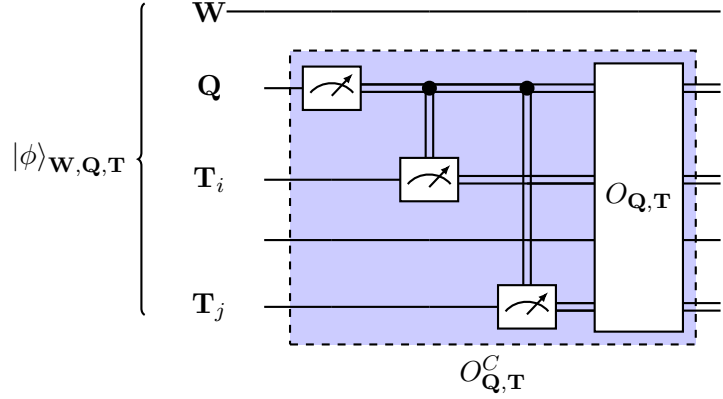
\begin{figure}[ht]
    \centering
    \begin{quantikz}
        \lstick[wires=5]{$\ket{\phi}_{\mathbf{W,Q,T}}$}
        &{\mathbf W}&\qw&\qw&\qw&\qw&\qw&\qw
        \\
        &{\mathbf Q}
        &&\meter{}
        \gategroup[wires=4,steps=4,style={dashed,fill=blue!20, inner sep=1pt},background,label style={label position=below,anchor=north,yshift=-0.2cm}]{$O^C_{\mathbf{Q,T}}$}
        &\cwbend{1}\cw&\cwbend{3}\cw&\gate[4,cwires={1,2,4}]{O_{\mathbf{Q,T}}}&\cw\\
        &{\mathbf T_i}&&\qw&\meter{}&\cw&\cw&\cw\\
        &&&\qw&\qw&\qw&\qw&\qw\\
        &{\mathbf T_j}&&\qw&\qw&\meter{}&\cw&\cw
    \end{quantikz}
    \caption{The classical group operation $O^C_{\mathbf{Q,T}}$: The single-line wires stand for quantum wires, while the double-line wires are for classical wires. 
    $\mathbf T_i, \mathbf T_j$ denote the $i$-th and $j$-th entries of $\mathbf T$.
    We assume that the measurement outcome of $\mathbf Q$ indicates the $i$-th and $j$-th entries in this diagram.
    Recall that $O_{\mathbf{Q,T}}$ is a group operation query.
    }
    \label{fig: classical operation}
\end{figure}

In our generic hybrid model of computation, a hybrid algorithm is defined by a generic algorithm in the QGGM that by itself performs only classical group operations, but has access to a \emph{quantum subroutine}, which is a generic quantum algorithm with a bounded number of quantum group operations but making an arbitrary number of classical group operations.


\begin{figure}[ht]
    \centering
    \footnotesize    
    \ifnum\llncs=1 \hspace*{-2cm} \fi
    \begin{quantikz}
        &\lstick{$\mathbf W$}&&\qw
        \gategroup[wires=5,steps=3,style={dashed,rounded corners,fill=blue!20, inner xsep=2pt},background]{$C_1$}
        &\gate[2]{V}&\qw
        &\qw
        & \gate[2]{V}
        \gategroup[wires=5,steps=4,style={dashed,rounded corners,fill=blue!20, inner xsep=2pt},background]{$C_2$}
        &\qw&\qw&\gate[2]{V}
        &\qw&\cdots&
        &\qw&\meter{}
        \\
        &\lstick{$\mathbf Q$}&&\gate[4]{O^C_{\mathbf Q,\mathbf T}}
        &\qw
        &\gate[4]{O^C_{\mathbf Q,\mathbf T}}
        &\gate[4]{O_1}
        &
        &\gate[4]{O^C_{\mathbf Q,\mathbf T}}
        &\gate[4]{O^C_{\mathbf Q,\mathbf T}}
        &
        &\qw&\cdots&
        &\gate[4]{O_q}&\meter{}
        \\
        &\lstick[wires=3]{$\mathbf T$}&&                           &\qw&
        &
        &\qw                           &&
        &\qw
        &\qw&\cdots&
        &&\meter{}
        \\
        &&&                        &\qw&
        &
        &\qw                           &&
        &\qw
        &\qw&\cdots&
        &&\meter{}
        \\
        &&&                        &\qw&
        &
        &\qw                           &&
        &\qw
        &\qw&\cdots&
        &&\meter{}
    \end{quantikz}
    \caption{The behavior of the quantum subroutine: $O_1,...,O_q$ denote the unitary operation that includes a single quantum group operation, and $C_0,....,C_{q-1}$ denote quantum algorithms that may include multiple classical group operations but no quantum group operations. 
    $V$ denotes an arbitrary quantum algorithm.
    All registers are measured after $O_q$ on a computational basis. 
    }
    \label{fig: quantum subroutine}
\end{figure}
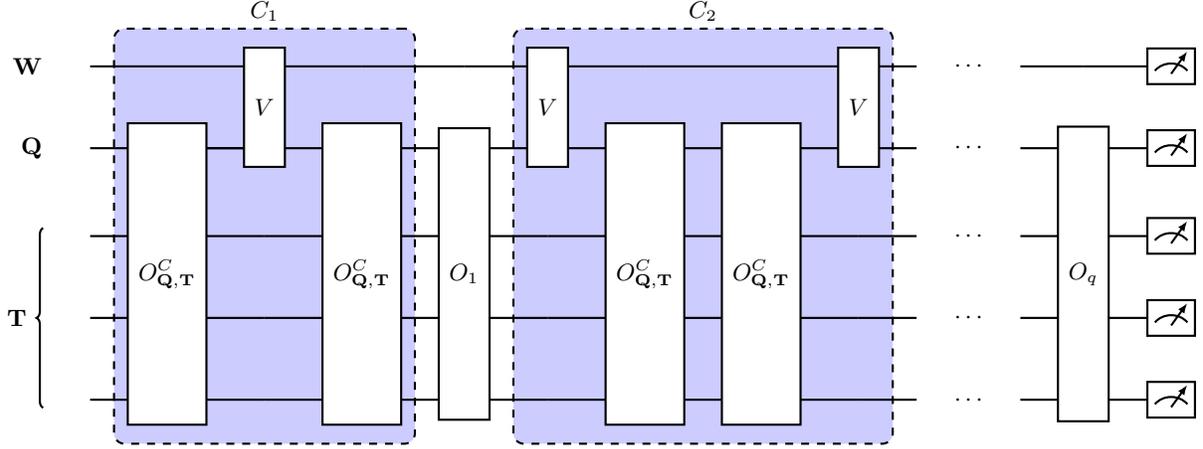

A quantum subroutine formalizes a quantum algorithm with a limited coherence time.
More precisely, we define a \emph{$q$-query quantum subroutine} by a generic quantum algorithm in the QGGM with at most $q$ \emph{quantum} group operations. After the $q$-th quantum group operation, it is forced to measure all registers on a computational basis. We call this measurement as the \emph{forced measurement}. On the other hand, the quantum subroutine can make an arbitrary number of \emph{classical} group operations. 
The total number of group operations is the summation of classical and quantum group operations. 
The quantum subroutine can perform an arbitrary quantum map over the registers $(\mathbf{W,Q})$ in the middle of its execution.
We illustrate the rough behavior of the table register in a quantum subroutine as in~\Cref{fig: quantum subroutine}.
As in the diagram, a quantum subroutine can be described by an alternating sequence of generic algorithms $(C_0,O_1,...,C_{q-1},O_q)$, where $C_j$ is a generic algorithm in the QGGM that may include multiple classical group operations and $O_j$ is a generic algorithm in the QGGM that includes a single quantum group operation for each $j$. Let $c_j$ be the number of classical group operations $C_j$ made.

Finally, a generic hybrid algorithm $A$ is specified by a tuple $(U_1,...,U_T,A_{T+1})$ of $T$ quantum subroutines and a follow-up generic algorithm that is connected as in~\Cref{fig: hybrid algorithm}. 
Here, $A_{T+1}$ is a generic algorithm that only makes classical group operations, and $U_j$ is a $q$-query quantum subroutine for each $j$.
Again, the generic hybrid algorithm can perform an arbitrary quantum map over $(\mathbf{Q,W})$ in the middle.
Recall that the quantum subroutine makes the forced measurements for all registers including $T$ on a computational basis, and the outcome of forced measurements will be given to the next subroutine (or $A_{T+1}$) as input. 


\begin{figure}[ht]
    \centering
    \small
    \ifnum\llncs=1 \hspace*{-2cm} \fi
    \begin{quantikz}
        \lstick{$\mathbf W$}&\gate[2]{V}&\qw&\gate[2]{V}&\qw&
        \qw&\cdots&&\qw&\qw
        &\gate[2]{V}&\qw&\gate[2]{V}&\qw\arrow[r]&
        \\
        \lstick{$\mathbf Q$}&&\gate[3]{U_1}\hphantom{wide}&\qw&\gate[3]{U_2}\hphantom{wide}&
        \qw&\cdots&&\qw&
        \gate[3]{U_T}\hphantom{wide}&&\gate[3]{A_{T+1}}\hphantom{wi}&&\qw\arrow[r]&
        \\
        \lstick[2]{$\mathbf T$}&\qw&&\qw&&
        \qw&\cdots&&\qw&
        &\qw&&\qw&\qw\arrow[r]&
        \\
        &\qw&&\qw&&
        \qw&\cdots&&\qw&
        &\qw&&\qw&\qw\arrow[r]&
    \end{quantikz}
    \caption{The generic hybrid algorithm with $T$ invocations of quantum subroutines: $U_1,...,U_T$ are quantum subroutines and include the measurement at the end. $A_{T+1}$ is a generic algorithm with classical group operations. $V$ denotes an arbitrary quantum algorithm.}
    \label{fig: hybrid algorithm}
\end{figure}
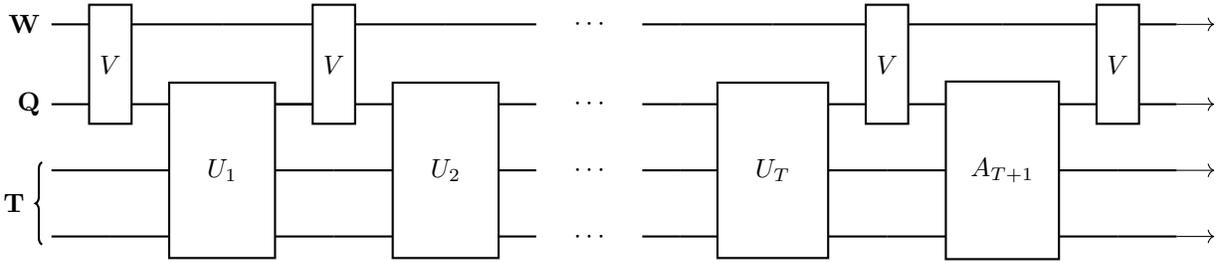

\paragraph{Parallel-query generic hybrid algorithms.}
We also define \emph{parallel-query} generic hybrid algorithm in the QGGM, by allowing quantum subroutines to be parallel-query generic algorithms in the QGGM. 
While a parallel classical group operation can be defined naturally, we only consider a classical group operation as defined above, which makes the simulation easier. Concretely, we separately use the parallel-query group operation oracle and the classical group operation oracle. The total number of queries is the summation of classical group operations and the query width $K$ 
times the number of parallel-query quantum group operations.

A $d$-depth parallel quantum subroutine in the QGGM is defined by a parallel-query generic quantum algorithm in the QGGM such that the number of parallel quantum group operation queries is bounded by $d$. Again, the quantum subroutines can make intermediate classical queries, and can apply any quantum map on $(\mathbf{Q,W})$. Similarly with~\Cref{fig: quantum subroutine}, the $d$-depth parallel quantum subroutine can be described by a sequence of algorithms $(C_0,O_1,...,C_{d-1},O_d)$, where $C_j$ is a generic algorithm with $c_j$ classical group operations and $O_j$ is a generic quantum algorithm with a single parallel quantum group operation. The subroutine is forced to measure its all registers after $O_d$.

We characterize a parallel-query generic hybrid algorithm by a sequence of generic algorithms $(U_1,...,U_T,A_{T+1})$ where $U_j$ is a parallel-query generic quantum algorithm and $A_{T+1}$ is a generic algorithm with only classical group operations.

\begin{remark}
    We assume that the algorithm a priori fixes the sequence of oracle queries to classical and quantum group operations. This means that the algorithm cannot decide which oracle to call depending on its (classical) memory. As shown in~\cite{DFH22}, our result holds for the adaptive algorithm that chooses which oracle to query based on its memory with a slightly worse bound.
\end{remark}

\begin{remark}
This model of hybrid algorithms embraces a large class of hybrid algorithms considered in the literature as long as the subroutines have a bounded depth or number of quantum queries.
For example, both $d$-CQ and $d$-QC schemes in~\cite{CCL23} are included in our model. Even the algorithms in a higher hierarchy like ${\sf BPP}^{{\sf BQNC}^{\sf BPP}}$ (which is advocated in~\cite{ACCGSW23} as a proper model of hybrid algorithms) are described in this model, provided that the query number/depth bounds hold. 
In particular, the $d$-depth quantum subroutine can be interpreted as a $d$-QC scheme.
\end{remark}



\subsection{Lower Bounds for Hybrid Algorithms}\label{subsec: hybrid_lowerbounds}
This section presents two types of quantum query lower bounds of the DL problem against hybrid algorithms: The generic hybrid algorithm with $q$-query and $d$-depth quantum subroutines. 
We begin with the following simulation theorem for hybrid algorithms, 
\ifnum\llncs=1
\ifnum\noappendix=0
whose proofs are deferred to~\Cref{subsec: hybrid_simulation}.
\else
whose proofs can be found in the full version.
\fi
\else
whose proofs are deferred to the end of this section.
\fi
\begin{theorem}\label{thm: simulation_hybrid}
    Let $\cG$ be a group.
    Suppose that a generic hybrid algorithm $A^\cG$, taking $m$ group elements as inputs, makes at most $Q$ group operation queries (including both classical and quantum).
    \begin{itemize}
        \item If $A^\cG$ invokes $q$-query quantum subroutines $T$ times, then there exists a (randomized) classical GGM algorithm $B^\cG$ that perfectly simulates $A^\cG$ with 
        \[Q+T\cdot2^{q+1}(m+Q+1)^{2q}\]
        classical group operations.
        \item If $A^\cG$ invokes $d$-depth quantum subroutines $T$ times, then there exists a (randomized) classical GGM algorithm $B^\cG$ that perfectly simulates $A^\cG$ with 
        \[Q+T\cdot2^{2^d}(m+Q+1)^{2^d}\]
        classical group operations.
    \end{itemize}
\end{theorem}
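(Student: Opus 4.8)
The plan is to build the classical simulator $B^\cG$ by composing the per-subroutine simulations, exactly mirroring the accounting carried out in the technical overview. I would process the generic hybrid algorithm $A^\cG = (U_1,\dots,U_T,A_{T+1})$ subroutine by subroutine, maintaining the invariant that at the start of $U_i$ the table register $\mathbf{T}$ holds a \emph{classical} state (guaranteed by the forced measurement that ends the previous subroutine, or by the classical input for $U_1$). Let $m_i$ denote the number of non-zero elements in $\mathbf{T}$ when $U_i$ begins, $Q_i$ the number of group operations made by $U_i$, and $c$ the number made by the final classical $A_{T+1}$. The crucial structural fact is that a single group operation --- classical or quantum --- adds at most one new element per computational branch of $\mathbf{T}$, from which $m_i \le m + Q_1 + \dots + Q_{i-1}$ follows immediately, and hence $m_i + Q_i \le m + Q$ since $Q_1 + \dots + Q_T + c \le Q$.

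The heart of the argument is the per-subroutine bound, which I would prove by the inductive tracking of the branch-wise support already sketched. For the \emph{query-bounded} case I would write $U_i = (C_{i,0}, O_{i,1}, \dots, C_{i,q-1}, O_{i,q})$, simulate the classical blocks $C_{i,j}$ directly (costing at most $Q_i$ classical queries total, deferred to the end of the count), and show inductively that after $O_{i,j}$ the register $\mathbf{T}$ is a superposition over at most $2^j (m_i + Q_i + 1)^{2j}$ branches, each differing from its parent by at most one freshly generated element. The inductive step uses that each branch supports at most $(m_i+Q_i+1)$ elements including $0$, so a non-parallel group operation --- which chooses among $|S|$ values for each of the two indices, a sign bit, and writes to one of the targets --- produces at most $2|S|^2$ distinct children. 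Summing the simulation cost over $j \in [q]$ gives a per-subroutine cost of at most $Q_i + 2^{q+1}(m_i+Q_i+1)^{2q}$. The \emph{depth-bounded} case is identical in spirit but with $O_{i,j}$ a \emph{parallel} query: here the support can \emph{square} at each layer, so defining $S_{i,j}$ as the branch-wise support one obtains the recursion $|S_{i,0}| \le m_i + 1 + c_{i,0}$, $|S_{i,j}| \le 2|S_{i,j-1}|^2 + c_{i,j}$, which unrolls to $|S_{i,d}| \le 2^{2^d}(m_i + Q_i + 1)^{2^d}$, yielding a per-subroutine cost of the same form with exponent $2^d$.

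Summing the per-subroutine costs across $i \in [T]$ and adding the $c$ classical queries of $A_{T+1}$, and bounding every $(m_i + Q_i + 1)$ by $(m + Q + 1)$ via the observation above, gives the two claimed totals $Q + T \cdot 2^{q+1}(m+Q+1)^{2q}$ and $Q + T \cdot 2^{2^d}(m+Q+1)^{2^d}$. The simulation itself would reuse the lazy-labeling machinery from the proof of \Cref{thm: simulation_basic}: $B$ maintains a labeling function $L$ on the affine polynomials in the inputs, samples consistent labels via equality queries to its own classical oracle, and replays $A$'s quantum and classical operations on the labels. The forced measurement is simulated honestly by measuring the simulated $\mathbf{T}$, which collapses to a single branch and re-establishes the classical invariant for the next subroutine; correctness follows from the same random-bijection/basis-change argument as in \Cref{thm: simulation_basic}.

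I expect the main obstacle to be the careful branch-wise counting across the forced measurements, specifically verifying that ``one new element per branch per operation'' is preserved when the quantum register is in superposition and that the number of distinct branches --- not merely the support within a branch --- is what must be multiplied through the layers. In the parallel (depth) case this is where the squaring, rather than mere doubling, of the support arises, and one must confirm that the parallel-query well-formedness condition (distinct write targets, no write-then-control conflicts) does not create additional branches beyond the $2|S|^2$ bound per fan-out. Handling the off-by-one bookkeeping of the ``$+1$'' for the zero element and ensuring the classical blocks $C_{i,j}$ are charged exactly once against $Q_i$ are the remaining routine-but-delicate points.
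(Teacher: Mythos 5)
Your proposal is correct and follows essentially the same route as the paper: decompose $A^\cG$ into per-subroutine simulations re-using the lazy-labeling machinery of \Cref{thm: simulation_basic}, bound each $q$-query subroutine by branch-wise tracking (which the paper formalizes as a rooted tree whose leaves are the nontrivial tables, yielding the $2|S|^2$ fan-out and the $2^j(m_i+Q_i+1)^{2j}$ branch count) and each $d$-depth subroutine by the recursion $|S_{i,j}|\le 2|S_{i,j-1}|^2+c_{i,j}$, then compose across forced measurements using $m_i\le m+Q_1+\cdots+Q_{i-1}$. The obstacles you flag at the end (branch-vs-support counting, one-new-element-per-branch, the well-formedness condition for parallel queries) are exactly the points the paper's tree structure and its Claim on nontrivial tables are built to handle, so your plan fills in to the paper's proof with no substantive deviation.
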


As corollaries, we prove the lower bounds of the hybrid algorithm.
\begin{theorem}[Formal version of~\Cref{thm:intro_hybrid}]\label{thm: hybrid DL and more}
    Let $\cG$ be a prime-order cyclic group.
    Any constant-advantage generic hybrid algorithm solving the DL/CDH/DDH problems with $Q$ group operations, including both classical and quantum. If $Q=O(\poly\log |\cG|)$, it must make $\Omega(\log|\cG|/\log\log |\cG|)$ quantum queries of depth $\Omega(\log\log |\cG|)$ between some two consecutive forced measurements.

    More precisely, the following holds. Let $q\ge 1,d\ge 0$ be a positive integer. Let $*\in\{\sf{DL,CDH,DDH}\}$.
    \begin{itemize}
        \item 
        If a generic hybrid algorithm $A^\cG$ invokes $q$-query quantum subroutines $T$ times for $q\ge 1$, then it holds that 
    \[
    \adv_{*} (A^\cG) = O \left(\frac{\left( T\cdot2^{q+1}(m+Q+1)^{2q}\right)^2}{|\cG|}\right).
    \]
        \item If a generic hybrid algorithm $A^\cG$ invokes $d$-depth quantum subroutines $T$ times, then it holds that 
    \[
    \adv_{*} (A^\cG) = O \left(\frac{\left(T\cdot2^{2^d}(m+Q+1)^{2^d}\right)^2}{|\cG|}\right).
    \]
    \end{itemize}
\end{theorem}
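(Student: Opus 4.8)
The plan is to obtain this theorem as a direct corollary of the hybrid simulation theorem (\Cref{thm: simulation_hybrid}) combined with the classical GGM lower bound (\Cref{thm: GGM_DL}), exactly mirroring how \Cref{thm: QGGM_DL} was deduced from \Cref{thm: simulation_basic}. The load-bearing observation is \Cref{remark: quantum_alg_in_GGM}: the classical bound counts only group operations, so it applies verbatim to the simulator $B^\cG$ even though $B^\cG$ performs unbounded (indeed quantum) internal computation. Since all of DL/CDH/DDH are covered by \Cref{thm: GGM_DL} with the same quadratic bound, the three problems can be handled uniformly.

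First I would establish the precise advantage bounds. Given a generic hybrid algorithm $A^\cG$ for $*\in\{\sf DL,CDH,DDH\}$ that invokes $q$-query quantum subroutines $T$ times with $Q$ total group operations, \Cref{thm: simulation_hybrid} yields a classical GGM algorithm $B^\cG$ making $Q':=Q+T\cdot 2^{q+1}(m+Q+1)^{2q}$ group operations that \emph{perfectly} simulates $A^\cG$. Perfect simulation gives $\adv_{*}(A^\cG)=\adv_{*}(B^\cG)$, and \Cref{thm: GGM_DL} bounds the latter by $O\!\left((Q')^2/|\cG|\right)$. It then remains to absorb the additive $Q$ into the product term: for $q\ge 1$ and $T\ge 1$ one has $T\cdot 2^{q+1}(m+Q+1)^{2q}\ge 4(Q+1)^2\ge 4Q$, so $Q'=O\!\left(T\cdot 2^{q+1}(m+Q+1)^{2q}\right)$ and the stated bound follows. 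The $d$-depth case is identical, replacing $2^{q+1}(\cdots)^{2q}$ by $2^{2^d}(\cdots)^{2^d}$ throughout.

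Next I would deduce the asymptotic statement from these precise bounds. For DL we have $m=2=O(1)$, and by hypothesis $Q=O(\poly\log|\cG|)$, so $\log(m+Q+1)=O(\log\log|\cG|)$; since (without loss of generality) each counted subroutine makes at least one operation, $T\le Q=2^{O(\log\log|\cG|)}$. A constant advantage forces $T\cdot 2^{q+1}(m+Q+1)^{2q}=\Omega(\sqrt{|\cG|})$, i.e.
\[
\log T+(q+1)+2q\log(m+Q+1)=\Omega(\log|\cG|).
\]
Every term except $2q\log(m+Q+1)=2q\cdot O(\log\log|\cG|)$ is dominated, so $q=\Omega(\log|\cG|/\log\log|\cG|)$; that is, within some subroutine — between two consecutive forced measurements — the algorithm makes $\Omega(\log|\cG|/\log\log|\cG|)$ quantum queries. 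Applying the analogous logarithmic estimate to the $d$-depth bound gives $2^d\cdot O(\log\log|\cG|)=\Omega(\log|\cG|)$, hence $2^d=\Omega(\log|\cG|/\log\log|\cG|)$ and therefore $d=\Omega(\log\log|\cG|)$.

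The individual steps are routine; the genuine content is already discharged inside \Cref{thm: simulation_hybrid}, whose branch-counting controls how many distinct table entries a $q$-query (or $d$-depth) subroutine can generate. The only real obstacle here is bookkeeping: correctly absorbing the additive $Q$ and the multiplicative constants (the $2^{q+1}$, the $+1$ inside the base, and the factor $T$) into the asymptotic $\Omega(\cdot)$ without dropping a term that could dominate when $q$ is small, and justifying $T\le Q$ so that $\log T=O(\log\log|\cG|)$ in the relevant regime.
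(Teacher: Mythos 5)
Your proposal is correct and follows exactly the route the paper intends: the theorem is stated as a direct corollary of \Cref{thm: simulation_hybrid} combined with the classical GGM bound of \Cref{thm: GGM_DL} (via \Cref{remark: quantum_alg_in_GGM}), mirroring the deduction of \Cref{thm: QGGM_DL} from \Cref{thm: simulation_basic}. The details you supply—absorbing the additive $Q$ into the product term for $q\ge 1$, $T\ge 1$, justifying $T\le Q$ so that $\log T$ is dominated, and extracting $q=\Omega(\log|\cG|/\log\log|\cG|)$ and $d=\Omega(\log\log|\cG|)$ from the precise bounds—are exactly the bookkeeping the paper leaves implicit, and they are carried out correctly.
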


\ifnum\llncs=0
\ifnum\llncs=1
\section{Proofs for Hybrid Lower Bounds}\label{subsec: hybrid_simulation}
\subsection{Simulation Theorems for Quantum Subroutines}
\else
\subsection{Proof of the Hybrid Simulation Theorem}\label{subsec: hybrid_simulation}
This section proves~\Cref{thm: simulation_hybrid}.
\fi
We first present the following variants of~\Cref{thm: simulation_basic} simulating the quantum subroutines.
\begin{lemma}[The simulation of $q$-query parallel quantum subroutine]\label{lem: simulation_subroutine_query}
    Let $\cG$ be a group.
    Suppose that $A^\cG$ is a $q$-query quantum subroutine in the QGGM, taking $m$ group elements and a classical string as inputs, making at most $Q$ group operation queries including both classical and quantum. Then there exists a generic algorithm $B^\cG$ in the GGM with 
        \[
        Q+2^{q+1}(m+Q+1)^{2q}
        \]
    classical group operations such that the output distribution of $B^\cG(y)$ and $A^\cG(y)$ are identical for any input $y$.
\end{lemma}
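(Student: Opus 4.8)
The plan is to adapt the exponential-blowup simulation of \Cref{thm: simulation_basic} to the hybrid setting, with the crucial difference that the quantum subroutine interleaves an arbitrary number of \emph{classical} group operations with only $q$ quantum ones. The simulator $B$ in the classical GGM will again maintain a labeling function $L$ and a set $S$ of polynomials on which $L$ is defined, exactly as in the proof of \Cref{thm: simulation_basic}, ensuring that every label appearing in any nonzero-amplitude branch of the table register $\mathbf T$ corresponds to some polynomial in $S$. The key quantity to track is $|S|$, the number of distinct group elements that can ever appear in $\mathbf T$ across all branches, because generating all of them in the classical oracle's table is exactly what costs $B$ its group operation queries.

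First I would write the subroutine in the canonical form $(C_0, O_1, C_1, \ldots, C_{q-1}, O_q)$ of \Cref{fig: quantum subroutine}, where each $C_j$ performs only classical group operations and each $O_j$ performs a single quantum group operation followed (after $O_q$) by a forced measurement. The classical blocks $C_j$ are cheap to simulate: a classical group operation acts on a definite branch and adds at most one new element to $S$, so $C_0,\ldots,C_{q-1}$ together can be simulated by at most $Q$ classical queries, and I will simply add this $Q$ to the final count. The heart of the argument is bounding the growth of $|S|$ under the quantum operations. Writing $m_0 \le m + Q + 1$ for the number of distinct elements (including $0$) present just before $O_1$, each quantum operation $O_j$ can, for each of the existing branches, create a superposition over the $i \neq j$ index pairs; the number of \emph{distinct new labels} producible is at most $2|S_{j-1}|^2$, where $S_{j-1}$ is the element set before $O_j$. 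Recursively this gives $|S_j| \le 2^j (m+Q+1)^{2j}$, so after all $q$ quantum operations the total number of elements that must be pre-generated in the classical oracle's table is at most $\sum_{j=1}^q 2^j (m+Q+1)^{2j} \le 2^{q+1}(m+Q+1)^{2q}$.

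To see that $B$ perfectly simulates $A$, I would reuse the hybrid-simulator argument from \Cref{thm: simulation_basic}: introduce an intermediate simulator $B'$ that knows $y_1,\ldots,y_m$ and records a random bijection $I:\mathbb Z_N \to [N]$ applied to the true group elements, undoing and redoing $I$ around each group and equality operation. Since $I$ is merely a basis change on $\mathbf T$, $B'$ is perfectly indistinguishable from the real QGGM oracle from $A$'s view; and $B$ coincides with $B'$ except that it lazily samples $I$ through $L$, which is statistically identical. The forced measurements require no special treatment, since measuring $\mathbf T$ commutes with the labeling and $B$ can always recover a genuine table index from a measured label via $S$. Adding the $Q$ classical queries for the $C_j$ blocks to the $2^{q+1}(m+Q+1)^{2q}$ queries for generating all possible table elements yields the claimed bound $Q + 2^{q+1}(m+Q+1)^{2q}$.

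The main obstacle I anticipate is the recursive counting step, specifically justifying the bound $|S_j| \le 2|S_{j-1}|^2$ rigorously. The subtlety is that a single quantum group operation acts across a superposition of many distinct table configurations (branches), and one must argue that the \emph{union} over all branches of the newly produced elements has size at most quadratic in the current element count --- quadratic because a new element has the form $\ell_{i} \pm \ell_{j}$ for a pair of existing labels, and there are at most $|S_{j-1}|^2$ ordered pairs and two signs. Care is needed because the index registers are also in superposition, so one must confirm that no additional elements can arise beyond these $\pm$-combinations of already-present labels, which follows from the structure of the group operation unitary in \Cref{eqn: group_operation_query} together with the invariant that every label corresponds to a polynomial in $S$.
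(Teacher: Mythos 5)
There is a genuine gap in your proposal, and it sits exactly at the step you flagged as the anticipated obstacle. You track a single global set $S_j$ of all elements appearing in any branch before the $j$-th quantum operation, and you bound its growth by $|S_j|\le 2|S_{j-1}|^2$ (new elements are of the form $\ell\pm\ell'$ for existing labels). That recursion is true, but it does \emph{not} give $|S_j|\le 2^j(m+Q+1)^{2j}$ as you claim: starting from $|S_0|\le m+Q+1$ and squaring at each step yields $|S_j|\le 2^{2^j-1}(m+Q+1)^{2^j}$, i.e.\ a \emph{doubly} exponential bound in $j$. (Check $j=3$: your recursion gives exponent $2^3=8$ on $(m+Q+1)$, while the claimed bound has exponent $2\cdot 3=6$.) This squaring recursion is precisely the analysis used for the \emph{depth}-bounded case (\Cref{lem: simulation_subroutine_depth}), where the $2^{2^d}(m+Q+1)^{2^d}$ bound is unavoidable because parallel queries can combine elements across an already-large set. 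Carried through, your argument only proves the lemma with the bound $Q+2^{2^q}(m+Q+1)^{2^q}$, which is much weaker than the stated $Q+2^{q+1}(m+Q+1)^{2q}$ and would destroy the quantitative conclusions drawn from it (\Cref{thm: hybrid DL and more}).

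The missing idea is a \emph{per-branch} accounting, which the paper implements with a rooted tree $\mathcal T$ whose leaves are the classical table configurations (branches) with nonzero amplitude. Two facts replace your global recursion: (i) each individual branch contains at most $m+Q+1$ distinct elements at \emph{all} times, because every group operation query, classical or quantum, adds at most one new element to each branch --- this count never gets squared; and (ii) a single non-parallel quantum query splits each branch into at most $2|v_l|^2\le 2(m+Q+1)^2$ child branches, since the new element in a given branch is $\ell\pm\ell'$ for a pair of labels \emph{inside that branch}. Hence the number of branches after $j$ quantum queries is at most $2^j(m+Q+1)^{2j}$, and the simulation cost of the $j$-th quantum query is at most the number of new branches it creates, giving the sum $Q+\sum_{j=1}^q 2^j(m+Q+1)^{2j}\le Q+2^{q+1}(m+Q+1)^{2q}$. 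Your confusion comes from conflating ``the set of elements across all branches'' (which can square) with ``the set of elements within one branch'' (which grows by at most one per query); the claimed bound needs the latter, and also needs the fact that quantum queries here are non-parallel, which is what makes (i) usable. Note also that classical queries need care in this bookkeeping: because the query register is measured, a classical operation adds one element to every existing branch without multiplying the branch count, which is why the $c_j$'s contribute only additively.
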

\begin{lemma}[The simulation of $d$-depth parallel quantum subroutine]\label{lem: simulation_subroutine_depth}
    Let $\cG$ be a group.
    Suppose that $A^\cG$ is a $d$-depth parallel quantum subroutine in the QGGM, taking $m$ group elements and a classical string as inputs, makes at most $Q$ classical group operation queries. Then there exists a generic algorithm $B$ in the GGM with
        \[
        2^{2^d}(m+Q+1)^{2^d}
        \]
    classical group operations such that the output distribution of $B^\cG(y)$ and $A^\cG(y)$ are identical for any input $y$. 
\end{lemma}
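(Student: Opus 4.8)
The plan is to adapt the lazy-labeling simulation from the proof of~\Cref{thm: simulation_basic} to the subroutine setting, the only genuinely new ingredient being a sharper accounting of how many distinct group elements can appear in the table register across all branches. Concretely, I would have $B$ run exactly the same labeling simulation: it maintains a function $L\colon \mathbb Z_N[Y_1,\dots,Y_m]\to[N]\cup\{\bot\}$ and the set $S$ of degree-$\le 1$ polynomials on which $L$ is defined, samples labels lazily subject to the consistency constraint enforced through (free) equality queries, and simulates local operations, parallel group operations, and equality queries on the labels exactly as before. The perfect-simulation claim then follows verbatim from the basis-change argument there (the hybrid $B'$ that records $I(z)$ for a uniformly random bijection $I$), so the entire content of this lemma reduces to counting the classical group operations $B$ must spend to materialise every element of $S$ in its own oracle table.

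For the count I would decompose the $d$-depth subroutine as the alternating sequence $(C_0,O_1,C_1,\dots,C_{d-1},O_d)$, where $O_j$ is a single parallel quantum group operation and $C_j$ performs $c_j$ classical group operations with $\sum_j c_j\le Q$. Writing $S_j$ for the set of table values, over all nonzero-amplitude branches, present right before $O_{j+1}$, and $S_d$ for the set right after $O_d$, I would establish the recurrence
\[
|S_0|\le m+1+c_0,\qquad |S_j|\le 2|S_{j-1}|^2+c_j,\qquad |S_d|\le 2|S_{d-1}|^2 .
\]
The first bound holds because $S_0$ starts from the $m$ inputs together with $0$ and each classical operation adds at most one new value. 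The recurrence step is the crux: a single classical operation again adds one value, while a \emph{parallel} quantum group operation of any width $K$ can only write entries of the form $f\pm g$ with $f,g\in S_{j-1}$, of which there are at most $2|S_{j-1}|^2$.

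Unrolling the recurrence gives
\[
|S_d|\le 2^{2^d}\bigl(m+1+c_0+\dots+c_{d-1}\bigr)^{2^d}\le 2^{2^d}(m+Q+1)^{2^d},
\]
and since $B$ can create each new element of $S$ with a single group operation from two previously created elements (processing the $S_j$ in order makes this constructive), this bounds the total number of classical group operations $B$ makes, as claimed. Note there is no additive $Q$ term here, because $A$'s own classical operations only produce values of the form $f\pm g$, which are already counted in the growth of $S$; this is the single-subroutine analogue of the basic count $(2^d+1)^m\le 2^{m(d+1)}$ in~\Cref{thm: simulation_basic}. The companion $q$-query lemma is proved identically, the only change being that the serial (width-$1$) queries let one track the per-query blow-up $|S|\mapsto 2|S|^2$ over $q$ steps and simulate the interleaved classical operations directly, yielding the $Q+2^{q+1}(m+Q+1)^{2q}$ bound.

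I expect the main obstacle to be the parallel-query counting step: one must argue carefully that parallelism cannot \emph{chain} operations within a single layer to produce elements of the form $f\pm g\pm h$, which would spoil the $2|S_{j-1}|^2$ bound and make the count depend on the width $K$. The write-disjointness constraint in the definition of parallel group operation queries (no register written in a layer is read by another operation of the same layer) is precisely what rules this out, and the proof hinges on spelling out that each of the $K$ simultaneous unitaries reads only from the pre-layer set $S_{j-1}$, so that in every branch every freshly written value is a single $\pm$-combination of two pre-existing ones.
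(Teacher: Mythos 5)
Your proposal is correct and follows essentially the same route as the paper's own proof: the paper likewise reuses the labeling simulation of \Cref{thm: simulation_basic}, treats each classical group operation as adding (and costing) a single element, proves exactly your recurrence $|S_0|\le m+1+c_0$, $|S_j|\le 2|S_{j-1}|^2+c_j$, $|S_d|\le 2|S_{d-1}|^2$ (via \Cref{claim: hybrid_depth}), and unrolls it to $2^{2^d}(m+Q+1)^{2^d}$. The only presentational difference is that the paper makes explicit the measurement-based simulation of classical queries (measuring $\mathbf{Q}$ and the two table entries so that only one pair $(f,g)$ must be processed), which is the formal justification for your ``each classical operation adds at most one new value'' step.
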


We can prove~\Cref{thm: simulation_hybrid} by invoking the above lemmas for each quantum subroutine.
\begin{proof}[{\ifnum\llncs=0 Proof \fi}of~\Cref{thm: simulation_hybrid}]
    Let $A$ be a generic hybrid algorithm in the QGGM with $T$ invocations of quantum subroutines. Suppose that $A$ is characterized by $(U_1,...,U_T,A_{T+1})$ where $U_j$ is a $q$-query quantum subroutine (or $d$-depth quantum subroutine), and $A_{T+1}$ is a generic algorithm in the QGGM with classical group operations only.
    Let $c$ be the number of group operations $A_{T+1}$ made and $Q_j$ the total number of group operations $U_j$ made. 
    
    We construct an algorithm $B$ in the GGM that perfectly simulates $A$.
    We use the fact that the table register of the output of a quantum subroutine with $m$ group elements as input and $Q$ total queries has at most $m+Q$ nonzero group elements as each group operation makes at most a single new group element.

    As the first step, $B$ simulates $U_1$ using a sub-algorithm $B_1$.~\Cref{lem: simulation_subroutine_query,lem: simulation_subroutine_depth} assert that $B_1$ can simulate the $q$-query and $d$-depth quantum subroutine with
    \[
    Q_1+2^{q+1}(m+Q_1+1)^{2q},\text{ and }2^{2^d}(m+Q_1+1)^{2^d}
    \]
    group operations, respectively. 
    
    After the simulation, $B$ discards all but the output of the simulation, which is safely done as all registers are measured on a computational basis at the end of execution of $U_1$. Below, we only consider the non-discarded parts when we say the group elements or table register, etc. The discarded parts do not affect the remaining simulation.

    The table register has at most $m+Q_1$ nonzero group elements after $U_1$ since $Q_1$ group operations of $U_1$ add at most $Q_1$ new group elements to the table.
    $B$ then simulates $U_2$ using the non-discarded parts as input\footnote{Technically, we need slight variants of the above lemmas for the simulation, as the simulation input is described in the GGM, while the subroutines in the lemmas expect the group elements stored in the QGGM. Still, the procedures are identical, and we choose the modular analysis and omit the subtle details.} with a new sub-algorithm $B_2$. The complexity of the simulation is similar, and the result of measurement gives at most $m+Q_1+Q_2$ nonzero group elements in the table.

    Continuing this procedure until $U_T$. Finally, $B$ simulates $A_{T+1}$.
    Each classical group operation query of $A_{T+1}$ can be simulated by a single group operation of $B$, because the table register is measured after the execution of each quantum subroutine, and it is always classical during the execution of $A_{T+1}$. In other words, the simulation of $A_{T+1}$ requires $c$ classical group operations.

    We conclude that the algorithm $B$ in the GGM simulates the algorithm $A$ in the QGGM. The number of group operations is, if $A$ invokes $q$-query quantum subroutines,
    \ifnum\llncs=0
    \[
        \left(Q_1+2^{q+1}(m+Q_1+1)^{2q}\right)+\left(Q_2+2^{q+1}(m+Q_1+Q_2+1)^{2q}\right)+ ...+c
        \le Q+T\cdot 2^{q+1} (m+Q+1)^{2q},
    \]
    \else
    \begin{align*}
        &\left(Q_1+2^{q+1}(m+Q_1+1)^{2q}\right)+\left(Q_2+2^{q+1}(m+Q_1+Q_2+1)^{2q}\right)+ ...+c
        \\
        &\le Q+T\cdot 2^{q+1} (m+Q+1)^{2q},
    \end{align*}
    \fi
    where we use $Q_1+Q_2+...+Q_T+c\le Q,$  
    and
    \[
    2^{2^d}(m+Q_1+1)^{2^d}+ 2^{2^d}(m+Q_1+Q_2+1)^{2^d} + ... + c \le Q + T \cdot 2^{2^d}(m+Q+1)^{2^d}
    \]
    if $A$ invokes $d$-depth quantum subroutines, where we use $c\le Q$.
    \ifnum\llncs=1 \qed \fi
\end{proof}

\ifnum\llncs=0
It remains to prove~\Cref{lem: simulation_subroutine_query,lem: simulation_subroutine_depth}. 
\else
\subsection{Proofs of the Simulation Theorems for Quantum Subroutines}
\fi
We first prove the $d$-depth parallel quantum subroutine case, which can be proven similarly to~\Cref{thm: simulation_basic} with some modifications. 
The query number case needs a new idea regarding the branches and the modifications used in the depth case.

We first prove~\Cref{lem: simulation_subroutine_depth}.
The basic idea of the proof is that, when $A$ makes a classical group operation, $B$ can update the label function $L$ and the set $S$ with a single group operation. 
This is because the algorithms measure the relevant registers. As many parts of the proof resemble one of~\Cref{thm: simulation_basic}, we highlighted the differences in the simulation in red.

\begin{proof}[{\ifnum\llncs=0 Proof \fi}of~\Cref{lem: simulation_subroutine_depth}]
    Let $A$ be a generic quantum subroutine in the QGGM with at most $d$ quantum group operation depth that takes $m$ group elements $y_1,...,y_{m}$ and \red{classical string $z$} as input. We assume that $y_1,...,y_{m}$ are not $0$, and let $N=|\cG|.$ 
    Suppose that $A$ is characterized by \red{a sequence of algorithms $C_0,O_1,...,C_{d-1},O_d$} with generic algorithm $C_j$ with $c_j$ classical group operations and generic algorithm $O_j$ with a single parallel quantum group operation.

    We construct a generic algorithm $B$ in the GGM that simulates $A$ by following the construction in~\Cref{thm: simulation_basic}, except for some specifications of group operation queries. In particular, the local operation, the equality query, and the finalization step are identical. 
    The initialization is only slightly different as the algorithm $A$ takes a classical string as a part of the input.
    We also apply the parallel group operation query procedure for the parallel quantum group operation queries, but we need a slightly different procedure for classical group operations. The correctness of the simulation procedure can be proven in the same way as the original proof.

    \paragraph{Initialization.} 
    The algorithm $B$ parses the input into the \red{classical string $z$} and group elements $y_1,...,y_m$ stored in the table register, and
    prepares a label function $L:\mathbb Z_N[Y_1,...,Y_{m}]\rightarrow \mathbb [N] \cup \{\bot\}$ and $S\subseteq  \mathbb Z_N[Y_1,...,Y_{m}]$.  Set $S:=\{0\}\cup \{Y_i\}_{i\in \{1,...,m\}}$ and initialize $L$ as in the previous proof. 
    $B$ creates the following state as a simulation of the initial state for $A$:
    \[
    \red{\ket{z}_{\mathbf{W},\mathbf{Q}}}\otimes \ket{L(Y_1),...,L(Y_{m}),L(0),...,L(0)}_{\mathbf{T}}
    \]
    
    Recall that $S\subset \mathbf Z_N[Y_1,...,Y_{m}]$ is the set of polynomials on which the value of $L$ is defined. 
    Before describing the classical group operation simulation, we make the following observation for the parallel quantum group operation query. Recall the simulation of parallel quantum group operation in~\Cref{subsec: proof simul basic} and $S$ is updated during the simulation.
    
    \begin{clm}\label{claim: hybrid_depth}
    Suppose that $S_{\sf pre}$ be a set $S$ right before a parallel quantum group operation and $S_{\sf post}$ be a set $S$ right after the same parallel quantum group operation. Then it holds that $|S_{\sf post}| \le 2|S_{\sf pre}|^2.$
    \end{clm}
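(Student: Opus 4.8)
The plan is to trace exactly which polynomials are inserted into $S$ during the simulation of a single parallel quantum group operation---the ``Parallel group operation query'' step in the proof of~\Cref{thm: simulation_basic}---and then simply count them. Recall that in that step the simulator iterates over every ordered pair $(f,g)\in S_{\sf pre}^2$ and, for each such pair, defines the labels $L(f+g)$ and $L(f-g)$ while updating $S\leftarrow S\cup\{f+g,f-g\}$; no other polynomials are ever added during a quantum query. Hence, by construction,
\[
S_{\sf post}=S_{\sf pre}\cup\bigl\{f+g,\,f-g:(f,g)\in S_{\sf pre}^2\bigr\}.
\]

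First I would observe that the zero polynomial $0$ always lies in $S_{\sf pre}$: it is placed in $S$ at initialization and, in the depth-bounded subroutine, $S$ is only ever enlarged, so $0$ is never removed. Consequently, for every $f\in S_{\sf pre}$ we have $f=f+0$ with $(f,0)\in S_{\sf pre}^2$, so each element of $S_{\sf pre}$ already appears among the generated sums. This shows that the old set is subsumed by the newly generated elements, i.e.
\[
S_{\sf post}=\bigl\{f+g,\,f-g:(f,g)\in S_{\sf pre}^2\bigr\},
\]
and there is no need to account for $S_{\sf pre}$ separately on top of the quadratic term.

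It then remains only to bound the cardinality of this set. There are exactly $|S_{\sf pre}|^2$ ordered pairs $(f,g)$, and each pair contributes at most the two polynomials $f+g$ and $f-g$. Therefore $|S_{\sf post}|\le 2|S_{\sf pre}|^2$, which is the claim.

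I do not anticipate a genuine obstacle here, since the counting is immediate once the update rule is written out. The only point that requires a little care is recalling that $0\in S_{\sf pre}$: this is precisely what lets the pre-existing labels be folded into the $f+g$ terms, and it is the reason the bound comes out as $2|S_{\sf pre}|^2$ rather than the a priori looser $|S_{\sf pre}|+2|S_{\sf pre}|^2$.
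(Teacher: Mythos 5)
Your proof is correct and follows essentially the same route as the paper: both arguments note that the quantum query step only adds polynomials of the form $f\pm g$ with $(f,g)\in S_{\sf pre}^2$, use the fact that $0\in S_{\sf pre}$ to absorb $S_{\sf pre}$ itself into that set of sums, and conclude $|S_{\sf post}|\le 2|S_{\sf pre}|^2$ by counting ordered pairs. No gaps.
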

    \begin{proof}
        In the parallel group operation query simulation, the set of new elements added in $S_{\sf post}$ is included in the set
        \[
        S':=\{h=f\pm g:f,g \in S_{\sf pre}\},
        \] 
        which has at most $2|S_{\sf pre}|^2$ different elements. 
        As $S_{\sf pre}$ includes $0$ by definition, $S_{\sf pre} \subset S'$ holds. This implies the claim.
        \ifnum\llncs=1 \qed \fi
    \end{proof}


    \paragraph{\red{Classical group operation query.}} When $A$ makes a classical group operation query, $B$ does the following. Let $S_{\mathsf{pre}}:=S$.
    \begin{enumerate}
        \item Measure $B$'s query register $\mathbf Q$ to obtain $(b,i,j)$, and do nothing if $i=j$. Otherwise, measure the $i$-th and $j$-th entries of $B$'s table register to obtain the labels $\ell_i,\ell_j$.
        \item Find a pair $(f,g) \in S_{\mathsf{pre}}^2$ such that $\ell_i = L(f)$ and $\ell_j = L(g)$.
        Check if there is any $h\in S$ such that 
        \[
            h(y_1,...,y_{m})=f(y_1,...,y_{m})+(-1)^bg(y_1,...,y_{m}).
        \]
        \begin{itemize}
            \item If there is such $h$, it sets $L(f+(-1)^bg):=L(h)$.
            \item Otherwise, it uniformly sets $L(f+(-1)^bg)\leftarrow [N]$ under the constraint that $L(f+(-1)^bg)\neq L(h) $ for all $h \in S$.
        \end{itemize}
        Then update $S\leftarrow S\cup\{f+(-1)^b g\}.$ 
        \item For all pairs $(f',g') \in S_{\sf pre}^2$ such that $L(f')=\ell_i$ and $L(g')=\ell_j$, set $L(f'+(-1)^b g') := L(f+(-1)^b g)$ and update $S \leftarrow S\cup \{f'+(-1)^b g'\}.$
        \item Define $\ell_i +(-1)^b \ell_j := L(f+(-1)^b g).$
        \item Then, $B$ simulates the classical group operation oracle as follows:
        \begin{equation*}
        \ket{b,i,j}_{\mathbf{Q}} \otimes \ket{...,\ell_{i},...,\ell_{j},...}_\mathbf{T} \mapsto
        \ket{b,i,j}_{\mathbf{Q}} \otimes \ket{...,\ell_{i} + (-1)^{b} \ell_{j},...,\ell_{j},...}_{\mathbf{T}}
        \end{equation*}
        for $i\neq j$, and otherwise it does nothing. 
    \end{enumerate}
    Since the register $\mathbf Q$ and the $i$ and $j$-th entries of $\mathbf T$ are measured, this step only needs a single group operation for computing $f(y_1,...,y_{m})+(-1)^b g(y_1,...,y_{m})$.

    \paragraph{Group operation complexity.} 
    We count the number of classical group operations made by $B$. Let $S_{j-1}$ be the set $S$ right before the parallel quantum group operation in $O_j$ and let $S_d$ be the final set $S$. Recall $C_0$ makes $c_0$ classical group operations and the simulation of each classical group operation takes a single group operation by $B$. We have $|S_0| \le m +1+ c_0,$ where $m+1$ is the initial elements included in $S$. Also, using~\Cref{claim: hybrid_depth} and the fact that $C_j$ makes $c_j$ classical group operations, we have
    \[
    |S_{j}| \le 2|S_{j-1}|^2 + c_{j} \le 2(|S_{j-1}| + c_j)^2,
    \]
    and $|S_d| \le 2|S_{d-1}|^2$. From this, we inductively prove that
    \[
        |S_j| \le 2^{2^j-1}(m+1+c_0+...+c_{j-1})^{2^j}
    \]
    for all $j \le d$.
    Since $|S_d|$ is an upper bound of the number of group operations, 
    we conclude that $B$ can perfectly simulate $A$ with
    \[
        2^{2^d-1} \left( m +1+ c_0 + c_1 + ...+ c_{d-1} \right)^{2^d} \le 2^{2^d} \left( m + Q+1\right)^{2^d}
    \]
    group operations. This concludes the result for the $d$-depth parallel quantum subroutine.
    \ifnum\llncs=1 \qed \fi
\end{proof}

Now we move to the proof of~\Cref{lem: simulation_subroutine_query}.
The main idea of this case is to consider the branches. In the beginning, there is only a single branch with nonzero amplitude in the table register.
Two observations for proving~\Cref{lem: simulation_subroutine_query} are 1) for simulating a group operation over a fixed (classical) table in a single branch, we only need a tiny number of new group elements, and 2) for each group operation, the number of branches is multiplied by a bounded number; looking ahead, it is $2(m+Q+1)^2$. As the simulation procedure in~\Cref{thm: simulation_basic} does not consider the branch, we need some more work in this case.

\begin{proof}[{\ifnum\llncs=0 Proof \fi}of~\Cref{lem: simulation_subroutine_query}]
Let $A$ be a $q$-query quantum subroutine characterized by $(C_0,O_1,...,C_{q-1},O_q)$. Here, $C_j$ is a generic algorithm with $c_j$ classical group operations and $O_j$ is a generic algorithm with a single quantum group operation. We assume that $A$ takes $m$ group elements $y_1,...,y_{m}$ and {classical string $z$} as input and suppose that each $y_j$ is not $0$, and let $N=|\cG|.$
As in the above proof, we construct a generic algorithm $B$ in the GGM that simulates $A$. We assume that $A$ takes $(z,y_1,....,y_m)$ as input, where $z$ is a classical string and $y_1,...,y_m$ are group elements stored in the table register.

\paragraph{Initialization.} 
The algorithm $B$ parses the input into the {classical string $z$} and group elements $y_1,...,y_m$ stored in the table register.
$B$ prepares a label function $L:\mathbb Z_N[Y_1,...,Y_{m}]\rightarrow \mathbb [N] \cup \{\bot\}$ and $S\subseteq  \mathbb Z_N[Y_1,...,Y_{m}]$.  Set $S:=\{0\}\cup \{Y_i\}_{i\in \{1,...,m\}}$ and initialize $L$ as in the previous proof. 
$B$ creates the following state as a simulation of the initial state for $A$:
\[
\ket{z}_{\mathbf{W},\mathbf{Q}}\otimes \ket{L(Y_1),...,L(Y_{m}),L(0),...,L(0)}_{\mathbf{T}}
\]

Additionally, $B$ initializes \red{a rooted tree structure $\mathcal T$} with a root $v_{0} = \{L(0)\}\cup \{L(Y_i)\}_{i\in \{1,...,m\}}$, without any further vertex. 
During the simulation, the tree $\mathcal T$ will be updated along with $S$ and $L$, so that \red{a leaf node represents the table register of a branch with a nonzero amplitude.} 
For example, the unique leaf node $v_{0}$ of the initial tree $\mathcal T$ includes all information of the table register of the unique initial branch.


\paragraph{\red{Quantum group operation query.}}
Suppose $A$ makes a quantum group operation query.
Let $S_{\mathsf{pre}}:=S$ and $\mathcal T_{\mathsf{pre}}:=\mathcal T$ for recording the set $S$ and $\mathcal T$ at the point of making the query.
Then $B$ does the following: 
\begin{enumerate}
    \item For each \red{leaf node $v_{l}$ of $\mathcal T_{\mathsf{pre}}$} (in arbitrary order), and for each \red{$(\ell,\ell') \in v_l^2$}, do the following:
    \begin{enumerate}
        \item Find a pair $(f,g) \in S_{\sf pre}^2$ such that $L(f)=\ell$ and $L(g)=\ell'$. Check if there is any $h\in S$ such that 
            \[h(y_1,...,y_{m})=f(y_1,...,y_{m})+ g(y_1,...,y_{m}).\]
            \begin{itemize}
                \item If there exists such $h$, it sets
                \[L(f+ g):=L(h).\]
                \item Otherwise, it uniformly sets $L(f+g)\leftarrow \mathbb [N]$ under the constraint that
                $L(f+g) \neq L(h)$ for all $h\in S$. 
            \end{itemize} 
        \item Update $S\leftarrow S\cup \{f+g\}$. \red{Add $v_{l+1} := v_l \cup \{L(f+g)\}$ as a child node of $v_l$.}
        \item For all pairs $(f',g') \in S_{\sf pre}^2$ such that $L(f')=\ell$ and $L(g')=\ell'$, set $L(f'+g'):=L(f+g)$ and update $S\leftarrow S\cup \{f'+g'\}$.
    \end{enumerate}
    \item Similarly add $v_{l+1} :=v_l\cup \{L(f-g)\}$ as a child node of $v_l$ and update $S$ and $L$ properly.
    \item Define $\ell\pm \ell'$ for labels $(\ell,\ell') \in \mathbb [N]^2$ as follows:
    \begin{itemize}
    \item Check if there is  $(f,g)\in S_{\mathsf{pre}}^2$ such that $\ell=L(f)$ and $\ell'=L(g)$. 
        \begin{itemize}
            \item If they exist, define  
            \[\ell\pm \ell':=L(f\pm g).\]
            \item Otherwise, define $\ell \pm \ell':= \bot$. 
        \end{itemize}
    \end{itemize}
    \item Apply the unitary $\widetilde{O}_{\mathbf{Q},\mathbf{T}}$ that is defined by:
    \[
        \ket{b,i,j}_{\mathbf{Q}} \otimes \ket{...,\ell_{i},...,\ell_{j},...}_\mathbf{T} \mapsto
        \ket{b,i,j}_{\mathbf{Q}} \otimes \ket{...,\ell_{i} + (-1)^{b} \ell_{j},...,\ell_{j},...}_{\mathbf{T}}
    \]
    if $i\neq j$ and otherwise it does nothing.
\end{enumerate}

\paragraph{Classical group operation query.} 
When $A$ makes a classical group operation query, $B$ does the following. This is identical to the classical group operation in the $d$-depth subroutine case, except for the update of $\mathcal T$.
Precisely, in the second step, \red{$B$ updates  $v_l \leftarrow v_l \cup \{L(f+(-1)^b g)\}$ for each leaf node $v_l$}.

\paragraph{Finalization.}
Finalization is identical to the previous simulations.

\paragraph{Analysis.} 
The main difference of this simulation from the previous simulations is using $\mathcal T$. 
In particular, $B$ computes the labels for $f\pm g$ only when $f,g$ comes from a single leaf node.
We argue that it suffices for simulating the group operation queries. For an intermediate quantum state
\[
    \sum_{w,b,i,j,X} \alpha_{w,b,i,j,X}\ket{w}_{\mathbf W}\ket{b,i,j}_{\mathbf Q} \ket{X}_{\mathbf T},
\]
we say that $X$ is a \emph{nontrivial} table if $\alpha_{w,b,i,j,X}$ is nonzero for some $w,b,i,j$.
\begin{clm}\label{claim: nontrivial_table}
    Right before the group operation query, for a nontrivial table $X$, there is a leaf node $v_l$ of $\mathcal T$ such that the following holds: 
    For any group element included in $X$, there is $f \in v_l$ such that $f(y_1,...,y_{m})=x$. We say that $v_l$ corresponds to $X.$
\end{clm}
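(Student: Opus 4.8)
The plan is to establish \Cref{claim: nontrivial_table} by induction on the number of group operation queries (classical or quantum) that $A$ has issued, maintaining the invariant that right before each such query, every nontrivial table $X$ admits a leaf node $v_l$ of the current tree $\mathcal{T}$ containing all labels that occur in $X$. For the base case, before the first query the only nontrivial table is the initial one $\ket{L(Y_1),\dots,L(Y_m),L(0),\dots,L(0)}_{\mathbf T}$, whose labels are exactly $\{L(0)\}\cup\{L(Y_i)\}_{i}=v_0$, the unique (root) leaf of $\mathcal{T}$. Since the intermediate local operations of $A$ act only on $\mathbf{W},\mathbf{Q}$, they introduce no new labels into $\mathbf{T}$ and, being linear over the $\mathbf{W}\mathbf{Q}$ indices with the $\mathbf{T}$ register fixed, cannot turn a trivial table into a nontrivial one; hence it suffices to track how the set of nontrivial tables and the tree $\mathcal{T}$ evolve across a single group operation query.

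For the inductive step I would treat the two query types separately. For a classical group operation, $B$ first measures $\mathbf{Q}$ and the two touched entries, collapsing the state to a single table $X$; by the hypothesis its labels lie in some leaf $v_l$, so in particular the measured $\ell_i,\ell_j$ lie in $v_l$ and the new label $\ell_i+(-1)^b\ell_j=L(f+(-1)^bg)$ is well-defined. Because $B$ appends this new label to every leaf, the unique resulting table $X'$ has all its labels inside the updated $v_l$, which remains a leaf, re-establishing the invariant. For a quantum group operation, the state may be a superposition $\sum_{w,b,i,j,X}\alpha_{w,b,i,j,X}\ket{w}_{\mathbf W}\ket{b,i,j}_{\mathbf Q}\ket{X}_{\mathbf T}$. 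For each nontrivial $X$ the hypothesis supplies a leaf $v_l$ containing all labels of $X$; for any $(b,i,j)$ paired with that $X$, the combined entries $\ell_i,\ell_j$ both lie in $v_l$, so $(\ell_i,\ell_j)\in v_l^2$ and the simulation has already defined $\ell_i\pm\ell_j$ and created the child $v_l\cup\{\ell_i+(-1)^b\ell_j\}$. Applying $\widetilde{O}_{\mathbf{Q},\mathbf{T}}$ replaces the $i$-th entry by this label, so the resulting table $X'$ has its labels contained in that child node, a fresh leaf of $\mathcal{T}$; hence the invariant holds before the next query.

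The main obstacle, and indeed the whole content of the claim, is guaranteeing that $\widetilde{O}_{\mathbf{Q},\mathbf{T}}$ never produces $\bot$ on a branch of nonzero amplitude, i.e.\ that whenever two entries of a nontrivial table are combined, the corresponding labels originate from a single leaf node (so that $\ell_i\pm\ell_j$ was defined rather than set to $\bot$). This is precisely what the invariant delivers, since all labels of a nontrivial table live in one $v_l$. I would emphasize that the argument is insensitive to superposition over $\mathbf{Q}$: the invariant constrains the table component $X$ alone, and the same leaf $v_l$ serves every $(b,i,j)$ paired with $X$, because the combined entries $\ell_i,\ell_j$ are themselves labels occurring in $X$. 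With the invariant in hand the claim is immediate, and it is exactly what is needed to confirm that $B$ reproduces the QGGM dynamics of $A$ on every nonzero-amplitude branch of each quantum group operation.
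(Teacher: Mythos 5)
Your proof is correct and takes essentially the same route as the paper's: induction over the group operation queries, with the initial table corresponding to the root, quantum queries handled by passing to the child node $v_l\cup\{\ell_i+(-1)^b\ell_j\}$, and classical queries handled by the in-place update of every leaf. One harmless slip: measuring $\mathbf{Q}$ and the two touched entries in a classical query does \emph{not} collapse $\mathbf{T}$ to a single table (unmeasured entries may remain in superposition), but since your argument applies per nontrivial table and every leaf receives the new label, the invariant is re-established exactly as you describe.
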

\begin{proof}
    We use induction. Before the first group operation query, the unique nontrivial table $X=(y_1,y_2,...,y_{m},0,...)$ corresponds to the initial leaf node $v_{0} = \{L(0)\}\cup \{L(Y_i)\}_{i\in \{1,...,m\}}$.
    Consider a group operation query, in which the statement holds right before the query. We prove that the statement still holds after the query.

    Suppose that the group operation is quantum.
    Let $X=(x_1,x_2,...)$ be a nontrivial table, and let $v_l$ be the corresponding leaf before the query at this point. 
    Consider a fixed branch
    \[
        \ket{w}_{\mathbf W}\ket{b,i,j}_{\mathbf Q} \ket{X}_{\mathbf T}.
    \]
    Since $v_l$ corresponds to $X$, there exist $f,g \in S$ such that $f(y_1,...,y_{m})=x_i$, $g(y_1,...,y_{m})=x_j$ and $L(f), L(g) \in v_l$ hold. The simulation appends $v_{l+1} = v_l \cup \{L(f+(-1)^b g)\}$ as a child node of $v_l.$ 
    Since the quantum group operation query only changes the $i$-th entry by $x_i$ to $x_i +(-1)^b x_j$, we can easily check the $v_{l+1}$ corresponds to the table after query.

    For a classical group operation query, a similar argument works with the same path $v_l$ since the simulation adds $L(f+(-1)^b g)$ to the set $v_l$.
    \ifnum\llncs=1 \qed \fi
\end{proof}

This claim ensures that each nontrivial table corresponds to some leaf node in the simulation so that the simulation works well as in the previous proofs. 

In the remainder of the proof, we count the number of group operations made by $B$. To do so, we need some calculations for leaf nodes. Let $v_l$ be a leaf node. $B$ appends the child nodes having one more group element that $v_l$ for each quantum group operation. 
Similarly, $B$ adds a single group element to the leaf nodes for each classical group operation. This implies that all nodes have at most $m+1+Q$ elements during the simulation.

After the initialization, there is only a single leaf node. For each quantum group operation, $B$ appends at most $2|v_l|^2$ child nodes to the leaf node $v_l$. Using $|v_l|\le m+Q+1,$ the final tree $\mathcal T$, after $q$ quantum group operations, has at most
\[
2^q(m+Q+1)^{2q}
\]
leaf nodes. Since each non-root node and each classical group operation requires a single group operation to update, the total number of group operations made by $B$ is bounded by
\[
Q + 2^q(m+Q+1)^{2q} + 2^{q-1}(m+Q+1)^{2(q-1)} + ... \le Q+2^{q+1} (m+Q+1)^{2q}.
\]
This concludes the proof.
\ifnum\llncs=1 \qed \fi
\end{proof}
\fi

\section{Memory-bounded Algorithms}\label{sec: memory}
In this section, we consider the generic algorithms with \emph{bounded memory}. 
More precisely, we consider algorithms that have two memories, classical and quantum, and that store a limited number of group elements in their quantum register. Furthermore, we assume that the algorithms can coherently access a few group elements stored in the classical memory in a single group operation. In other words, we assume that the algorithms only have a small amount of quantum random access classical memory (QRACM).

\ifnum\llncs=0
\ifnum\llncs=1
\section{Model and Proofs for Memory-bounded Algorithms}\label{sec: QRACM}
\fi
\subsection{Quantum and Classical Memory Models}\label{subsec: QRACM}
Recall that algorithms in (Q)GGM interact with a black-box table register $\mathbf T$ that stores group elements in $\Z_N$.
In this section, we assume that $\mathbf T$ holds two components $\mathbf T_C$ and $\mathbf T_Q = \cH_{\Z_N}^{\otimes t}$. 
The second component, $\mathbf T_Q$, is a quantum memory with a bounded size. Generic algorithms can coherently access or store group elements in $\mathbf T_Q$ in superposition.

The first component $\mathbf T_C$, corresponding to the classical memory, always stores group elements in a computational basis, i.e., classical group elements. We additionally assume that the quantum algorithm has a restriction on \emph{coherently} accessing $\mathbf T_C$. In other words, $\mathbf T_C$ is not a quantum random access (classical) memory (QRACM). We may assume that there is a QRACM holding a small number of group elements, which will be formally discussed below.

Recall that the register $\mathbf T$ is of the form $\mathbf T = \cH_{\Z_N}^{\otimes s}$. We decompose it into $\mathbf T_Q \otimes \mathbf T_C$ for $\mathbf T_Q = \cH_{\Z_N}^{\otimes t}$ where $s\ge t$.
We additionally assume that each $i$-th component of $\mathbf T$ for $i> t$ always holds a classical group element. 
The quantum group operation query
\[
    \ket{b,i,j}_{\mathbf Q} \otimes \ket{...,x_i,...,x_j,...}_{\mathbf T} \mapsto
    \ket{b,i,j}_{\mathbf Q} \otimes \ket{...,x_i + (-1)^b x_j,...,x_j,...}_{\mathbf T}
\]
has the following restrictions: In each query, the indices should be one of the following choices and obey the corresponding conditions.
\begin{enumerate}
    \item\label{item: qop} (Group operations for quantum registers) The second and third registers of $\mathbf Q$ hold indices (that may be in superpositions) indicating the group elements in the quantum register $\mathbf T_Q$; that is, $i,j \le t$ (in any branch) always holds.
    \item\label{item qcop} (Group operations for quantum-classical registers) The second register of $\mathbf Q$ holds indices (that may be in a superposition) indicating the group elements in $\mathbf T_Q$, and the third register of $\mathbf Q$ is classical (i.e., measured before query) and indicates a group element in $\mathbf T_C$; that is, $i\le t$ (in any branch) and $j>t$ holds.
    \item\label{item: cop} (Group operations for classical registers) The second and third registers of $\mathbf Q$ are both classical (i.e., measured in the computational basis before query), and the stored indices $i,j$ indicate group elements in $\mathbf T_C$; that is, $i,j>t$ holds.
\end{enumerate}

The first and second options are basically quantum (if the algorithm does not measure $\mathbf Q$), and the last option should be classical. 
Therefore, combining the first and second group operations as quantum group operations is convenient.

In general, we consider the case that a (small) QRACM is available. We model a QRACM as a storage containing $r$ group elements. The QRACM is only involved in quantum-classical group operations. 
When querying a quantum group operation, the algorithm must specify $r$ indices $j_1,...,j_r>t$ for the classical register. The QRACM loads those elements, and the group operation is made between QRACM and quantum register $\mathbf T_Q$ or just in $\mathbf T_Q$, after which the data in QRACM is discarded. 
Note that the second register must hold indices (in a superposition) indicating the group elements in $\mathbf T_Q$ since the result of the quantum group operation is written in that register.
Therefore, the quantum group operation can be described as follows, by omitting the QRACM's data loading and deletion.

\begin{itemize}
    \item (Group operations for quantum registers and QRACM) The algorithm specifies (not in superposition) a set $J=\{j_1,...,j_r\}$ of indices such that $j_k>t$ for all $k=1,...,r$. 
    The second register must indicate the group elements in $\mathbf T_Q$, possibly in a superposition. The third register holds indices $j_1,...,j_r$ or indicates the group elements in $\mathbf T_Q$. Then it applies $O_{\mathbf{Q,T}}$.
\end{itemize}

\fi

In summary, the memory-bound QGGM is similar to the generic hybrid quantum-classical algorithm in the QGGM with $q$-query quantum subroutines as described in~\Cref{sec: hybrid}, with the additional memory constraints described above. 
The group operations for classical registers {\ifnum\llncs=0(\Cref{item: cop}) \fi}are always considered as classical group operations. 
On the other hand, the quantum group operations always act on the quantum registers and QRACM.

\ifnum\llncs=0
\subsection{Lower Bounds for Memory-Bounded Algorithms}
\else
Due to the space constraints, we only present the simulation theorem for memory-bounded algorithms and the lower bounds in this model.
The formal model of quantum and classical memory and the proof for the simulation theorem 
\ifnum\noappendix=0
are given in~\Cref{sec: QRACM}.
\else
can be found in the full version.
\fi
\fi

We present the simulation theorem for memory-bounded algorithms first.

\begin{theorem}\label{thm: simulation_memory}
    Let $\cG$ be a group.
    Suppose that a generic hybrid algorithm $A^\cG$ makes $C$ classical group operations and invokes $q$-query quantum subroutines $T$ times. If the quantum memory of $A^\cG$ only can store $t$ group elements and $A^\cG$ can access QRACM of $r$ group elements, then there exists a (randomized) classical GGM algorithm $B^\cG$ that perfectly simulates $A^\cG$ with 
        \[C+2T\cdot(2t(t-1+r)+1)^{q}\]
    classical group operations.
\end{theorem}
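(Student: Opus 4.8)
The plan is to adapt the branch-tree simulation used for \Cref{lem: simulation_subroutine_query} to the memory-bounded model, and then to glue the per-subroutine bounds together across the $T$ invocations exactly as in the proof of \Cref{thm: simulation_hybrid}. The classical GGM algorithm $B$ maintains a labeling function $L\colon \mathbb Z_N[Y_1,\dots,Y_m]\to[N]\cup\{\bot\}$, the set $S$ of polynomials on which $L$ is defined, and a rooted tree $\mathcal T$ whose leaves record the classical tables of the branches carrying nonzero amplitude. Because each quantum subroutine ends with a forced measurement that collapses all registers to a single classical state, the branch tree is reset to one leaf at the start of every subroutine, so the $T$ subroutines can be analyzed independently and their costs simply summed. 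Every classical group operation of $A$ is simulated by one classical group operation of $B$: in the memory model such an operation acts only on the classical part $\mathbf T_C$ (indices $>t$), which is identical across all current branches and deterministic, so it neither creates a new branch nor costs more than a single operation. The total contribution of all classical group operations is therefore exactly $C$.

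The heart of the argument, and the only genuinely new ingredient, is bounding the growth of $\mathcal T$ under quantum group operations. Here I would use the memory restrictions directly: in any admissible quantum operation the written index $i$ must point into the quantum register $\mathbf T_Q$, giving at most $t$ choices; its second operand $j$ is either one of the other $t-1$ quantum slots or one of the $r$ classical elements loaded through the QRACM (the set $J=\{j_1,\dots,j_r\}$ is fixed and not in superposition), giving at most $t-1+r$ choices; and the sign bit $b$ contributes a factor $2$. Hence, applied to a \emph{single} fixed branch, one quantum group operation can produce at most $2t(t-1+r)$ distinct new tables, plus the trivial branch in which the operation does nothing (the $i=j$ case). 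Each leaf of $\mathcal T$ thus spawns at most $2t(t-1+r)+1$ children, so after $k$ quantum operations $\mathcal T$ has at most $\big(2t(t-1+r)+1\big)^{k}$ leaves. Since each newly created non-root node of $\mathcal T$ is realized in $B$'s own table by one classical group operation, simulating the $q$ quantum operations of one subroutine costs at most
\[
\sum_{k=1}^{q}\big(2t(t-1+r)+1\big)^{k}\le 2\big(2t(t-1+r)+1\big)^{q}
\]
classical group operations.

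Correctness — that the leaves of $\mathcal T$ faithfully track the nontrivial tables and that $L$ defines a consistent lazily sampled relabeling — carries over verbatim from \Cref{lem: simulation_subroutine_query}; in particular I would reuse the analogue of \Cref{claim: nontrivial_table} to certify that every nontrivial table corresponds to some leaf. Adding the per-subroutine quantum cost over the $T$ subroutines to the $C$ operations charged to the classical queries yields the claimed bound $C+2T\cdot\big(2t(t-1+r)+1\big)^{q}$. The main obstacle is purely the branching count: I must check that \emph{all three} admissible query shapes (quantum–quantum, quantum–classical, and the $r$-wide QRACM operation) write only into $\mathbf T_Q$ and draw their operand from at most $t-1+r$ locations, so that no single quantum operation introduces more than $2t(t-1+r)$ new group elements on any branch; everything else is a bookkeeping translation of the unrestricted $q$-query proof.
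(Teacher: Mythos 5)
Your proposal follows the paper's proof in all essentials: the paper likewise proves a per-subroutine lemma (\Cref{lem: simulation_subroutine_memory}) via the branch-tree simulation of \Cref{lem: simulation_subroutine_query}, with nodes now being multisets of the $t$ labels in $\mathbf{T}_Q$, a branching factor of $2t(t-1+r)+1$ per quantum query (the $2t(t-1)$ quantum--quantum branches, the $2tr$ QRACM branches, and $+1$ for unaffected branches), and then glues the $T$ subroutines together exactly as in \Cref{thm: simulation_hybrid}, using the forced measurements to reset the tree; the resulting count $C+2T\cdot(2t(t-1+r)+1)^q$ matches.

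There is, however, one concrete error in your reading of the model, and it matters for correctness rather than for the count. You claim that a classical group operation ``acts only on the classical part $\mathbf{T}_C$ (indices $>t$)'' and therefore never touches the branches. In the paper's memory-bounded model this is false: a classical (i.e.\ measured) group operation may have its \emph{target} index $i\le t$, writing a classically determined element into a quantum-memory slot --- indeed this is precisely how precomputed elements get loaded into $\mathbf{T}_Q$ in the first place, as in the hybrid Shor variants the model is designed to capture. Such an operation still costs one query of $B$ and creates no new branch, so your bound is unaffected; but the simulator must additionally update every leaf $v_l$ containing $L(f)$ by replacing $L(f)$ with $L(f+(-1)^b g)$ (this is an explicit step in the paper's proof of \Cref{lem: simulation_subroutine_memory}). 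Without that update, the invariant that leaves faithfully record the quantum-memory contents of each nontrivial branch breaks, and the labels used to simulate subsequent quantum queries become stale. Note also that your stated repair --- that correctness ``carries over verbatim'' from \Cref{lem: simulation_subroutine_query} --- is in tension with your claim that classical operations never modify the tree, since in that lemma classical operations do update the leaves; resolving this tension in favor of updating the leaves (with replacement rather than union, because memory slots are overwritten) is exactly what the paper does.
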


Similarly to the other cases, this theorem directly implies the following lower bounds.
\begin{theorem}[Formal version of~\Cref{thm:intro_memory}]\label{thm: memory DL and more}
    Let $\cG$ be a prime-order cyclic group. Any constant-advantage generic hybrid algorithm solving the DL/CDH/DDH problems with quantum memory holding $t=O(1)$ group elements and QRACM of $r=O(1)$ group elements must make either $C=\Omega(\sqrt{|\cG|})$ classical group operations or $\Omega(\log |\cG|)$ quantum group operations.

    More precisely, the following holds. If a generic hybrid algorithm $A^\cG$ in the QGGM invokes $q$-query quantum subroutines $T$ times, it holds that for any $*\in\{\sf{DL,CDH,DDH}\}$
    \[
        \adv_{*} (A^\cG) = O \left(\frac{\left(C+2T\cdot (2t(t-1+r)+1)^q\right)^2}{|\cG|}\right).
    \]
\end{theorem}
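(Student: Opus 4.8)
The plan is to obtain this theorem as an immediate corollary of the memory-bounded simulation theorem (\Cref{thm: simulation_memory}) together with the classical GGM lower bound (\Cref{thm: GGM_DL}), in exactly the same way that \Cref{thm: hybrid DL and more} is derived from \Cref{thm: simulation_hybrid}. All the genuine content has been pushed into the simulation theorem, which I am allowed to assume here, so the argument for the stated theorem is pure bookkeeping.

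First I would invoke \Cref{thm: simulation_memory} on the given memory-bounded hybrid algorithm $A^\cG$. This yields a (possibly unbounded, internally quantum, but \emph{classically querying}) GGM algorithm $B^\cG$ that perfectly simulates $A^\cG$ while making at most
\[
Q_B := C+2T\cdot\left(2t(t-1+r)+1\right)^{q}
\]
classical group operations. Since the simulation is perfect, $B^\cG$ and $A^\cG$ induce identical output distributions on every input, hence $\adv_{*}(A^\cG)=\adv_{*}(B^\cG)$ for each $*\in\{{\sf DL},{\sf CDH},{\sf DDH}\}$. Next I would bound $\adv_{*}(B^\cG)$ using \Cref{remark: quantum_alg_in_GGM}: the advantage estimate $O(Q^2/|\cG|)$ of \Cref{thm: GGM_DL} counts only classical group operations and is insensitive to the unbounded or quantum internal computation of the adversary, so it applies to $B^\cG$ with $Q=Q_B$. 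This gives
\[
\adv_{*}(A^\cG)=\adv_{*}(B^\cG)=O\!\left(\frac{\left(C+2T\cdot(2t(t-1+r)+1)^{q}\right)^2}{|\cG|}\right),
\]
which is exactly the claimed precise bound; the CDH/DDH cases use the same reduction against the corresponding horns of \Cref{thm: GGM_DL}.

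For the informal dichotomy I would set $c_0:=2t(t-1+r)+1$ and note that constant advantage forces $Q_B=\Omega(\sqrt{|\cG|})$, i.e.\ $C+2Tc_0^{q}=\Omega(\sqrt{|\cG|})$. Splitting this single inequality into two horns gives either $C=\Omega(\sqrt{|\cG|})$ or $Tc_0^{q}=\Omega(\sqrt{|\cG|})$. In the latter case, taking logarithms and using $\log c_0=O(\log(tr))$ (so $\log c_0=O(1)$ when $t,r=O(1)$), together with the fact that $T$ is at most the total number of group operations, yields $q=\Omega(\log|\cG|/\log(tr))$—equivalently $q=\Omega(\log|\cG|)$ for constant $t,r$—unless the total operation count has already reached $\Omega(\sqrt{|\cG|})$. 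This recovers the statement of \Cref{thm:intro_memory}.

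I do not expect any real obstacle in this corollary itself; the substantive difficulty lies entirely inside \Cref{thm: simulation_memory}. There one must verify that, under the $(t,r)$ memory/QRACM constraints, a single quantum group operation can introduce at most $2t(t-1+r)$ new table entries—the result register ranging over the $t$ quantum slots, the second operand ranging over the $t-1$ remaining quantum slots or the $r$ QRACM slots, with a choice of sign—and that the branch-counting label simulation of \Cref{lem: simulation_subroutine_query} then produces the $(2t(t-1+r)+1)^{q}$ growth per subroutine. The only point requiring mild care in the corollary is the loose asymptotic reading of the dichotomy, since the clean statement is really the single inequality $C+2Tc_0^{q}=\Omega(\sqrt{|\cG|})$.
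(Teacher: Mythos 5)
Your derivation of the precise advantage bound is correct and is exactly the paper's route: the paper presents \Cref{thm: memory DL and more} as an immediate corollary of the memory-bounded simulation theorem (\Cref{thm: simulation_memory}) combined with the classical GGM bound (\Cref{thm: GGM_DL}), using \Cref{remark: quantum_alg_in_GGM} to apply that bound to the unbounded-but-classically-querying simulator $B^\cG$; perfect simulation preserves the advantage, giving $\adv_{*}(A^\cG)=O\bigl(\bigl(C+2T(2t(t-1+r)+1)^{q}\bigr)^2/|\cG|\bigr)$. No issues there.

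However, your derivation of the informal dichotomy has a genuine flaw. From $Tc_0^{q}=\Omega(\sqrt{|\cG|})$ together with ``$T$ is at most the total number of group operations'' and the assumption that this total is \emph{not} $\Omega(\sqrt{|\cG|})$ (say it is at most $\varepsilon\sqrt{|\cG|}$ for a fixed constant $\varepsilon$), you can conclude only $c_0^{q}=\Omega(1)$, i.e.\ $q=\Omega(1)$ --- not $q=\Omega(\log|\cG|/\log(tr))$: the multiplicative factor $T$ may absorb essentially all of $\sqrt{|\cG|}$, leaving nothing for $c_0^{q}$. The statement that does follow cleanly, and the one the formal theorem asserts, is about the \emph{total} number of quantum group operations $Q_{\mathrm{qu}}$ rather than the per-subroutine count $q$. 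Assume WLOG that every subroutine makes at least one quantum query (otherwise merge it into the surrounding classical computation), so that $T\le Q_{\mathrm{qu}}$, and take $q$ to be the maximum actually attained, so that $q\le Q_{\mathrm{qu}}$. Then the second horn $2Tc_0^{q}=\Omega(\sqrt{|\cG|})$ gives
\[
\tfrac{1}{2}\log|\cG|-O(1)\;\le\;\log(2T)+q\log c_0\;\le\;O(1)+Q_{\mathrm{qu}}\left(1+\log c_0\right)=O\!\left(Q_{\mathrm{qu}}\right)
\]
for $t,r=O(1)$, hence $Q_{\mathrm{qu}}=\Omega(\log|\cG|)$. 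A per-subroutine bound of the form $q=\Omega(\log|\cG|/\log(tr))$, as phrased in \Cref{thm:intro_memory}, only follows if the first horn is weakened to ``total operations at least $|\cG|^{1/2-\delta}$,'' since one must separately rule out $T$ being polynomially large in $|\cG|$. Since the dichotomy is part of the theorem's statement, this step needs the bound via $Q_{\mathrm{qu}}$ rather than the inference you gave.
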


\ifnum\llncs=0
\subsection{Proof of the Memory-Bounded Simulation Theorem}
We prove the following memory-bounded simulation theorem for the quantum subroutines. The proof of~\Cref{thm: simulation_memory} is almost identical to the hybrid case, except that it uses~\Cref{lem: simulation_subroutine_memory} that has a nice property that the number of classical queries $C$ is not involved in the exponential term. 

In the proof of~\Cref{lem: simulation_subroutine_memory}, the main difference is the contents of $\mathcal T$ where each node only includes $t$ elements, and identifies a potential branch in $\mathbf T_Q.$

\begin{lemma}\label{lem: simulation_subroutine_memory}
    Let $\cG$ be a group.
    Suppose that $A^\cG$ is a $q$-query quantum subroutine in the QGGM, taking group elements and a classical string as inputs, makes at most $C$ classical group operation. If the quantum memory of $A^\cG$ only can store $t$ group elements and $A^\cG$ can access QRACM of $r$ group elements, then there exists a generic algorithm $B^\cG$ in the GGM with 
        \[
        C+2\cdot (2t(t-1+r)+1)^{q}
        \]
    classical queries such that the output distribution of $B^\cG(y)$ and $A^\cG(y)$ are identical for any input $y$.
\end{lemma}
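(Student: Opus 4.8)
The plan is to adapt the proof of \Cref{lem: simulation_subroutine_query} to the memory-bounded setting, exploiting that the quantum register $\mathbf{T}_Q$ holds only $t$ elements and that each quantum group operation can touch at most $r$ classical elements through the QRACM. As before, the simulator $B$ maintains a labeling function $L\colon \mathbb Z_N[Y_1,\dots,Y_m]\to [N]\cup\{\bot\}$, a set $S$ of polynomials on which $L$ is defined, and a rooted tree $\mathcal{T}$ whose leaf nodes track the possible contents of the quantum register in each nonzero-amplitude branch. The essential change is that each leaf node records only the at most $t$ labels currently residing in $\mathbf{T}_Q$, rather than the labels of the entire table; the classical register $\mathbf{T}_C$ is shared across all branches and remains in the computational basis, so it never contributes to branching.

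First I would treat the classical group operations exactly as in the $d$-depth case of \Cref{thm: simulation_basic}: whenever $A$ issues a classical operation (indices $i,j>t$), the relevant entries of $\mathbf{T}$ are measured, so $B$ updates $L$ and $S$ using a single group operation and creates no new leaf. Hence the $C$ classical operations cost exactly $C$ group operations in total and are kept out of the exponential term.

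The crux is the branching analysis for one quantum group operation. The write-target index $i$ must lie in $\mathbf{T}_Q$ (so $i\le t$), while the source index $j$ is either another index of $\mathbf{T}_Q$ ($j\le t$, $j\ne i$) or one of the $r$ indices in the classically-specified QRACM set $J$. Consequently, starting from a fixed branch, the only new elements writable into $\mathbf{T}_Q$ have the form $x_i+(-1)^b x_j$, and there are at most $2t(t-1+r)$ distinct such values: $t$ choices for $i$, $t-1+r$ choices for $j$, and two signs. Thus each leaf $v_l$ of $\mathcal{T}_{\mathsf{pre}}$ spawns at most $2t(t-1+r)$ children (plus one extra for the technical handling of the $0$ element, which may be evicted from memory), and every appended node is realized with a single group operation. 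I would then adapt \Cref{claim: nontrivial_table} so that nodes record only the $\mathbf{T}_Q$-labels, giving that every nonzero-amplitude table after the query corresponds to some leaf of the updated tree; this establishes perfect simulation.

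For the count, the tree begins with a single leaf and multiplies its leaf count by at most $2t(t-1+r)+1$ per quantum operation, so after $q$ quantum operations it has at most $(2t(t-1+r)+1)^q$ leaves. Summing the per-layer node-creation costs as a geometric series gives at most $2\cdot(2t(t-1+r)+1)^q$ group operations for all tree updates, and adding the $C$ classical operations yields the claimed bound $C+2\cdot(2t(t-1+r)+1)^q$. I expect the main obstacle to be the branching bound itself: one must argue rigorously that, because the QRACM index set $J$ is fixed \emph{classically} rather than in superposition, the admissible source elements drawn from $\mathbf{T}_C$ number at most $r$ regardless of how large $\mathbf{T}_C$ is. This is precisely what keeps the large classical memory out of the exponent and distinguishes the memory-bounded lemma from \Cref{lem: simulation_subroutine_query}.
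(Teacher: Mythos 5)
Your proposal matches the paper's proof in all essential respects: the same labeling function and polynomial set, the same tree $\mathcal{T}$ whose leaves record only the (multiset of) labels in $\mathbf{T}_Q$, the same branching bound of $2t(t-1+r)+1$ per quantum query (with the extra child accounting for branches whose memory contents would otherwise be lost to eviction), and the same geometric-series count giving $C+2\cdot(2t(t-1+r)+1)^q$. The only cosmetic difference is that the paper also lets a measured (classical) operation write into an index $i\le t$ of $\mathbf{T}_Q$, which it handles by updating leaves in place without branching—consistent with your observation that classical operations never increase the leaf count.
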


\begin{proof}
The proof is almost identical to that of~\Cref{lem: simulation_subroutine_query}, except that we change the contents of the rooted tree structure to only include group elements in quantum memory.
Let $A$ be a $q$-query quantum subroutine characterized by $(C_0,O_1,...,C_{q-1},O_q)$. Here, $C_j$ and $O_j$ are generic algorithms with $c_j$ classical group operations and a single quantum group operation, respectively. We assume that $A$ takes $m$ group elements $y_1,...,y_{m}$ and {classical string $z$} as input. Suppose that the input group elements are not $0$, and let $N=|\cG|.$
We will construct a generic algorithm $B$ in the GGM that simulates $A$.

\paragraph{Initialization.}
The algorithm $B$ prepares a label function $L:\mathbb Z_N[Y_1,...,Y_{m}]\rightarrow \mathbb [N] \cup \{\bot\}$ and $S\subseteq  \mathbb Z_N[Y_1,...,Y_{m}]$. Set $S:=\{0\}\cup \{Y_i\}_{i\in \{1,...,m\}}$ and initialize $L$ as in the previous proof. 
$B$ creates the following state as a simulation of the initial state for $A$:
\[
\ket{z}_{\mathbf{W},\mathbf{Q}}\otimes \ket{L(Y_1),...,L(Y_{m}),L(0),...,L(0)}_{\mathbf{T}}
\]

Additionally, $B$ initializes \red{a rooted tree $\mathcal T$ with a root $v_{0} =\{L(Y_i)\}_{i\in \{1,...,t\}}$ \emph{as a multiset}}, without any further vertex.\footnote{If $t>m$, we set $Y_i=0$ for $i>m$. If $t<m$, this implies that the initial quantum register does not include all input group elements.} \red{The nodes of $\mathcal T$ indicate potential branches of quantum memory $\mathbf T_Q$.} 
If a branch includes a group element multiple times, the corresponding multiset will include the same element with the same multiplicity.
During the simulation, the tree $\mathcal T$ will be updated along with $S$ and $L$.

\paragraph{Local operation.}
When $A$ applies some operation on its local registers $\mathbf{W},\mathbf{Q}$, $B$ also applies the same operation.

\paragraph{\red{Quantum group operation query.}}
Suppose $A$ makes a quantum group operation query. \red{Note that $A$ specifies the QRACM index set $J=\{j_1,j_2,...,j_r\}$, which can be obtained by $B$ as well.}
Let $S_{\mathsf{pre}}:=S$ and $\mathcal T_{\mathsf{pre}}:=\mathcal T$ for recording the set $S$ and $\mathcal T$ at the point.
Then $B$ does the following: 
\begin{enumerate}
    \item For each leaf node {$v_l$ of $\mathcal T_{\mathsf{pre}}$} (in arbitrary order), do the following:
    \begin{enumerate}
        \item For each $(\ell,\ell')\in v_l^2$, do the following:
        \begin{enumerate}
            \item Find a pair $(f,g) \in S_{\sf pre}^2$ such that $L(f) = \ell$ and $L(g)=\ell'$. Check if there is any $h\in S$ such that 
            \[h(y_1,...,y_{m})=f(y_1,...,y_{m})+ g(y_1,...,y_{m}).\] 
            \begin{itemize}
                \item If there exists such $h$, it sets
                \[L(f+ g):=L(h).\]
                \item Otherwise, it uniformly sets $L(f+g)\leftarrow \mathbb [N]$ under the constraint that
                $L(f+g) \neq L(h)$ for all $h\in S$. 
            \end{itemize} 
            \item Update $S\leftarrow S\cup \{f+g\}$. \red{Define $v_{l+1}: = v_l \setminus \{L(f)\}\cup \{L(f+g)\}$.\footnote{If there is multiple $L(f)$ in $v_l$, it reduces the multiplicity of $f$ by 1.} Add $v_{l+1}$ as a child node of $v_l$.}
            \item For all pairs $(f',g') \in S_{\sf pre}^2$ such that $L(f')=\ell$ and $L(g')=\ell'$, set $L(f'+g'):=L(f+g)$ and update $S\leftarrow S\cup \{f'+g'\}$.
            \item Similarly add $v_{l+1} := v_l \setminus \{L(f)\}\cup \{L(f-g)\}$ as a child node of $v_l$ and update $S$ and $L$ properly.
        \end{enumerate}
        \item \red{For each $\ell \in v_l$ and $k\in [r]$, do the following:}
        \begin{enumerate}
            \item {Obtain the label $\ell'$ from the $j_k$-th entry of $B$'s table register, and find a pair $(f,g)\in S_{\sf pre}^2$ such that $\ell=L(f) $ and $\ell'=L(g)$.} Note that the $j_k$-th entry of $B$ is classical and $B$ needs not to measure it.
            \item Check if there is any $h\in S$ such that 
            \[h(y_1,...,y_{m})=f(y_1,...,y_{m})+ g(y_1,...,y_{m}).\]
            \begin{itemize}
                \item If there exists such $h$, it sets
                \[L(f+ g):=L(h).\]
                \item Otherwise, it uniformly sets $L(f+g)\leftarrow \mathbb [N]$ under the constraint that
                $L(f+g) \neq L(h)$ for all $h\in S$. 
            \end{itemize} 
            \item Update $S\leftarrow S\cup \{f+g\}$. \red{Define $v_{l+1} := v_l \setminus \{L(f)\}\cup \{L(f+g)\}$. Add $v_{l+1}$ as a child node of $v_l$.}
            \item For all pairs $(f',g') \in S_{\sf pre}^2$ such that $L(f')=\ell$ and $L(g')=\ell'$, set $L(f'+g'):=L(f+g)$ and update $S\leftarrow S\cup \{f'+g'\}$.
            \item Similarly add $v_{l+1} := v_l \setminus \{L(f)\}\cup \{L(f-g)\}$ as a child node of $v_l$ and update $S$ and $L$ properly.
        \end{enumerate}        
    \end{enumerate}
    \item Define $\ell\pm \ell'$ for labels $(\ell,\ell') \in \mathbb [N]^2$ as follows:
    \begin{itemize}
    \item Check if there is  $(f,g)\in S_{\mathsf{pre}}^2$ such that $\ell=L(f)$ and $\ell'=L(g)$. 
        \begin{itemize}
            \item If they exist, define  
            \[\ell\pm \ell':=L(f\pm g).\]
            \item Otherwise, define $\ell \pm \ell':= \bot$. 
        \end{itemize}
    \end{itemize}
    \item Apply the unitary $\widetilde{O}_{\mathbf{Q},\mathbf{T}}$ that is defined by:
    \[
        \ket{b,i,j}_{\mathbf{Q}} \otimes \ket{...,\ell_{i},...,\ell_{j},...}_\mathbf{T} \mapsto
        \ket{b,i,j}_{\mathbf{Q}} \otimes \ket{...,\ell_{i} + (-1)^{b} \ell_{j},...,\ell_{j},...}_{\mathbf{T}}
    \]
    if $i\neq j$ and otherwise it does nothing.
    \item \red{Finally, add a copy of $v_l$ as a child node of $v_l$.}\footnote{This is because some information can be erased from the nodes because of the memory bound.}
\end{enumerate}

\paragraph{\red{Classical group operation query.}} 
When $A$ makes a classical group operation query, $B$ does the following. 
Let $S_{\sf pre}: = S$ and $\mathcal T_{\mathsf{pre}}:=\mathcal T$ for recording the set $S$ and $\mathcal T$ at the point.
\begin{enumerate}
    \item Measure $B$'s query register $\mathbf Q$ to obtain $(b,i,j)$, and do nothing if $i=j$. Otherwise, measure the $i$-th and $j$-th entries of $B$'s table register to obtain the labels $\ell_i,\ell_j$.
    \item Find a pair $(f,g) \in S_{\sf pre}^2$ such that $\ell_i = L(f)$ and $\ell_j = L(g)$. Check if there is any $h\in S$ such that 
    \[
        h(y_1,...,y_{m})=f(y_1,...,y_{m})+(-1)^bg(y_1,...,y_{m}).
    \]
    \begin{itemize}
        \item If there is such $h$, it sets $L(f+(-1)^bg):=L(h)$.
        \item Otherwise, it uniformly sets $L(f+(-1)^bg)\leftarrow [N]$ under the constraint that $L(f+(-1)^bg)\neq L(h) $ for all $h \in S$.
    \end{itemize}
    Then update $S\leftarrow S\cup\{f+(-1)^b g\}.$
    \red{If $i\le t$, for each leaf node $v_l$ of $\mathcal T_{\sf pre}$ (in arbitrary order) such that $L(f) \in v_l$, update $v_l \leftarrow v_l \setminus \{L(f)\}\cup \{L(f+(-1)^bg)\}$.}
    \item For all pairs $(f',g') \in S_{\sf pre}^2$ such that $L(f')=\ell_i$ and $L(g')=\ell_j$, set $L(f'+(-1)^b g') := L(f+(-1)^b g)$ and update $S \leftarrow S\cup \{f'+(-1)^b g'\}.$
    \item Define $\ell_i +(-1)^b \ell_j := L(f+(-1)^b g).$
    \item Then, $B$ simulates the classical group operation oracle as follows:
    \begin{equation*}
    \ket{b,i,j}_{\mathbf{Q}} \otimes \ket{...,\ell_{i},...,\ell_{j},...}_\mathbf{T} \mapsto
    \ket{b,i,j}_{\mathbf{Q}} \otimes \ket{...,\ell_{i} + (-1)^{b} \ell_{j},...,\ell_{j},...}_{\mathbf{T}}
    \end{equation*}
    for $i\neq j$, and otherwise it does nothing. 
\end{enumerate}


\paragraph{Finalization.}
Finalization is identical to the previous simulations.

\paragraph{Analysis.} 
The contents in each node is the main difference from the proof of~\Cref{lem: simulation_subroutine_query}.
Recall that for an intermediate quantum state
\[
    \sum_{w,b,i,j,X} \alpha_{w,b,i,j,X}\ket{w}_{\mathbf W}\ket{b,i,j}_{\mathbf Q} \ket{X}_{\mathbf T},
\]
we say that $X$ is a nontrivial table if $\alpha_{w,b,i,j,X}$ is nonzero. We prove the following variant of~\Cref{claim: nontrivial_table}, which ensures that the algorithm $B$ correctly simulates $A$.
\begin{clm}
    Right before the group operation query, for a nontrivial table $X$, there is a leaf node $v_l$ of $\mathcal T$ such that the following holds: 
    Let $x_1,...,x_t$ be the first $t$ elements of $X$. There is a leaf node $v_l = \{\ell_1,...,\ell_t\}$ in $\mathcal T$ such that there exist $(f_i)_{i \in [t]}$ where 
    $f_i(y_1,...,y_{m})=x_i$ and $L(f_i)= \ell_i$ for all $i\in [t]$. We say that $v_l$ corresponds to $X.$
\end{clm}
\begin{proof}
    We use induction. Initially, $X=(y_1,y_2,...,y_{m},...)$ is the unique nontrivial table corresponds to the initial leaf node $v_{0} =\{L(Y_1),...,L(Y_{t})\}$. Consider a group operation query, in which the statement holds right before the query. We prove that the statement still holds after the query.

    For the quantum group operation,
    let $X=(x_1,x_2,...)$ be a nontrivial table and $v_l=\{\ell_1,...,\ell_t\}$ be the corresponding leaf node. Let $(f_i)_{i \in [t]}$ be functions such that $f_i(y_1,...,y_{m})=x_i$ and $L(f_i)= \ell_i$ for all $i\in [t]$.
    Consider a fixed branch
    \[
        \ket{w}_{\mathbf W}\ket{b,i,j}_{\mathbf Q} \ket{X}_{\mathbf T}.
    \]
    It holds that $f_i(Y_1,...,Y_m)=x_i$ for all $i\in [t]$.
    If $j>t$, let $f_j \in S$ be such that $f_j(Y_1,...,Y_m)=x_j$.
    After the query, it is easy to see that $v_{l+1}=v_l \setminus \{L(f_i)\}\cup\{L(f_i+(-1)^bf_j)\}$ defined above corresponds to the new table register. This argument also holds for the classical group operation with $i\le t$. For the group operations over classical registers, $v_l$ still corresponds to $X$.
    For a nontrivial branch unaffected by any group operation corresponds to the copy of $v_l$.
    \ifnum\llncs=1 \qed \fi
\end{proof}

We count the number of group operations made by $B$. 
For each quantum group operation, $B$ appends one child node when it makes one group operation for simulating $A$.
On the other hand, each classical group operation of $A$ requires a single group operation of $B$, and it may alter some contents of nodes but do not create any new node.
Thus it suffices to count the number of nodes in $\mathcal T$ for computing the query complexity of $B$.

Recall that there are two options for quantum group operations. The following arguments show that each quantum query appends at most $2t(t-1+r)+1$ child nodes for each leaf node.
\begin{itemize}
    \item For the operations for quantum registers, at most $2t(t-1)$ different branches can appear, where factor 2 represents the choice of $\pm$ and $t(t-1)$ is for the choice of indices in the quantum register. 
    \item For the quantum-classical group operations with QRACM, the first index can be one of $t$ elements in $\mathbf T_Q$ and the second index is one among $|J|=r$ group elements. Therefore, each query introduces at most $2tr$ new branches.
\end{itemize}
The final $+1$ is for the branches unaffected by group operations.
This implies that after $q$ quantum queries, there are at most
\[
\sum_{i=0}^{q} (2t(t-1+r)+1)^i\le 2 \cdot\left(2t(t-1+r)+1\right)^q
\]
different branches in $\mathcal T$ at the end, where we used $t,r\ge 1$. As a single classical group operation can be simulated by a single group operation of $B$, the total number of group operations in the GGM made by $B$ is bounded by $C+2 \cdot\left(2t(t-1+r)+1\right)^q$, which concludes the proof.
\ifnum\llncs=1 \qed \fi
\end{proof}
\fi

\paragraph{Acknowledgment.} 
We thank Oded Regev and Martin Eker{\aa} for the insightful discussion.
We also would like to thank anonymous reviewers of CRYPTO 24.
Minki Hhan was partially supported by a KIAS Individual Grant QP08980, and the National Research Foundation of Korea (NRF) grant funded by the Korea government (MSIT) (No.~2021R1A2C1010690). Aaram Yun was supported by the National Research Foundation of Korea (NRF) grant funded by the Korea government (MSIT) (No.~2021R1A2C1010690).

\bibliographystyle{labelalpha}
\bibliography{ref}

\ifnum\noappendix=0
\newpage
\ifnum\llncs=1
\appendix

\fi
\fi

\end{document}